\theoremstyle{plain}
\newtheorem{theorem}{Theorem}[section]
\newtheorem{lemma}[theorem]{Lemma}
\newtheorem{proposition}[theorem]{Proposition}
\theoremstyle{definition}
\newtheorem{definition}[theorem]{Definition}
\newtheorem{remark}[theorem]{Remark}
\newcommand{\bd}{\begin{definition}}
\newcommand{\ed}{\end{definition}}
\newcommand{\cmark}{\ding{51}}
\newcommand{\xmark}{\ding{55}}
\newcommand\define[1]{\emph{\textbf{#1}}}
\newcommand{\bt}{\begin{theorem}}
\newcommand{\et}{\end{theorem}}
\newcommand{\be}{\begin{equation}}
\newcommand{\ee}{\end{equation}}
\let\C=\Chi
\newcommand{\matr}{\mathbb{M}}
\def\mA{{{\mathcal{A}}}}
\def\mB{{{\mathcal{B}}}}
\def\mC{{{\mathcal{C}}}}
\def\mE{{{\mathcal{E}}}}
\def\M{{{\mathbb{M}}}}
\def\R{{{\mathbb R}}}
\def\C{{{\mathbb C}}}
\newcommand{\id}{\mathrm{id}}
\newcommand{\tr}{{\rm tr} }
\newcommand{\St}{\mathcal{S}}
\newcommand{\CPTP}{\mathrm{CPTP}}
\newcommand{\sa}{\mathrm{sa}}
\newcommand{\Map}{\mathrm{Map}}
\newcommand{\TP}{\mathrm{TP}}
\newcommand{\HPTP}{\mathrm{HPTP}}
\newcommand{\SB}{\mathrm{J}}
\newcommand{\LS}{\mathrm{LS}}
\newcommand{\<}{\langle}
\renewcommand{\>}{\rangle}
\newcommand{\Ad}{\mathrm{Ad}}
\begin{document}
\title{From time-reversal symmetry to quantum Bayes' rules}
\author{Arthur J.~Parzygnat}
\email{parzygnat@nagoya-u.jp}
\affiliation{Graduate School of Informatics, Nagoya University, Chikusa-ku, 464-8601 Nagoya, Japan}
\author{James Fullwood}
\email{fullwood@sjtu.edu.cn}
\affiliation{School of Mathematical Sciences, Shanghai Jiao Tong University, 800 Dongchuan Road, Shanghai, China}

\begin{abstract}
Bayes' rule $\mathbb{P}(B|A)\mathbb{P}(A)=\mathbb{P}(A|B)\mathbb{P}(B)$ is one of the simplest yet most profound, ubiquitous, and far-reaching results of classical probability theory, with applications
in any field utilizing statistical inference.
Many attempts have been made to extend this rule to quantum systems, the significance of which we are only beginning to understand. 
In this work, we develop a systematic framework for defining Bayes' rule in the quantum setting, and we show that a vast majority of the proposed quantum Bayes' rules appearing in the literature are all instances of our definition. 
Moreover, our Bayes' rule is based upon a simple relationship between the notions of \emph{state over time} and a time-reversal symmetry map, both of which are introduced here.
\end{abstract}

\maketitle

\section{Introduction}

Bayes' rule is a cornerstone of inference, prediction, retrodiction, and decision-making which is used throughout the natural sciences
~\cite{Pearl88,Bala97,Be06,Fr10,Ba14,Gh15,Ja19,BuSc21,AwBuSc21,Sm21}. 
Due to its ubiquitous stature, many have proposed extensions beyond classical probability theory into the setting of quantum mechanics~\cite{Bu77,Lu06,Oz97,Ko99,Ko01,SBC01,Fu01,Wa05,WaKu10,Le06,Le07,CoSp12,Te12,FaKo12,LeSp13,CHPSSW19,PaRu19,PaBayes,PaRuBayes,GPRR21,ChMa20,Ts22,Liu22,MaCh22}, with applications to cosmology~\cite{Te12},  entanglement wedge reconstruction in AdS/CFT~\cite{CHPSSW19}, and quantum foundations~\cite{Fu01,LeSp13,BeBe22}. 

A common approach to formulating quantum generalizations of the classical Bayes' rule, namely,
\be \label{CXBR17}
\mathbb{P}(y|x)\mathbb{P}(x)=\mathbb{P}(x|y)\mathbb{P}(y),
\ee
is by ``quantizing'' Equation~\eqref{CXBR17}, i.e., by defining operator analogues of $\mathbb{P}(y|x),\ldots ,\mathbb{P}(y)$ in such a way that the substitution of such analogues into equation \eqref{CXBR17} also yields a valid equation. But since there are various approaches to formulating operator analogues of $\mathbb{P}(y|x),\ldots ,\mathbb{P}(y)$, and moreover,  
since multiplication of such operators is not necessarily commutative, different formulations of quantum Bayes' rules have appeared in the literature, each having its own advantages over the others~\cite{Bu77,Lu06,Oz97,Ko99,Ko01,SBC01,Fu01,Wa05,WaKu10,Le06,Le07,CoSp12,Te12,FaKo12,LeSp13,CHPSSW19,PaRu19,PaBayes,PaRuBayes,GPRR21,ChMa20,Ts22,Liu22,MaCh22}. 

The approach taken here however, is to view equation \eqref{CXBR17} as a reflection of a certain time-reversal symmetry for classical systems, and that the associated symmetry transformation relating both sides of equation \eqref{CXBR17} is more fundamental than the particular form of the equation. Moreover, our emphasis on transformations, as opposed to equations, reveals that various formulations of quantum Bayes' rules are all manifestations of what we refer to as a Bayes' rule \emph{with respect to a state over time}, which is a quantum analogue of a joint distribution $\mathbb{P}(x,y)$ associated with a physical system on two timelike separated regions.

To explain how such a perspective may be taken in the classical case, let $(\Omega,\mathbb{P})$ be a finite probability space, where $\Omega$ is a finite set corresponding to all possible outcomes of an experiment, or data-generating process. Given random variables
$\mathcal{X}:\Omega\to X$ and $\mathcal{Y}:\Omega\to Y$ on $\Omega$,  let $p=\mathcal{X}_*\mathbb{P}$ and $q=\mathcal{Y}_*\mathbb{P}$ be the associated probability mass functions, so that for all $x\in X$ and $y\in Y$,
\be
p(x)=\mathbb{P}\left(\mathcal{X}^{-1}(x)\right) \quad \text{and} \quad q(y)=\mathbb{P}\left(\mathcal{Y}^{-1}(y)\right).
\ee
Since we are considering not one, but two random variables $\mathcal{X}$ and $\mathcal{Y}$, there are two associated joint distributions, or rather, \emph{states over time} $\vartheta:X\times Y\to [0,1]$ and $\vartheta^*:Y\times X\to [0,1]$, depending on whether the obeservation of $\mathcal{X}$ precedes $\mathcal{Y}$ or vice-versa. In particular, $\vartheta$ is the state over time corresponding to first observing $\mathcal{X}$ and then observing $\mathcal{Y}$, which is given by
\be
\vartheta(x,y)=p(y|x)p(x),
\ee
where $p(y|x)$ represents the conditional probability of observing $y$ \emph{given} that $x$ was observed first. Similarly, $\vartheta^*$ corresponds to the time-reversal of this procedure, since it describes the probability of first observing $\mathcal{Y}$ and then observing $\mathcal{X}$, which is given by
\be
\vartheta^*(y,x)=q(x|y)q(y).
\ee

Bayes' rule~\eqref{CXBR17} in this context may then be reformulated as
\be \label{BRFX197}
\vartheta=\gamma(\vartheta^*),
\ee 
where $\gamma$ is the canonical map sending distributions on $Y\times X$ to distributions on $X\times Y$, which we view as a reversal of time for joint distributions representing timelike separated variables. Equation \eqref{BRFX197} then says the states over time $\vartheta$ and $\vartheta^*$ are related by the time-reversal transformation $\gamma$, and it is this formulation of Bayes' rule that we will take as a guide into the quantum realm.

But while states over time in the classical case are essentially unique, there are various approaches to defining states over time in the quantum setting. This stems from the fact that while the postulates of quantum mechanics make clear that a joint state supported on \emph{spacelike} separated regions of a system is represented by a density matrix on the tensor product of the Hilbert spaces for each region, the postulates are silent regarding what mathematical entity should faithfully represent \emph{timelike} separated states of a system. In particular, if two timelike separated states are causally related,  
then a suitable notion of state over time for the system should be an operator on the tensor product of the Hilbert spaces at the two different times, encoding not only the two timelike separated states, but also causal correlations between the states as well. 
Moreover, it has been emphasized in 
Refs.~\cite{FJV15,FuPa22} that such a state over time should admit negative eigenvalues if it is to encode temporal correlations, and as such, states over time should not be positive in general. 
It is because of these reasons that defining states over time is not so straightforward.

In Ref.~\cite{HHPBS17}, a minimal list of axioms was proposed that any general state over time construction should satisfy, and a no-go theorem was proved stating that there is no such construction satisfying their list of axioms. However, while the aforementioned no-go theorem is mathematically correct, in Ref.~\cite{FuPa22} the present authors found a loop-hole to this theorem by slightly weakening the hypotheses in a way which did not alter their physical significance and interpretation, thus resulting in an explicit state over time construction satisfying the axioms put forth in Ref.~\cite{HHPBS17}. While it is still not known whether or not our state over time construction is \emph{characterized} by such axioms, we are nevertheless starting to better understand states over time from both a mathematical and physical perspective, as will be further supported in this work.

In particular, 
we use Equation~\eqref{BRFX197} to formulate a general quantum Bayes' rule which takes into account the choice of a state over time construction. By doing so, we show that the various formulations of quantum Bayes' rules appearing in Refs.~\cite{Bu77,SBC01,Lu06,Oz97,Fu01,LeSp13,Te12,PaBayes,PaRuBayes,Ts22} may all be obtained from our Bayes' rule once an appropriate notion of state over time is specified in each case. We also establish a list of axioms for state over time constructions similar to those in Ref.~\cite{HHPBS17}, and we prove general results for arbitrary state over time constructions satisfying certain subsets of these axioms. For example, we show that our Bayes' rule with respect to any state over time construction satisfying what we refer to as the \emph{classical limit axiom} yields the state-update rule associated with quantum measurment~\cite{vN18,Lu06,Kr83}. Moreover, with such axioms we are able to identify the key differences between these various approaches towards quantum Bayes' rules, while at the same time incorporating them all into a single framework. 

Another application of our quantum Bayes' rule is in regard to the notion of time-reversal in quantum theory. In particular, our Bayes' rule yields a novel notion of a \emph{Bayesian inverse} of a quantum channel with respect to a prior state which is to dynamically evolve according to the channel. Moreover, we show that the Bayesian inverse of a CPTP map generally differs from that of its Hilbert--Schimdt adjoint. Thus, our Bayesian inverses provide a more robust notion of time-reversal in quantum theory in parallel with other results on retrodictability~\cite{Wat55,ReAh95,BuSc21,AwBuSc21,DDR21,PaBu22}. And while we find that the Hilbert--Schmidt adjoint does not in general provide an appropriate notion of time-reversal, in the case of bistochastic channels, we show that the Bayesian inverse with respect to the uniform prior is indeed the Hilbert--Schmidt adjoint of the channel. This result is in fact independent of a state over time construction provided it satisfies the classical limit axiom. This clarifies why bistochastic channels are often viewed as the only channels exhibiting a canonical time-reversal map~\cite{CGS17,CAZ21,SSGC22}.  

Throughout our work, we illustrate our definitions through a multitude of examples, including the two-state vector formalism~\cite{Wat55,ABL64,ReAh95}, 
the time-dependent two-point correlator~\cite{BDOV13,BHR17}, 
the symmetric bloom of the present authors~\cite{FuPa22}, 
the non-commutative Bayes' theorem of the first author~\cite{PaRuBayes}, 
the quantum Bayes' rule of C.\ Fuchs~\cite{Fu01}, 
the causal state formalism of M.\ Leifer and R.\ Spekkens with the Petz recovery map~\cite{LeSp13,Pe84}, 
the compound states of M.\ Ohya~\cite{Oh83a,Oh83b},
generalized conditional expectations~\cite{Ts22}, 
and the state-update rule associated with quantum measurements, the latter of which has often been called the quantum analogue of Bayes' rule by J.\ Bub, M.\ Ozawa, M.\ Tegmark, and others~\cite{Bu77,Oz97,Oz98,Te12}.

\section{States over time, Bayes' rules, and Bayesian inverses}

\subsection{Notation and conventions}

In this work, we represent finite-dimensional hybrid classical/quantum systems by \define{multi-matrix algebras} (as in Ref.~\cite{Fa01} and~\cite[Chapter 2]{GdHJ89}), which are direct sums of matrix algebras, and which will be denoted by $\mA,\mB,\mC,\dots$. Any such multi-matrix algebra $\mA$ is therefore of the form $\mA=\bigoplus_{x\in X}\matr_{m_{x}}$, with $X$ some finite index set, $\{m_{x}\}$ positive integers, and $\matr_{d}$ denoting $d\times d$ complex matrices. By using multi-matrix algebras instead of just operators on some Hilbert space, we will have a single framework where all concepts such as conditional probabilities, density matrices, positive operator-valued measures (POVMs), ensemble preparations, instruments, and quantum channels are all instances of completely positive trace-preserving (CPTP) maps between such multi-matrix algebras.
By working with CPTP maps, we will be working in the Schr\"odinger picture of quantum theory, in contrast to our previous work~\cite{FuPa22}, where our results were formulated in the Heisenberg picture. 
As such, some notation and terminology will differ.  
We now provide the basic definitions and notation which will be used throughout~\cite{Fa01,OhPe93}.

Every multi-matrix algebra $\mA=\bigoplus_{x\in X}\matr_{m_{x}}$ 
has a trace $\tr$ whose value on $A=\bigoplus_{x\in X}A_{x}$ is given by $\tr(A)=\sum_{x\in X}\tr(A_{x}),$ where the latter trace is the standard (un-normalized) trace on the matrix algebra $\matr_{m_{x}}$. A \define{density matrix} (or \define{state}) in $\mA$ is an element $\rho$ of $\mA$ such that $\tr(\rho)=1$ and $\rho$ is positive, which means there exists some element $A\in\mA$ such that $A^{\dag}A=\rho$. Here, the $\dag$ denotes the component-wise conjugate transpose, namely $A^{\dag}=\bigoplus_{x\in X}A_{x}^{\dag}$, and the multiplication is also component-wise $AA'=\bigoplus_{x\in X}A_{x}A'_{x}$. The set of all states in $\mA$ is denoted by $\St(\mA)$.

We immediately give some examples. First, if $m_{x}=1$ for all $x\in X$, then each matrix algebra is one-dimensional so that $\rho$ corresponds to a collection of non-negative numbers $\{\rho_{x}\}$ whose sum satisfies $\sum_{x\in X}\rho_{x}=1$. In other words, a density matrix can be viewed as a probability distribution on the index set $X$. On the other extreme, if $X$ is an index set with only a single element, then $\rho$ is an ordinary density matrix in a matrix algebra. Note that in the general case of a multi-matrix algebra, the $x$ component of $\rho$, namely $\rho_{x}$, need not be a density matrix since its trace can be less than $1$. However, if the trace is non-zero, then $\frac{\rho_{x}}{\tr(\rho_{x})}$ 
is a density matrix on $\matr_{m_{x}}$ in the usual sense. Hence, a density matrix $\rho$ on a multi-matrix algebra, also called a \emph{classical-quantum state}, can be viewed as a collection of density matrices on possibly different matrix algebras weighted by some probability distribution, namely $\rho=\bigoplus_{x\in X}\tr(\rho_{x})\frac{\rho_{x}}{\tr(\rho_{x})}$. 

For more examples, let $\mB=\bigoplus_{y\in Y}\matr_{n_{y}}$ be another multi-matrix algebra and let  $\mathcal{E}:\mA\to\mB$ be a CPTP map. 
This specializes to many cases of interest in quantum information theory (see Table~\ref{table:multimatrixex} for a brief summary and the introduction of Ref.~\cite{SSGC22} for more details). 
\begin{table}
\centering
\begin{tabular}{c|c}
$\bigoplus_{x\in X}\matr_{m_{x}}\xrightarrow{\mathcal{E}}\bigoplus_{y\in Y}\matr_{n_{y}}$&concept in quantum information\\
\hline
$|X|=1$, $m=1=n_{y}$ $\forall\; y$ & probability distribution on $Y$\\
$|X|=1=|Y|$, $m=1$ & density matrix on $\matr_{n}$\\
$m_{x}=1=n_{y}$ $\forall\;x,y$ & classical channel $X\to Y$\\
$|X|=1=|Y|$ & quantum channel $\matr_{m}\to\matr_{n}$\\
$|X|=1$, $n_{y}=1$ $\forall\;y$ & POVM on $\matr_{m}$ \\
$|Y|=1$, $m_{x}=1$ $\forall\;x$ & ensemble preparation on $\matr_{n}$ \\
$|X|=1$, $n_{y}=n$ $\forall\;y$ & quantum instrument
\end{tabular}
\caption{This table summarizes some of the classical and quantum information-theoretic realizations of a CPTP map between multi-matrix algebras. The notation $m$ ($n$), as opposed to $m_{x}$ ($n_{y}$), is used for the size of the matrices in the algebra when $|X|=1$ ($|Y|=1$) or when all values coincide.}
\label{table:multimatrixex}
\end{table}
\begin{enumerate}
\item
When $m_{x}=1$ and $n_{y}=1$ for all $x\in X, y\in Y$, the map $\mathcal{E}$ corresponds to a classical channel, i.e., a collection of conditional probabilities $\mathbb{P}(y|x)$. More explicitly, if $\delta_{x}$ denotes the unit vector with $1$ in the $x$ component and $0$ otherwise,  then $\mathbb{P}(y|x)$ is the $y$ component of $\mathcal{E}(\delta_{x})$. 
\item
When $X$ and $Y$ have only one element each, $\mathcal{E}$ corresponds to a quantum channel between matrix algebras. 
\item
When $n_{y}=1$ for all $y\in Y$ and $X$ has only a single element, $\mathcal{E}$ corresponds to a POVM. Indeed, each $y$ component of $\mathcal{E}$ defines a positive functional $\mathcal{E}_{y}:\matr_{m}\to\C$, which equals $\tr(M_{y}\;\cdot\;)$ for some unique positive matrix $M_{y}\in\matr_{m}$. The trace-preserving condition guarantees that $\sum_{y\in Y}M_{y}=\mathds{1}_{m}$. 
\item
Dually, when $m_{x}=1$ for all $x\in X$ and $Y$ has only a single element, $\mathcal{E}$ corresponds to an ensemble preparation. Indeed, $\mathcal{E}$ sends each unit vector $\delta_{x}$ to some positive matrix $\rho_{x}$ in $\matr_{n}$. The trace-preserving condition guarantees that $\tr(\rho_{x})=1$ so that $\rho_{x}$ is a density matrix for each $x\in X$. 
\item
When $n_{y}=n$ for all $y\in Y$ for some positive integer $n$ and $X$ has only a single element, $\mathcal{E}:\matr_{m}\to\bigoplus_{y\in Y}\matr_{n}$ corresponds to a quantum instrument. Indeed, the projection of $\mathcal{E}$ onto the $y$ component defines a CP map $\mathcal{E}_{y}:\matr_{m}\to\matr_{n}$. By the trace-preserving condition on $\mathcal{E}$, the sum $\sum_{y\in Y}\mathcal{E}_{y}:\matr_{m}\to\matr_{n}$ is CPTP. This is the usual definition for an instrument with a finite outcome space $Y$~\cite{DaLe70}.
\end{enumerate}

More generally, arbitrary direct sums of matrix algebras can be used to describe certain superselection sectors, as discussed in Ref.~\cite{WWW52} and~\cite[Section 2.3]{St75}.
Thus, we find that a number of fundamental concepts in quantum information theory may be formulated in terms of completely positive maps between multi-matrix algebras. 
Moreover, as every finite-dimensional $C^*$-algebra is isomorphic to a multi-matrix algebra~\cite{Fa01}, our formulation provides a stepping stone towards generalizations to infinite-dimensional quantum systems, further justifying our use of the multi-matrix algebra formalism.

Henceforth, if $\mA$ and $\mB$ are multi-matrix algebras, the collection of all CPTP maps from $\mA$ to $\mB$ will be denoted by $\CPTP(\mA,\mB)$. 
If $\mathcal{E}:\mA\to\mB$ is a linear map, its Hilbert--Schmidt adjoint $\mathcal{E}^*:\mB\to\mA$ is the unique linear map satisfying
\be
\tr\big(\mathcal{E}(A)^{\dag}B\big)=\tr\big(A^{\dag}\mathcal{E}^*(B)\big)
\ee
for all $A\in\mA$ and $B\in\mB$. 
Let $\mu_{\mA}:\mA\otimes\mA\to\mA$ denote the multiplication map, which is uniquely determined by its assignment on tensors via $A_{1}\otimes A_{2}\mapsto A_{1}A_{2}$. 
Given $\mathcal{E}$ as above, its
associated \emph{channel state} is the element of $\mA\otimes\mB$ given by 
\begin{equation}
\label{eq:Jcs}
\mathscr{D}_{\mA,\mB}[\mathcal{E}]:=(\id_{\mA}\otimes\mathcal{E})\big(\mu_{\mA}^{*}(1_{\mA})\big),
\end{equation}
where $1_{\mA}$ is the unit element in $\mA$ and $\mu_{\mA}^{*}$ is the Hilbert--Schmidt adjoint of the multiplication map. 
When the algebras are clear from context, the shorthand $\mathscr{D}[\mathcal{E}]$ will be used.
Although this definition may look unfamiliar at first, it reduces to two familiar cases for certain multi-matrix algebras. 
First, when $\mA=\matr_{m}$ and $\mB=\matr_{n}$, the channel state reduces to the associated \emph{Jamio{\l}kowski state}~\footnote{
The Jamio{\l}kowski state is commonly defined as the partial transpose of the \emph{Choi matrix}~\cite{Ch75}. Equation~\eqref{eq:Jcs} provides a manifestly basis-independent formula for the Jamio{\l}kowski state that is also applicable to arbitrary multi-matrix algebras.
}
\be
\mathscr{D}[\mathcal{E}]=\sum_{i,j}E_{ij}^{(m)}\otimes\mathcal{E}(E_{ji}^{(m)}), 
\ee
where $\{E_{ij}^{(m)}\}$ are the standard matrix units in $\matr_{m}$. In the special case $\mathcal{E}=\id_{\mA}$, the Jamio{\l}kowski state becomes the swap operator, which satisfies $\mathscr{D}[\id_{\mA}]\big(|i\>\otimes|j\>\big)=|j\>\otimes|i\>$ for all $i,j\in\{1,\dots,m\}$ (Dirac notation has been implemented).
Second, when $\mA=\bigoplus_{x\in X}\C\equiv\C^{X}$ and $\mB=\bigoplus_{y\in Y}\C\equiv\C^{Y}$, so that a positive trace-preserving map $\mathcal{E}$ corresponds to conditional probabilities $\mathbb{P}(y|x)$, the channel state is an element of $\bigoplus_{(x,y)\in X\times Y}\C\equiv\C^{X\times Y}$ 
whose $(x,y)$ component is precisely given by $\mathbb{P}(y|x)$. 

\subsection{Main definitions}

\begin{definition}
\label{DXS81}
A \define{state over time function} associates every pair $(\mA,\mB)$ of multi-matrix algebras with a map $\star_{\mA\mB}:\CPTP(\mA,\mB)\times\St(\mA)\to\mA\otimes\mB$, whose value on $(\mathcal{E},\rho)$ is denoted by $\mathcal{E}\star_{\mA\mB}\rho$ (or just $\mathcal{E}\star\rho$ when $\mA$ and $\mB$ are clear) 
\footnote{The notation $\mathcal{E}\star\rho$ is based on Refs.~\cite{LeSp13,HHPBS17}.},
such that 
$\star_{\mA\mB}$ preserves marginal states in the sense that 
\be
\tr_{\mB}\left(\mathcal{E}\star_{\mA\mB}\rho\right)=\rho
\quad\text{and}\quad
\tr_{\mA}\left(\mathcal{E}\star_{\mA\mB}\rho\right)=\mathcal{E}(\rho),
\ee
where $\tr_{\mB}:\mA\otimes\mB\to\mA$ and $\tr_{\mA}:\mA\otimes\mB\to\mB$ denote the partial traces. 
In such a case, the element $\mathcal{E}\star_{\mA\mB}\rho\in \mA\otimes\mB$ 
will be referred to as the \define{state over time} associated with $\star_{\mA\mB}$ and the input $(\mathcal{E},\rho)$. 
\end{definition}

To incorporate more examples and also various approaches to states over time, the definition given here is less restrictive than the definition of a state over time function given in Refs.~\cite{HHPBS17,FuPa22}. In particular, this definition includes only the bare minimum of what one would expect from a state over time function, namely, that the output element $\mathcal{E}\star\rho$ has the expected marginals. 
Note that unitality, as defined in Refs.~\cite{FuPa22,HHPBS17}, holds automatically because if $\rho$ is a density matrix and $\mathcal{E}$ is trace-preserving, then $\tr(\mathcal{E}\star\rho)=1$ because of the marginal preservation property.
Of course however, one would expect a physically meaningful state over time function to satisfy additional properties.

\begin{definition}
\label{defn:saxs}
A state over time function $\star$ 
\begin{enumerate}[(P1)]
\item
\label{item:hermitian}
is \define{hermitian} iff $\mathcal{E}\star\rho$ is self-adjoint for all $\rho\in\St(\mA)$ and $\mathcal{E}\in\CPTP(\mA,\mB)$.
\item
\label{item:locpos}
is \define{locally positive} (or \define{block positive}) iff 
\be
\tr\big((\mathcal{E}\star\rho)^{\dag}(A\otimes B)\big)\ge0
\ee
for all $\rho\in\St(\mA)$, $\mathcal{E}\in\CPTP(\mA,\mB)$, and for all positive $A\in\mA$ and $B\in\mB$.
\item
\label{item:pos}
is \define{positive} iff $\mathcal{E}\star\rho$ is positive for all $\rho\in\St(\mA)$ and $\mathcal{E}\in\CPTP(\mA,\mB)$.
\item
\label{item:slin}
is \define{state-linear} iff 
\be
\mathcal{E}\star\big(\lambda\rho+(1-\lambda)\sigma\big)
=\lambda\mathcal{E}\star\rho+(1-\lambda)\mathcal{E}\star\sigma
\ee
for all $\lambda\in[0,1]$, $\rho,\sigma\in\St(\mA)$, and $\mathcal{E}\in\CPTP(\mA,\mB)$.
\item
\label{item:plin}
is \define{process-linear} iff 
\be
\big(\lambda\mathcal{E}+(1-\lambda)\mathcal{F}\big)\star\rho
=\lambda\mathcal{E}\star\rho+(1-\lambda)\mathcal{F}\star\rho
\ee
for all $\lambda\in[0,1]$, $\rho\in\St(\mA)$, and $\mathcal{E},\mathcal{F}\in\CPTP(\mA,\mB)$.
\item
\label{item:bilin}
is \define{bilinear} iff $\star$ is state-linear and process-linear.
\item
\label{item:classicallimit}
satisfies the \define{classical limit axiom} iff given any $\rho\in\St(\mA)$ and $\mathcal{E}\in\CPTP(\mA,\mB)$ 
satisfying
$[\mathscr{D}[\mathcal{E}],\rho\otimes1_{\mB}]=0$
implies~%
\footnote{The condition $[\mathscr{D}[\mathcal{E}],\rho\otimes1_{\mB}]=0$ is equivalent to $[\rho,\mathcal{E}^*(B)]=0$ for all $B\in\mB$.}
\be
\mathcal{E}\star\rho
=\mathscr{D}[\mathcal{E}](\rho\otimes1_{\mB}).
\ee
\end{enumerate}
\end{definition}

There is also an associativity axiom guaranteeing that a state over time function yields a consistent notion of a \emph{tripartite} state over time on $\mA\otimes\mB\otimes\mC$ associated with every composable pair $\mA\xrightarrow{\mathcal{E}}\mB\xrightarrow{\mathcal{F}}\mC$ of CPTP maps.
We will mention in our examples when this associativity axiom holds, but we will otherwise not make use of it in this work~
\footnote{
For reference, and to provide a formula in the Schr{\"o}dinger picture, a state over time function $\star$ is \define{associative} iff 
\[
\begin{split}
&\mathscr{D}_{\mA,\mB\otimes\mC}^{-1}\left[\tr(1_{\mA})\left((\mathcal{F}\circ\tr_{\mA})\star\left(\frac{\mathscr{D}_{\mA,\mB}[\mathcal{E}]}{\tr(1_{\mA})}\right)\right)\right]\star\rho\\
&=(\mathcal{F}\circ\tr_{\mA})\star(\mathcal{E}\star\rho)
\end{split}
\]
for all $\rho\in\St(\mA)$ and composable pairs $\mA\xrightarrow{\mathcal{E}}\mB\xrightarrow{\mathcal{F}}\mC$ of CPTP maps. Note that one must be careful about domains to even make sense of this axiom. 
In particular, we point out that the formula presented here is a more appropriate axiom for associativity than the one considered in Ref.~\cite{FuPa22} due to the fact that $\star$ need not be state-linear (this is why we have included the factors of $\tr(1_{\mA})$ in the present formulation). Indeed, if $\star$ is state-linear, then these factors cancel, which need not happen if $\star$ is not state-linear. Note that this change does not alter the main theorem of Ref.~\cite{FuPa22}, since the state over time function constructed there is state-linear.}.

Note that if a state over time function $\star$ satisfies the classical limit axiom, then $\mathcal{E}\star\frac{1_{\mA}}{\tr(1_{\mA})}=\frac{1}{\tr(1_{\mA})}\mathscr{D}[\mathcal{E}]$. This shows that such a $\star$ can be viewed as an  extension of the Jamio{\l}kowski isomorphism to include states besides the maximally mixed state (though it need not be bijective). 
We also note that a state over time that is positive appears in the marginal state problem, and it is sometimes called a \emph{compound state}~\cite{AKMS06,MaCh22,Oh83a,Oh83b,GPS21}. Since we will not require positivity, our considerations will be more general.

We will provide many examples of state over time functions in the remaining sections. But first, we introduce the definition of a Bayesian inverse with respect to a state over time function in terms of a quantum Bayes' rule. 

\begin{definition}
\label{defn:Bayesianinverse}
Let $\star$ be a state over time function. Given a density matrix (a \emph{prior}) $\rho\in\St(\mA)$ and a CPTP map $\mathcal{E}:\mA\to\mB$ (a process), 
a \define{Bayesian inverse} associated with $(\mathcal{E},\rho)$ is a CPTP map 
$\mathcal{E}^{\star}_{\rho}:\mB\to\mA$
such that 
\be
\mathcal{E}\star\rho
=
\tau\left(\mathcal{E}^{\star}_{\rho}\star\mathcal{E}(\rho)\right),
\ee
where $\tau:\mB\otimes\mA\to\mA\otimes\mB$ is the \define{quantum time-reversal map} for states over time, defined as the unique conjugate-linear extension of the assignment
\be
\tau(B\otimes A)=A^{\dag}\otimes B^{\dag}.
\ee
The equation $\mathcal{E}\star\rho=\tau\left(\mathcal{E}^{\star}_{\rho}\star\mathcal{E}(\rho)\right)$  is then referred to as \define{Bayes' rule} associated with $\star$ and the input $(\mathcal{E},\rho)$. 
\end{definition}

By applying the partial trace $\tr_{\mB}$ to both sides of Bayes' rule, it follows that 
\be
\rho=\mathcal{E}^{\star}_{\rho}\big(\mathcal{E}(\rho)\big).
\ee
In other words, if $\rho$ is thought of as a prior, $\mathcal{E}(\rho)$ the associated prediction via $\mathcal{E}$, and $\mathcal{E}^{\star}_{\rho}$ the associated retrodiction map, then the retrodiction applied to the prediction gives back the prior. If $\mathcal{E}^{\star}_{\rho}$ exists and is unique for all $\mathcal{E}$ and $\rho$ (such that $\rho$ and $\mathcal{E}(\rho)$ are faithful), this assignment defines a (universal) \emph{recovery map} in the language of~\cite{FaRe15,Wilde15,JRSWW16,CaVe20}, a \emph{state-retrieval map} in the language of~\cite{SuSc22}, and a \emph{retrodiction family} in the language of~\cite{PaBu22}. 

While the linear swap map
$\gamma:\mB\otimes\mA\to\mA\otimes\mB$ provides a suitable notion of time-reversal for states over time in the classical setting (as described in the introduction)~%
\footnote{For general algebras, the swap map $\gamma:\mA\otimes\mB\to\mB\otimes\mA$ is not to be confused with the swap operator $\mathscr{D}[\id_{\mA}]$, the latter of which is an \emph{element} of $\mA\otimes\mA$.},
we find that composing the swap map with the dagger provides a more robust notion of time-reversal for states over time in the quantum setting (see Figure~\ref{fig:Bayesianinverse}). 
Indeed, the swap map $\gamma$ on its own only guarantees that the right-hand-side of Bayes' rule is an element of $\mA\otimes\mB$.
However, if a quantum channel $\mathcal{E}$ is invertible with its inverse also a channel, then the usage of $\tau$ ensures that $\mathcal{E}^{-1}$ is a Bayesian inverse for $\mathcal{E}$, which would not necessarily be the case if we had simply used $\gamma$ (see Remark~\ref{rmk:LBInv} for more details). Remarks~\ref{rmk:marginB}, \ref{rmk:BaTh}, and~\ref{rmk:TsInv} provide further justifications for supplementing the swap map with the dagger in the quantum setting. 
Moreover, $\tau$ reduces to $\gamma$ for classical systems, and the Bayes' rule from Definition~\ref{defn:Bayesianinverse} coincides with the standard Bayes' rule (Equation~\ref{CXBR17})
on commutative algebras when we take the state over time function to be the standard one~\cite[Section~1]{FuPa22}. We will see this in the next section explicitly, along with many other examples from the literature, some of which have also been called quantum Bayes' rules.

\begin{figure}
    \centering    
    \begin{tikzpicture}
    \node (t0) at (-1.5,0) {$t_0$};
    \node (t1) at (-1.5,2) {$t_1$};
    \node (t02) at (6.25,0) {$t_0$};
    \node (t12) at (6.25,2) {$t_1$};
    \draw[->] (t0) --node[left]{\rotatebox{90}{time}} (t1);
    \draw[->] (t12) --node[right]{\rotatebox{270}{time}} (t02);
    \node (t1L) at (-0.75,0) {$\mA$};
    \node (tL) at (-0.75,2) {$\mB$};
    \node (t1L2) at (5.5,0) {$\mA$};
    \node (tL2) at (5.5,2) {$\mB$};
    \node (t1Lr) at (0,0) {$\rho$};
    \node (tLr) at (0,2) {$\mathcal{E}(\rho)$};
    \node (t1Lr2) at (4.75,0) {$\rho$};
    \node (tLr2) at (4.75,2) {$\mathcal{E}(\rho)$};
    \draw[->] (t1L) to node[left]{$\mathcal{E}$} (tL);
    \draw[->] (tL2) to node[right]{$\mathcal{E}^{\star}_{\rho}$} (t1L2);
    \draw[|->] (t1Lr) to (tLr);
    \draw[|->] (tLr2) to (t1Lr2);
    \node (AB) at (1.25,0.5) {$\mA\otimes\mB$};
    \node (Esr) at (3.25,1.25) {$\mathcal{E}^{\star}_{\rho}\star\mathcal{E}(\rho)$};
    \node (BA) at (3.25,0.5) {$\mB\otimes\mA$};
    \node (Esr2) at (1.25,1.25) {$\mathcal{E}\star\rho$};
    \draw[|->] (Esr) -- node[above]{$\tau$} (Esr2);
    \end{tikzpicture}
    \caption{If $\mathcal{E}:\mA\to\mB$ is viewed as a CPTP map describing some dynamics from initial time $t_0$ to final time $t_1$, then this figure  depicts the two states over time associated with $(\mathcal{E},\rho)$ and a Bayesian inverse. If $\mathcal{E}\star\rho$ is interpreted as having a time orientation $t_{0}\rightarrow t_{1}$, then $\mathcal{E}^{\star}_{\rho}\star\mathcal{E}(\rho)$ has time orientation $t_{1}\rightarrow t_{0}$. The quantum time-reversal map $\tau$ simultaneously reverses the orientation of time and switches the two factors so that the resulting elements can be compared on an equal footing. Bayes' rule says that these two elements are the same. In this way, $\tau$ embodies a fundamental time-reversal symmetry for states over time associated with any input process and state $(\mathcal{E},\rho)$ that admits a Bayesian inverse (see Ref.~\cite{PaBu22} for a closely related inferential form of time-reversal symmetry).}
    \label{fig:Bayesianinverse}
\end{figure}

Furthermore, our formulation of Bayes' rule using our quantum time-reversal map $\tau$ solves several open questions in the literature. First, it resolves a puzzle posed by M.\ Leifer and R.\ Spekkens at the end of~\cite[Section~VII.B.1]{LeSp13}, where they observe that 
using $\gamma$ alone is not sufficient to provide enough symmetry to relate states over time in the forward and backwards time directions. By adding a dagger, we have solved this problem and restored the symmetry.
Second, we show how M.\ Tsang's Bayes' rules obtained from certain inner products~\cite{Ts22}, are derived from a certain class of state over time functions and our Bayes' rule. This answers Tsang's open remark/question (posed at the end of \cite[Section~III.B]{Ts22}) in regard to the relationship between generalized conditional expectations and states over time. Moreover, combining these two previous points yields that our Bayes' rule provides a possible answer to the question of J.\ Baez on the relationship between time-reversal and the inner product in quantum theory~\cite{Ba06}. In particular, our present work combined with Ref.~\cite{PaBu22} suggests that retrodiction based upon our quantum Bayes' rule may provide a mathematically precise relationship between time-reversal and the inner product that is more robust than the ordinary adjoint operation on quantum channels~\cite{Ba06,CoKi17,CGS17,SSGC22,CAZ21,ChLi22}.

\section{First examples}

The usefulness of a definition is illustrated through its examples. In what follows, we first justify the significance of the classical limit axiom by considering not only the case of commutative algebras, but also bistochastic channels (between possibly noncommutative algebras). We then provide an example of a state over time function where the classical limit axiom fails by removing all correlations. Afterwards, we consider 
the Leifer--Spekkens state over time~\cite{LeSp13}. From there, we describe two measurement scenarios. First, we show how prepare-evolve-measure scenarios have an inferential time-reversal symmetry involving our Bayes' rule. Second, we show how the state-update rule due to measurement is an instantiation of our Bayes' rule. 

\subsection{The classical case}

The following example provides some motivation and justification for the importance of the classical limit axiom. Consider the special case where $\mA=\C^{X}\equiv\bigoplus_{x\in X}\C$ and $\mB=\C^{Y}\equiv\bigoplus_{y\in Y}\C$ with $X$ and $Y$ finite sets. Then a state $\rho$ on $\mA$ corresponds to a probability distribution $\{p_{x}\}$ on $X$, and a CPTP map $\mathcal{E}:\mA\to\mB$ corresponds to a stochastic map from $X$ to $Y$ with conditional probabilities denoted by $\mathcal{E}_{yx}$. 

In such a case, we have $[\mathscr{D}[\mathcal{E}],\rho\otimes 1_{\mB}]=0$ 
since the algebras $\mathcal{A}$ and $\mathcal{B}$ are both commutative. Because the associated channel state $\mathscr{D}[\mathcal{E}]$ is the element of $\mA\otimes\mB\cong\bigoplus_{x,y}\C$ whose $(x,y)$ component is $\mathcal{E}_{yx}$, it follows that $(\rho\otimes 1_{\mB})\mathscr{D}[\mathcal{E}]$ has $(x,y)$ component given by $p_{x}\mathcal{E}_{yx}$, which we refer to as the \define{classical state over time}. Write $q_{y}:=\sum_{x\in X}\mathcal{E}_{yx}p_{x}$ as the probability distribution on $Y$ corresponding to $\mathcal{E}(\rho)$. A Bayesian inverse of $(\mathcal{E},\rho)$ is therefore a CPTP map $\mathcal{E}^{\star}_{\rho}:\mB\to\mA$, with corresponding conditional probabilities written as $(\mathcal{E}^{\star}_{\rho})_{xy}$, such that 
\be
\mathcal{E}_{yx}p_{x}=(\mathcal{E}^{\star}_{\rho})_{xy}q_{y},
\ee
which is the classical Bayes' rule (cf.\ Equation~\eqref{CXBR17}). 

If $\mathcal{E}$ is viewed as a genuine stochastic process, the physical intuition behind the term $\mathcal{E}_{yx}p_{x}$ is that it describes the predictive probability of first measuring $x$ and then measuring $y$ after the system has undergone the evolution described by $\mathcal{E}$ (note that this probability does \emph{not} in general equal $p_{x}q_{y}$, which would require the random variables associated with $X$ and $Y$ to be independent/uncorrelated). Conversely, $(\mathcal{E}^{\star}_{\rho})_{xy}q_{y}$ describes the retrodictive probability of measuring $y$ and deducing that $x$ preceded it in the course of evolution through the inference map $\mathcal{E}^{\star}_{\rho}$~\cite{Ba14}. 

Of course, the classical limit axiom covers far more cases than this, such as when the algebras $\mA$ and $\mB$ are not necessarily commutative and yet the commutativity condition still holds. 
One example is the case of \emph{unital} CPTP maps, which includes bistochastic matrices, and which is described in the next section.

\subsection{Bistochastic channels and time-reversal symmetry}

Let $\star$ be any state over time function that satisfies the classical limit axiom. Take $\mA=\matr_{m}$, $\mB=\matr_{n}$ and let $\mathcal{E}:\mA\to\mB$ be a \emph{unital} quantum channel (sometimes called a \emph{bistochastic channel}), i.e., $\mathcal{E}(\mathds{1}_{m})=\mathds{1}_{n}$.
If $\rho=\frac{\mathds{1}_{m}}{m}$, then the associated state over time is always given by $\rho\star\mathcal{E}=\frac{1}{m}\mathscr{D}[\mathcal{E}]$. Physically, $\mathcal{E}$ describes a stochastic evolution that leaves the infinite temperature limit Gibbs state invariant (it is a consequence of unitality and the trace-preserving condition of $\mathcal{E}$ that $m=n$). 
If $\mA$ and $\mB$ were commutative algebras instead of matrix algebras, then $\mathcal{E}$ would be a doubly-stochastic matrix (as the matrix is necessarily square since $\mA\cong\mB$).

Hence, when $\mathcal{E}$ is a unital quantum channel and the prior is the uniform density matrix $\rho=\frac{\mathds{1}_{m}}{m}$, then a Bayesian inverse $\mathcal{E}^{\star}_{\rho}$ must satisfy the equation 
\be
\mathscr{D}[\mathcal{E}^*]
=\gamma\big(\mathscr{D}[\mathcal{E}]\big)
=\gamma\big(\mathscr{D}[\mathcal{E}]^{\dag}\big)
=\mathscr{D}[\mathcal{E}^{\star}_{\rho}],
\ee
where the first equality holds by Lemma~\ref{lem:gamDE}, the second equality holds because $\mathscr{D}[\mathcal{E}]$ is self-adjoint, and the third equality holds by our definition of Bayes' rule.
In other words, $\mathcal{E}^{\star}_{\rho}=\mathcal{E}^*$, which provides some justification for the notation we have used for a Bayesian inverse. 

More importantly, this result has significant implications towards our understanding of time-reversal symmetry in quantum theory. Indeed, it has often been argued that unital quantum channels are the only channels for which a canonical notion of time-reversal is possible ~\cite{CGS17,CAZ21}, and in such a case, the canonical time-reversal is provided by the Hilbert--Schmidt adjoint. Such a claim however is at odds with classical probability theory, as not all classical channels are bistochastic, and yet they have a well-defined notion of time-reversal symmetry furnished by the classical Bayes' rule. It then seems plausible that there may exist a more general notion of time-reversal for quantum channels, which not only agrees with the Bayesian inverse for classical channels, but also reduces to the Hilbert--Schmidt adjoint for bistochastic channels as well. And while no-go theorems were recently proved in Refs.~\cite{CGS17,CAZ21} stating that there is no such notion of time-reversal for arbitrary quantum channels, it was recently shown in Ref.~\cite{PaBu22} via an explicit construction that time-reversal \emph{is} possible for \emph{all} quantum channels once a prior state is incorporated into the data of the theory (the explicit construction will be discussed in more detail later in this work). 
The perspective gained from Ref.~\cite{PaBu22} is that when considering time-reversal in quantum theory, one should be working in the category of \emph{states} on multi-matrix algebras and state-preserving channels, as opposed to the category of channels on their own.

We can therefore provide a possible explanation as to why bistochastic channels are often viewed as the only reversible operations in quantum theory. Indeed, our results show that there is a \emph{unique} Bayesian inverse for any bistochastic channel with the uniform prior, and this result is \emph{independent} of the channel and a choice of a state over time function (as long as the state over time function satisfies the classical limit axiom). Moreover, this \emph{canonical} Bayesian inverse is in fact the Hilbert--Schmidt adjoint of the original channel, which agrees with the standard time-reversal map when the channel is unitary. However, the Hilbert--Schmidt adjoint is \emph{rarely} a Bayesian inverse with respect to a state over time function when one has a \emph{non-uniform} prior and an \emph{arbitrary} quantum channel~\cite{BuSc21}, and it is precisely this more general situation where one needs additional input to specify time-reversal symmetry and Bayesian inverses, and that input is the (non-canonical) choice of a state over time function.

As such, one might hope to find a state over time function whose associated Bayesian inverses simultaneously extend both classical Bayesian inversion for arbitrary stochastic channels and the Hilbert--Schmidt adjoint for all unital quantum channels. 
We will consider such examples soon. But first, we will give some examples of states over time that forget correlations and entanglement. 

\subsection{The uncorrelated state over time}
\label{sec:uncsot}

The assignment 
\be
(\mathcal{E},\rho)\xmapsto{\star}\rho\otimes\mathcal{E}(\rho)
\ee
is a state over time function, called the \define{uncorrelated state over time}, that is positive (and hence also hermitian) and process-linear. However, it is not state-linear, it is not associative~%
\footnote{
This is because 
\[
\begin{split}
&\mathscr{D}_{\mA,\mB\otimes\mC}^{-1}\left[\tr(1_{\mA})\left((\mathcal{F}\circ\tr_{\mA})\star\left(\frac{\mathscr{D}_{\mA,\mB}[\mathcal{E}]}{\tr(1_{\mA})}\right)\right)\right]\star\rho\\
&=\rho\otimes\mathcal{E}(\rho)\otimes\mathcal{F}\left(\mathcal{E}\left(\frac{1_{\mA}}{\tr(1_{\mA})}\right)\right),
\end{split}
\]
while 
\[
(\mathcal{F}\circ\tr_{\mA})\star(\mathcal{E}\star\rho)=\rho\otimes\mathcal{E}(\rho)\otimes\mathcal{F}\big(\mathcal{E}(\rho)\big).
\]
Since $\mathcal{F}\left(\mathcal{E}\left(\frac{1_{\mA}}{\tr(1_{\mA})}\right)\right)$ need not equal $\mathcal{F}\big(\mathcal{E}(\rho)\big)$ for arbitrary $\mathcal{F},\mathcal{E},$ and $\rho$, 
this shows that associativity fails in general.}, 
and, most importantly, it does not satisfy the classical limit axiom.

A Bayesian inverse $\mathcal{E}^{\star}_{\rho}$ must satisfy the equation
\be
\rho\otimes\mathcal{E}(\rho)=\mathcal{E}^{\star}_{\rho}\big(\mathcal{E}(\rho)\big)\otimes\mathcal{E}(\rho). 
\ee
Thus, \emph{any} CPTP map $\mathcal{E}^{\star}_{\rho}:\mB\to\mA$ such that $\mathcal{E}^{\star}_{\rho}\big(\mathcal{E}(\rho)\big)=\rho$ is a Bayesian inverse of $(\mathcal{E},\rho$). There are many CPTP maps that satisfy this condition, one of which is simply the map that sends $B$ to $\tr(B)\rho$. This, however, is not a very good candidate for Bayesian inversion since it loses the information of most other states and essentially ignores the evolution $\mathcal{E}$ in its description. 
Such a state over time therefore treats $\rho$ and $\mathcal{E}(\rho)$ as independent, and as such, it does not encode any correlations between $\rho$ and $\mathcal{E}(\rho)$.

Furthermore, although there might be a better choice among the many state-preserving CPTP maps that satisfy Bayes' rule in this example, this state over time does not offer us any guide as to \emph{which} of those CPTP maps to choose unless we impose further constraints. As such, the uncorrelated state over time function does not provide a robust formulation of Bayesian inversion.

\subsection{The separable compound state and the quantum state marginal problem}
\label{sec:Ohyasot}

There is an improvement on the uncorrelated state over time due to M.\ Ohya~\cite{Oh83a}, which will henceforth be called \define{Ohya's compound state over time}, following similar terminology in Refs.~\cite{Oh83a,Oh83b}. For matrix algebras $\mA=\matr_{m}$ and $\mB=\matr_{n}$, it is defined as follows~%
\footnote{Technically, Ohya's original definition of a compound state does not give a well-defined state over time function because it assumes \emph{additional data} when the input state has repeating eigenvalues. Namely, it requires a \emph{choice} of an eigenvector decomposition. Therefore, we have slightly modified Ohya's construction so that no such additional data are needed, and so that we obtain a well-defined state over time function.}.

Given any state $\rho\in\mA$, let $\rho=\sum_{\alpha}\lambda_{\alpha}P_{\alpha}$ be the spectral decomposition, where each $P_{\alpha}$ is the projection onto the $\lambda_{\alpha}$ eigenspace. Then for any $\mathcal{E}\in\CPTP(\mA,\mB)$, set 
\be
\mathcal{E}\star\rho=\sum_{\alpha}\lambda_{\alpha}P_{\alpha}\otimes\mathcal{E}\left(\frac{P_{\alpha}}{\tr(P_{\alpha})}\right).
\ee
Then this defines a state over time that is positive and process-linear, and therefore gives a solution to the marginal state problem. It is neither state-linear nor does it satisfy our classical limit axiom. However, we note that if the non-zero eigenvalues of $\rho$ have multiplicity $1$, then the classical limit axiom \emph{does} hold for $(\mathcal{E},\rho)$ for arbitrary $\mathcal{E}\in\CPTP(\mA,\mB)$ such that $\big[\rho\otimes1_{\mB},\mathscr{D}[\mathcal{E}]\big]=0$.
Although the compound state over time is not completely uncorrelated, it is still \emph{separable}~\cite{We89}, and hence seems unlikely to keep track of quantum \emph{entanglement} over time. 

\subsection{The causal states of Leifer and Spekkens and the Petz recovery map}
\label{sec:LSsot}

The assignment sending $(\mathcal{E},\rho)\in\CPTP(\mA,\mB)\times\mathcal{S}(\mA)$ to 
\be
\big(\sqrt{\rho}\otimes 1_{\mB}\big)\mathscr{D}[\mathcal{E}]\big(\sqrt{\rho}\otimes 1_{\mB}\big)
\ee
is called the \define{Leifer--Spekkens state over time}~\cite{FuPa22,HHPBS17,Le06,Le07,LeSp13}. The Leifer--Spekkens state over time function is process linear, hermitian, locally positive, and satisfies the classical limit axiom, 
but it is not in general positive nor associative~\cite{LeSp13,HHPBS17}. One might argue that the reason for the lack of positivity is because we chose to use the Jamio{\l}kowski channel state $\mathscr{D}[\mathcal{E}]$ as opposed to the \emph{Choi matrix} for $\mathcal{E}$~\cite{Ch75}. However, the Choi matrix suffers from a few problems in our context. One is that it depends on a choice of basis, and hence, it would require additional data to specify a general state over time. In addition, if we had used the Choi matrix the marginal density matrix $\tr_{\mA}(\mathcal{E}\star\rho)$ would not be $\mathcal{E}(\rho)$, as it would be $\mathcal{E}(\rho^{\mathrm{T}})$, where ${}^{\mathrm{T}}$ denotes the transpose with respect to that chosen basis~\cite{ChMa20} (repeated eigenvalues prevent one from choosing a canonical basis of eigenvectors for $\rho$ as is done in Ref.~\cite{ChMa20} for their construction of a positive compound state). 

A Bayesian inverse $\mathcal{E}^{\star}_{\rho}$ to the Leifer--Spekkens state over time function must satisfy the equation 
\begin{align}
\gamma\Big(\big(\sqrt{\rho}&\otimes 1_{\mB}\big)\mathscr{D}[\mathcal{E}]\big(\sqrt{\rho}\otimes 1_{\mB}\big)\Big)\nonumber\\
&=\left(1_{\mA}\otimes\sqrt{\mathcal{E}(\rho)}\right)\mathscr{D}\big[\mathcal{E}^{\star}_{\rho}\big]\left(1_{\mA}\otimes\sqrt{\mathcal{E}(\rho)}\right),
\end{align}
which has the unique solution
\be
\mathscr{D}\big[\mathcal{E}^{\star}_{\rho}\big]=\left(\sqrt{\mathcal{E}(\rho)^{-1}}\otimes\sqrt{\rho}\right)\mathscr{D}[\mathcal{E}^*]\left(\sqrt{\mathcal{E}(\rho)^{-1}}\otimes\sqrt{\rho}\right)
\ee
whenever $\mathcal{E}(\rho)$ is non-singular. Solving this using Lemma~\ref{lem:gamDE} yields
\be
\mathcal{E}^{\star}_{\rho}=\Ad_{\rho^{1/2}}\circ\mathcal{E}^{*}\circ\Ad_{\mathcal{E}(\rho)^{-1/2}},
\ee
which is the \define{Petz recovery map}
(sometimes called \emph{transpose channel})~\cite{Pe84,Pe88,OhPe93}. 

This map has appeared in the context of 
sufficient quantum statistics and equality conditions for relative entropy~\cite{Pe86,Pe88,Pe03,Je17}, 
approximate quantum error correction~\cite{BaKn02,NgMa10},
operational time-reversal in quantum theory~\cite{Cr08,LePu17,PaBu22}, 
earlier approaches towards extending Bayesian inversion to the quantum setting~\cite{Le06,Le07}, 
coarse-grained/observational entropy~\cite{BSS22}, 
quantum fluctuation relations~\cite{KwKi19,Hu22}, 
the renormalization group in quantum field theory~\cite{FuLaOu22},
entanglement wedge reconstruction and proposals for a resolution of the black hole information paradox~\cite{CHPSSW19,CPS20,PSSY22,KMM22}, 
and many other contexts. 
The Petz recovery map was recently shown to define the only presently known quantum retrodiction functor~\cite{PaBu22}, which in particular says that it is \emph{compositional} in the sense that 
\be
(\mathcal{F}\circ \mathcal{E})_{\rho}^{\star}=\mathcal{E}_{\rho}^{\star}\circ \mathcal{F}_{\mathcal{E}(\rho)}^{\star}
\ee
for a composable pair $\mA\xrightarrow{\mathcal{E}}\mB\xrightarrow{\mathcal{F}}\mC$ of CPTP maps and an initial state $\rho\in\mathcal{S}(\mA)$.

\subsubsection{The quantum Bayes' rule of C.\ Fuchs}

In the case of a POVM $\mathcal{E}:\M_n\to \C^X$ given by $\mathcal{E}=\bigoplus_{x\in X}\tr(M_{x}\;\cdot\;)$ and prior density matrix $\rho\in \M_n$, our quantum Bayes' rule for the Leifer--Spekkens state over time function yields (by similar calculations to the above) 
\be
\bigoplus_{x\in X}p_{x}\rho_{x}=\bigoplus_{x\in X}\sqrt{\rho}M_{x}\sqrt{\rho}, 
\ee
where $\rho_{x}=\mathcal{E}^{\star}_{\rho}(\delta_{x})$ and $p_{x}=\tr(M_{x}\rho)$ for all $x\in X$. The above equation then yields the quantum Bayes' rule of C.\ Fuchs~\cite[Section~5]{Fu01}, namely 
\be
\rho_{x}=\frac{\sqrt{\rho}M_{x}\sqrt{\rho}}{p_{x}},
\ee
which was also derived using the formalism of \emph{conditional states} in Ref.~\cite{LeSp13}.

\subsection{The \texorpdfstring{$t$}{t}-rotated family and rotated Petz recovery maps}
\label{sec:trotsot}

For each $t\in\R$, the assignment sending $(\mathcal{E},\rho)$ to 
\be
(\rho^{1/2-it}\otimes1_{\mB})\mathscr{D}[\mathcal{E}](\rho^{1/2+it}\otimes1_{\mB})
\ee
defines a state over time function, called the \define{$t$-rotated state over time}, that is process-linear, locally positive, and satisfies the classical limit axiom for all $t\in\R$. 
The Bayesian inverse $\mathcal{E}^{\star}_{\rho}$ in this case is given by
\be
\mathcal{E}^{\star}_{\rho}=\Ad_{\rho^{1/2-it}}\circ\mathcal{E}^{*}\circ\Ad_{\mathcal{E}(\rho)^{-1/2+it}},
\ee
which is the \define{rotated Petz recovery map}~\cite{Wilde15}. This map, and averaged versions of it, have become especially important in the context of information recovery and the strengthening of data-processing inequalities~\cite{FaRe15,Wilde15,JRSWW16,JRSWW18,SuToHa16,CaVe20}. 

More generally, for each $\rho\in\mathcal{S}(\mA)$, choose a unitary $U_{\rho}\in\mA$ such that $[U_{\rho},\rho]=0$. Then the assignment sending $(\mathcal{E},\rho)$ to 
\be
\big(U_{\rho}^{\dag}\rho^{1/2}\otimes1_{\mB}\big)\mathscr{D}[\mathcal{E}]\big(\rho^{1/2}U_{\rho}\otimes1_{\mB}\big)
\ee
defines a state over time that is process-linear, locally positive, and satisfies the classical limit axiom. It is referred to as the \define{Sutter--Tomamichel--Harrow (STH) state over time} based on its appearances in Refs.~\cite{SuToHa16,PaBu22}. The Bayesian inverse $\mathcal{E}^{\star}_{\rho}$ in this case is given by
\be
\mathcal{E}^{\star}_{\rho}=\Ad_{U_{\rho}^{\dag}\rho^{1/2}}\circ\mathcal{E}^{*}\circ\Ad_{U_{\mathcal{E}(\rho)}\mathcal{E}(\rho)^{-1/2}},
\ee
which is a generalization of the rotated Petz recovery map (the special case of rotated Petz is $U_{\rho}=\rho^{it}$ for some $t\in\R$). 

Although the above rotated Petz recovery maps can be obtained by a judicious choice of state over time function, an open question is what state over time function has the universal recovery map of M.\ Junge et al.~\cite{JRSWW16,JRSWW18} as the Bayesian inverse? Or more generally, what state over time functions have averaged rotated Petz recovery maps, as defined in Ref.~\cite{PaBu22}, as their Bayesian inverses? Since the universal recovery map of Refs.~\cite{JRSWW16,JRSWW18} was claimed to provide a quantum generalization of Bayes' rule in Ref.~\cite{CHPSSW19} (based on results from Ref.~\cite{SBT17}), we expect that this arises within our framework.

\section{Measurement}

Measurement in quantum mechanics involves hybrid classical-quantum systems. Therefore, our definitions of states over time and Bayesian inverses should specialize to such settings. In this section, we illustrate this with two key examples. The first example involves preparation, evolution, and then measurement. The second example is the state-update rule associated with the measurement of a quantum system in terms of quantum instruments. 

\subsection{Prepare-evolve-measure scenarios}

Let $X$ be a finite set, let $\mathcal{P}:\C^X\to \M_n$ be a CPTP map, and set $\rho=\mathcal{P}(p)$ for some probability distribution $p\in \C^X$. Such data determines a preparation of the state $\rho$ given by
\be
\rho=\sum_{x\in X}p_x\rho_x,
\ee
where $\rho_x=\mathcal{P}(\delta_x)$. Now suppose the state $\rho$ is to evolve according to a CPTP map $\mathcal{E}:\M_n\to \M_m$, and let $\mathcal{M}:\M_m\to \C^Y$ be a POVM on the output $\M_m$ of $\mathcal{E}$, so that $\mathcal{M}=\bigoplus_{y\in Y}\tr(M_{y}\;\cdot\;)$ for a collection of positive operators $M_y$ summing to the identity, indexed by some finite set $Y$. We then refer to the four-tuple $(p,\mathcal{P},\mathcal{E},\mathcal{M})$ as a \define{prepare-evolve-measure (PEM) scenario}.

Given a PEM scenario $(p,\mathcal{P},\mathcal{E},\mathcal{M})$, we can naturally define a classical channel $f:\C^X\to \C^Y$ given by $f=\mathcal{M}\circ \mathcal{E}\circ \mathcal{P}$ (see Figure~\ref{fig:clasdyn}), which we refer to as the \define{classical dynamics} of the PEM scenario $(p,\mathcal{P},\mathcal{E},\mathcal{M})$. The conditional probabilities $f_{yx}$ associated with 
$f$ 
are then interpreted as the probability of measuring outcome $y$ via the POVM $\mathcal{M}$ given that the state $\rho_x$ was prepared and then evolved under the channel $\mathcal{E}$. Conversely, the classical Bayesian inverse $g:\C^Y\to \C^X$ of $f$ is given by
\be
g_{xy}=\frac{f_{yx}q_y}{p_x},
\ee
where $q_{y}:=\tr\big(M_{y}\mathcal{E}(\rho)\big)$ gives the probability of obtaining outcome $y$ assuming the probabilistic preparation from $p$ and $\mathcal{P}$. 
The conditional probability $g_{xy}$ may be interpreted as the probability that $\rho_x$ was prepared given a measurement outcome of $y$. Furthermore, if we set $\sigma=\mathcal{E}(\rho)$ and let $\star$ be the Leifer--Spekkens state over time function, then $(q,\mathcal{M}^{\star}_{\sigma},\mathcal{E}_{\rho}^{\star},\mathcal{P}^{\star}_{p})$ is a PEM scenario and compositionality of the Petz recovery map implies that $g$ is the classical dynamics associated with the PEM scenario $(q,\mathcal{M}^{\star}_{\sigma},\mathcal{E}_{\rho}^{\star},\mathcal{P}^{\star}_{p})$, which may be viewed as an inferential time-reverse of the PEM scenario $(p,\mathcal{P},\mathcal{E},\mathcal{M})$~
\footnote{This is not to be confused with the \emph{operational} time-reverse in the sense defined by M.\ Leifer and M.\ Pusey~\cite{LePu17} because our `control' and `uncontrolled' variables have switched}.
This makes mathematically precise a sense in which measurement and preparation are time-reversals of one another, even when there is non-unitary evolution between the prepared state and the measured state.

\begin{figure}
    \centering    
    \begin{tikzpicture}[scale=0.75]
    \node (CX1) at (-4,0) {$\C^{X}$};
    \node (A1) at (-4,2) {$\matr_{n}$};
    \node (CY1) at (-2,0) {$\C^{Y}$};
    \node (B1) at (-2,2) {$\matr_{m}$};
    \draw[->] (CX1) to node[left]{$\mathcal{P}$} (A1);
    \draw[->] (A1) to node[above]{$\mathcal{E}$} (B1);
    \draw[->] (B1) to node[right]{$\mathcal{M}$} (CY1);
    \draw[->] (CX1) to node[below]{$f$} (CY1);
    \node (CX3) at (0,0) {$p$};
    \node (A3) at (0,2) {$\rho$};
    \node (CY3) at (2,0) {$q$};
    \node (B3) at (2,2) {$\sigma$};
    \draw[|->] (CX3) to node[left]{$\mathcal{P}$} (A3);
    \draw[|->] (A3) to node[above]{$\mathcal{E}$} (B3);
    \draw[|->] (B3) to node[right]{$\mathcal{M}$} (CY3);
    \node (CX2) at (4,0) {$\C^{X}$};
    \node (A2) at (4,2) {$\matr_{n}$};
    \node (CY2) at (6,0) {$\C^{Y}$};
    \node (B2) at (6,2) {$\matr_{m}$};
    \draw[->] (A2) to node[left]{$\mathcal{P}^{\star}_{p}$} (CX2);
    \draw[->] (B2) to node[above]{$\mathcal{E}^{\star}_{\rho}$} (A2);
    \draw[->] (CY2) to node[right]{$\mathcal{M}^{\star}_{\sigma}$} (B2);
    \draw[->] (CY2) to node[below]{$g$} (CX2);
    \end{tikzpicture}
    \caption{Given a PEM scenario $(p,\mathcal{P},\mathcal{E},\mathcal{M})$, we can use the latter three maps to define a classical channel $f:\C^{X}\to\C^{Y}$ via $f=\mathcal{M}\circ\mathcal{E}\circ\mathcal{P}$. This equality is expressed by saying that the diagram on the left \emph{commutes}%
    \footnote{This terminology of a \emph{commuting diagram} should not be confused with commutativity of a set of operators.}.
    The diagram in the middle depicts the evolution of the state $p$ along preparation to $\rho$, evolution to $\sigma$, and measurement to $q$.
    One can then use the probability $p$ and the classical channel $f$ to define the (classical) Bayesian inverse $g:\C^{Y}\to\C^{X}$. However, one can compute \emph{another} map $\C^{Y}\to\C^{X}$ via $\mathcal{P}^{\star}_{p}\circ\mathcal{E}^{\star}_{\rho}\circ\mathcal{M}^{\star}_{\sigma}$ by using the Leifer--Spekkens state over time to Bayesian invert each of the pairs $(\mathcal{M},\sigma)$, $(\mathcal{E},\rho)$, and $(\mathcal{P},p)$, where $\sigma=\mathcal{E}(\rho)$ and $\rho=\mathcal{P}(p)$, to arrive at the CPTP maps $\mathcal{M}^{\star}_{\sigma}$, $\mathcal{E}^{\star}_{\rho}$, and $\mathcal{P}^{\star}_{p}$ (note that the Bayesian inverse of a preparation is a measurement and vice-versa, so that $(q,\mathcal{M}^{\star}_{\sigma},\mathcal{E}_{\rho}^{\star},\mathcal{P}^{\star}_{p})$ defines another PEM scenario). The two stochastic channels $g$ and $\mathcal{P}^{\star}_{p}\circ\mathcal{E}^{\star}_{\rho}\circ\mathcal{M}^{\star}_{\sigma}$, are equal, i.e., the diagram on the right commutes, because the Petz recovery map is compositional. This compositionality can also be viewed a quantum generalization of Jeffrey's probability kinematics~\cite{Je90,PaBu22,BSS22}.}
    \label{fig:clasdyn}
\end{figure}

In the special case where $X=\{1,\dots,n\}$, $Y=\{1,\dots,m\}$, and the sets $\{|i\>\<i|\equiv P_{i}:=\mathcal{P}(\delta_{i})\}_{i\in X}$ and $\{|k\>\<k|\equiv M_{k}\}_{k\in Y}$  are chosen to be the orthogonal rank 1 projections associated with spectral decompositions of $\rho$ and $\sigma$, respectively, then the Bayesian inverses $\mathcal{P}^{\star}_{p}$ and $\mathcal{M}^{\star}_{\sigma}$ are \emph{independent} of the state over time function provided that it satisfies the classical limit axiom. Indeed, the Bayesian inverses are simply given by the Hilbert--Schmidt adjoints in this case, so that $\mathcal{M}^{\star}_{\sigma}=\mathcal{M}^*$ and $\mathcal{P}^{\star}_{p}=\mathcal{P}^*$. However, since $\mathcal{E}$ is not assumed to have a particularly simple structure with respect to these eigenstates, its Hilbert--Schmidt adjoint is not a suitable `reverse' operation in general. By using the Leifer--Spekkens state over time function, we have found such a suitable `reverse' operation $\mathcal{E}^{\star}_{\rho}$ for which the conditional probabilities $\<i|\mathcal{E}^{\star}_{\rho}(M_{k})|i\>$ indeed satisfy Bayes rule (see Figure~\ref{fig:clasdyn}), namely,
\be
\<k|\mathcal{E}(P_{i})|k\>p_{i}=\<i|\mathcal{E}^{\star}_{\rho}(M_{k})|i\>q_{k}.
\ee
We note that this resolves an issue brought up by J.\ Barandes and D.\ Kagan from Ref.~\cite{BaKa21}, where it is stated that the classical Bayes' theorem in this context does not hold in general due to the ``generic irreversibility'' of $\mathcal{E}$. So not only does our notion of Bayesian inversion restore the classical Bayes' theorem in such a context by providing a suitable time-reversal of $\mathcal{E}$, it does so for any PEM scenario, i.e., for arbitrary preparations, evolutions and measurements. 

An interesting result in this context, which is a reformulation in the language of states over time of a result first proved by Leifer \cite{Le06}, is that the classical state over time $f\star p$ associated with the classical dynamics $f$ of the PEM scenario $(p,\mathcal{P},\mathcal{E},\mathcal{M})$ may be obtained from the \emph{quantum} state over time $\mathcal{E}\star \rho$. Indeed, one may show that for all $(x,y)\in X\times Y$, 
\be
(f\star p)(x,y)=\text{tr}\big((N_x\otimes M_y)(\mathcal{E}\star\rho)\big),
\ee
where $N_x$ and $M_y$ are the positive operators corresponding to the POVMs $\mathcal{N}:=\mathcal{P}^{\star}_{p}$ and $\mathcal{M}$, respectively.

Finally, we briefly mention that the quantum Bayes' rule of R.\ Schack, T.\ Brun, and C.\ Caves~\cite{SBC01} can be viewed as a PEM scenario, when one appropriately extends our definitions to  $C^*$-algebras. The need for this extension to $C^*$-algebras is because the probability $p$ is replaced by a prior (Radon) probability measure on $X=\mathcal{S}(\mA)$, the compact Hausdorff space of states in a matrix algebra $\mA$~\cite{FuJa13,Pa17}. Namely, $\C^{X}$ is  replaced with $C(X)$, the $C^*$-algebra of continuous $\C$-valued functions on $X$, while the preparation $\mathcal{P}:C(X)\to\mA$ is replaced by a canonical classical-to-quantum generalization of the \emph{sampling map} from classical statistics~\cite{FGPR20,PaBayes}. Once a measurement outcome is obtained, the prior can then be updated. Such a scenario appears in the context of adaptive strategies for optimal quantum state determination~\cite{Jo90,Jo12}. The details of this will be expounded upon elsewhere.

\subsection{Instruments, measurement, and the state-update rule}

If $\sigma\in\mA$ is an initial quantum state that undergoes measurement associated with some instrument $\mathcal{F}:\mA\to\mB\otimes\C^{X}$ and outcome $x\in X$ is measured, then the \emph{state-update rule} dictates that the quantum state of the system becomes~\cite{HeKr69,DaLe70,Kr71,Kr83,Oz85,Oz97,Oz98,Fu01} 
\be
\sigma\mapsto\frac{\mathcal{F}_{x}(\sigma)}{\tr\big(\mathcal{F}_{x}(\sigma)\big)}
\ee
\emph{after} the measurement has been performed.
Here, $\mathcal{F}_{x}:\mA\to\mB$ is the CP map of the instrument associated with outcome $x$, so that the sum $\sum_{x\in X}\mathcal{F}_{x}$ defines a CPTP map from $\mA$ to $\mB$. It is often the case that $\mathcal{F}_{x}$ is assumed to be a Kraus rank 1 CP map, i.e., it can be written in the form $\mathcal{F}_{x}=\Ad_{V_{x}}$. The special case where the $V_{x}$ form projection operators gives the L\"uders--von~Neumann measurement~\cite{vN18,Lu51,Lu06}.  

This state-update rule has often been called a quantum generalization of Bayes' rule~\cite{Bu77,Oz97,Ko99,Ko01,Te12,Va18,FGS22}. 
Some have also argued that the state-update rule is \emph{not} a quantum generalization Bayesian conditioning, and that it is more a combination of belief propagation and conditioning on a measurement outcome~\cite[Sections~V.A.2.\ and V.B.]{LeSp13}. 
In the present section, we illustrate how one is inevitably led to the state-update rule from Bayesian inverses associated with \emph{any} state over time function that satisfies the classical limit axiom. In other words, our work supports the idea that the state-update rule is indeed a quantum generalization of Bayes' rule. Nevertheless, we still agree with Ref.~\cite{LeSp13} in the statement that this does not produce a \emph{retrodictive} state, since the state-update rule does not tell us what the state of the quantum system might have been \emph{before} the act of measurement. Furthermore, we will see that the state-update rule is indeed conditioning on a measurement, where the concept of conditioning is made precise in terms of operator-algebraic conditional expectations~\cite{Um62,Tak72,Oz85}.

To precisely state our result, we introduce additional notation besides what was specified above. First, set $\mathcal{E}:\mB\otimes\C^{X}\to\C^{X}$ to be the partial trace map that traces out the $\mB$ system, i.e., $\mathcal{E}=\tr_{\mB}\equiv\tr\otimes\id_{\C^{X}}$. Let $\rho\in\mB\otimes\C^{X}$ be any state, which we decompose as 
\be
\rho=\sum_{x\in X}\rho_{x}\otimes\delta_{x},
\ee
where $\rho_{x}\in\mB$ is some positive element for all $x\in X$, and $\delta_{x}$ is the unit vector whose $x$ component is $1$ and other components are $0$. 

\begin{proposition}
\label{thm:subi}
Let $\star$ be any state over time function satisfying the classical limit axiom and assume that $\rho_{x}\ne0$ for all $x\in X$. Then, using the notation of the previous paragraph, a Bayesian inverse of $(\mathcal{E},\rho)$ with respect to $\star$ exists, is unique, and is given by 
\be
\mathcal{E}^{\star}_{\rho}(\delta_{x})=\frac{\rho_{x}}{\tr(\rho_{x})}\otimes\delta_{x}
\ee
for all $x\in X$.
\end{proposition}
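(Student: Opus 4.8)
The plan is to invoke the classical limit axiom to pin down \emph{both} sides of Bayes' rule explicitly and independently of the particular choice of $\star$, and then to compare components. The key point is that the hypothesis of the classical limit axiom is automatically satisfied in both relevant cases. On the left, the Hilbert--Schmidt adjoint of $\mathcal{E}=\tr_{\mB}$ is $\mathcal{E}^{*}(g)=1_{\mB}\otimes g$, and since $\rho=\sum_{x}\rho_{x}\otimes\delta_{x}$ is block-diagonal in the classical index while $\C^{X}$ is commutative, $[\rho,1_{\mB}\otimes g]=0$ for every $g\in\C^{X}$; hence $[\mathscr{D}[\mathcal{E}],\rho\otimes1_{\C^{X}}]=0$ and the axiom gives $\mathcal{E}\star\rho=\mathscr{D}[\mathcal{E}]\big(\rho\otimes1_{\C^{X}}\big)$. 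I would then compute the channel state of this partial trace straight from Equation~\eqref{eq:Jcs}: since $\mu_{\mB\otimes\C^{X}}^{*}(1)=\mathscr{D}[\id_{\mB\otimes\C^{X}}]$ is the swap operator of $\mB\otimes\C^{X}$, applying $\id\otimes\tr_{\mB}$ collapses the $\mB$ leg to $1_{\mB}$ and leaves the diagonal element $\sum_{x}\delta_{x}\otimes\delta_{x}$ of $\C^{X}\otimes\C^{X}$, so $\mathscr{D}[\mathcal{E}]=\sum_{x}1_{\mB}\otimes\delta_{x}\otimes\delta_{x}$; multiplying by $\rho\otimes1_{\C^{X}}$ and using $\delta_{x}\delta_{y}=\delta_{xy}\delta_{x}$ yields $\mathcal{E}\star\rho=\sum_{x}\rho_{x}\otimes\delta_{x}\otimes\delta_{x}$.

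For the right-hand side, let $\mathcal{G}:\C^{X}\to\mB\otimes\C^{X}$ be an arbitrary CPTP map and write $p_{x}:=\tr(\rho_{x})$, so $\mathcal{E}(\rho)=\sum_{x}p_{x}\delta_{x}$. Because $\mathcal{E}(\rho)$ lies in the commutative algebra $\C^{X}$ and $\mathcal{G}^{*}$ takes values in $\C^{X}$, the commutator $[\mathscr{D}[\mathcal{G}],\mathcal{E}(\rho)\otimes1_{\mB\otimes\C^{X}}]$ again vanishes, so the classical limit axiom gives $\mathcal{G}\star\mathcal{E}(\rho)=\mathscr{D}[\mathcal{G}]\big(\mathcal{E}(\rho)\otimes1_{\mB\otimes\C^{X}}\big)$. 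Using $\mu_{\C^{X}}^{*}(1_{\C^{X}})=\sum_{x}\delta_{x}\otimes\delta_{x}$ one gets $\mathscr{D}[\mathcal{G}]=\sum_{x}\delta_{x}\otimes\mathcal{G}(\delta_{x})$, hence $\mathcal{G}\star\mathcal{E}(\rho)=\sum_{x}p_{x}\,\delta_{x}\otimes\mathcal{G}(\delta_{x})$. Writing $\mathcal{G}(\delta_{x})=\sum_{y}G_{xy}\otimes\delta_{y}$ with $G_{xy}\in\mB$, and applying $\tau$ --- the conjugate-linear extension of $B\otimes A\mapsto A^{\dag}\otimes B^{\dag}$, under which $p_{x}\,\delta_{x}\otimes(G_{xy}\otimes\delta_{y})\mapsto p_{x}\,G_{xy}^{\dag}\otimes\delta_{y}\otimes\delta_{x}$ since $p_{x}$ is real --- gives $\tau\big(\mathcal{G}\star\mathcal{E}(\rho)\big)=\sum_{x,y}p_{x}\,G_{xy}^{\dag}\otimes\delta_{y}\otimes\delta_{x}$.

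Equating $\mathcal{E}\star\rho=\tau\big(\mathcal{G}\star\mathcal{E}(\rho)\big)$ and matching the coefficient of $\delta_{y}\otimes\delta_{x}$ in $\mB\otimes\C^{X}\otimes\C^{X}$ yields $p_{x}G_{xy}^{\dag}=\delta_{xy}\rho_{x}$. This is where the hypothesis $\rho_{x}\ne0$ enters: it forces $p_{x}=\tr(\rho_{x})>0$, so $G_{xy}=0$ for $y\ne x$ and $G_{xx}=\rho_{x}/p_{x}$ (using that $\rho_{x}$ is positive, hence self-adjoint). Thus $\mathcal{G}(\delta_{x})=\frac{\rho_{x}}{\tr(\rho_{x})}\otimes\delta_{x}$, which proves uniqueness; for existence one checks that this formula indeed defines a CPTP map (each $\delta_{x}$ maps to a density matrix in $\mB\otimes\C^{X}$, and any linear map out of $\C^{X}$ sending the $\delta_{x}$ to density matrices is automatically CPTP) and, retracing the computation, that it satisfies Bayes' rule; as a quick consistency check, applying $\tr_{\mB}$ to Bayes' rule recovers $\mathcal{E}^{\star}_{\rho}(\mathcal{E}(\rho))=\rho$. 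I expect the only real obstacle to be the careful bookkeeping of tensor-leg orderings and of the conjugate-linearity of $\tau$; conceptually there is nothing subtle, since the classical limit axiom leaves no freedom in either state over time, which is precisely why the Bayesian inverse is unique and independent of the choice of $\star$.
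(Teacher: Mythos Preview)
Your proof is correct and follows essentially the same approach as the paper: both invoke the classical limit axiom on each side of Bayes' rule (using commutativity of $\C^{X}$ to force the hypothesis), compute the channel states explicitly, and match components. The only cosmetic differences are that the paper applies $\tau$ to $\mathcal{E}\star\rho$ and solves for $\mathscr{D}[\mathcal{E}^{\star}_{\rho}]$ directly, whereas you apply $\tau$ to $\mathcal{G}\star\mathcal{E}(\rho)$ and decompose $\mathcal{G}(\delta_{x})$ into blocks $G_{xy}$; you also make the existence/CPTP check more explicit than the paper does.
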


We first prove this result before discussing the relevance to the state-change associated with a measurement. 

\begin{proof}
Explicitly computing $\mathscr{D}[\mathcal{E}]\in(\mB\otimes\C^{X})\otimes\C^{X}$ (the parentheses is used to distinguish the input and output algebras) gives
\be
\mathscr{D}[\mathcal{E}]=\sum_{x\in X}1_{\mB}\otimes\delta_{x}\otimes\delta_{x}.
\ee
Therefore, $[\rho\otimes1_{\C^{X}},\mathscr{D}[\mathcal{E}]]=0$, so that 
\be
\mathcal{E}\star\rho
=(\rho\otimes1_{\C^{X}})\mathscr{D}[\mathcal{E}]
=\sum_{x\in X}\rho_{x}\otimes\delta_{x}\otimes\delta_{x}
\ee
and 
\be
\tau(\mathcal{E}\star\rho)=\sum_{x\in X}\delta_{x}\otimes\rho_{x}\otimes\delta_{x}.
\ee
Furthermore, since $\mathcal{E}(\rho)\in\C^{X}$ lives in a commutative $C^*$-algebra, $\mathcal{E}(\rho)\otimes1_{\mB\otimes\C^{X}}$  necessarily commutes with every element of $\C^{X}\otimes\mB\otimes\C^{X}$. In particular, $\big[\mathcal{E}(\rho)\otimes1_{\mB\otimes\C^{X}},\mathscr{D}[\mathcal{E}^{\star}_{\rho}]\big]=0$, so that 
\begin{align}
\mathcal{E}^{\star}_{\rho}\star\mathcal{E}(\rho)
&=\big(\mathcal{E}(\rho)\otimes1_{\mB}\otimes1_{\C^{X}}\big)\mathscr{D}[\mathcal{E}^{\star}_{\rho}] \nonumber\\
&=\left(\sum_{x\in X}\tr(\rho_{x})\delta_{x}\otimes1_{\mB}\otimes1_{\C^{X}}\right)\mathscr{D}[\mathcal{E}^{\star}_{\rho}].
\end{align}
Bayes' rule $\tau(\mathcal{E}\star\rho)=\mathcal{E}^{\star}_{\rho}\star\mathcal{E}(\rho)$ then yields 
\be
\mathscr{D}[\mathcal{E}^{\star}_{\rho}]=\sum_{x\in X}\delta_{x}\otimes\frac{\rho_{x}}{\tr(\rho_{x})}\otimes\delta_{x},
\ee
so that the Bayesian inverse $\mathcal{E}^{\star}_{\rho}$ is given by the map that sends $\delta_{x}$, representing the outcome $x$, to 
\be
\mathcal{E}^{\star}_{\rho}(\delta_{x})=\frac{\rho_{x}}{\tr(\rho_{x})}\otimes\delta_{x}. 
\ee
\end{proof}

Using this result, we can relate the state-update rule to state-preserving conditional expectations and Bayesian inverses associated with a state over time function satisfying the classical limit axiom. 

\begin{theorem}
Let $\sigma\in\mA$ be a state, $\mathcal{F}:\mA\to\mB\otimes\C^{X}$ an instrument, $\mathcal{E}:\mB\otimes\C^{X}\to\C^{X}$ the partial trace, and set $\rho=\mathcal{F}(\sigma)$. Suppose $\tr(\mathcal{F}_{x}(\sigma))\ne0$ for all $x\in X$. Furthermore, define the following maps.
\begin{enumerate}
\item
\label{item:su}
Let $\Psi:\C^{X}\to\mB\otimes\C^{X}$ be the state-update map defined as the unique linear extension of
\be
\Psi(\delta_{x})=\frac{\mathcal{F}_{x}(\sigma)}{\tr\big(\mathcal{F}_{x}(\sigma)\big)}\otimes\delta_{x}.
\ee
\item
\label{item:bi}
Let $\mathcal{E}^{\star}_{\rho}:\C^{X}\to\mB\otimes\C^{X}$ be the Bayesian inverse of $(\mathcal{E},\rho)$ associated with \emph{any} state over time function $\star$ satisfying the classical limit axiom (cf.\ Proposition~\ref{thm:subi}). 
\item
\label{item:spcd}
Let $\Omega:\C^{X}\to\mB\otimes\C^{X}$ be the Hilbert--Schmidt adjoint of a state-preserving conditional expectation associated with the map $\mathcal{E}$ and state $\rho$, i.e., $\Omega(\mathcal{E}(\rho))=\rho$ and $\mathcal{E}\circ\Omega=\id_{\C^{X}}$ (cf.\ Ref.~\cite{Al95}). 
\end{enumerate}
Then 
$\Omega=\Psi=\mathcal{E}^{\star}_{\rho}$.
\end{theorem}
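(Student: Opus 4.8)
The plan is to show the three maps $\Omega$, $\Psi$, and $\mathcal{E}^{\star}_{\rho}$ all coincide by computing each of them explicitly on the basis vectors $\delta_x$ of $\C^X$, since linearity then forces equality everywhere. The key observation is that $\rho = \mathcal{F}(\sigma)$ decomposes as $\rho = \sum_{x\in X}\rho_x\otimes\delta_x$ with $\rho_x = \mathcal{F}_x(\sigma)$, which is nonzero by hypothesis, so Proposition~\ref{thm:subi} applies verbatim.

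First I would handle the Bayesian inverse: by Proposition~\ref{thm:subi} (with the identification $\rho_x = \mathcal{F}_x(\sigma)$), we immediately get
\be
\mathcal{E}^{\star}_{\rho}(\delta_x) = \frac{\mathcal{F}_x(\sigma)}{\tr(\mathcal{F}_x(\sigma))}\otimes\delta_x,
\ee
which is by definition exactly $\Psi(\delta_x)$. So $\mathcal{E}^{\star}_{\rho} = \Psi$ is essentially a restatement of the proposition, and the only real content is identifying these with $\Omega$.

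Next I would pin down $\Omega$. The defining conditions $\mathcal{E}\circ\Omega = \id_{\C^X}$ and $\Omega(\mathcal{E}(\rho)) = \rho$, together with the fact that $\Omega$ is the Hilbert--Schmidt adjoint of a conditional expectation onto the commutative subalgebra $\C^X \hookrightarrow \mB\otimes\C^X$, force $\Omega$ to be block-diagonal in the $X$-grading: $\Omega(\delta_x) = \omega_x\otimes\delta_x$ for some state $\omega_x\in\mB$ (this uses that a conditional expectation onto $1_\mB\otimes\C^X$ must respect the direct-sum structure, so its adjoint does too; alternatively, $\mathcal{E}\circ\Omega = \id$ already forces $\tr_\mB\Omega(\delta_x) = \delta_x$, and the module/conditional-expectation property forces the support to stay in the $x$-block). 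Then $\mathcal{E}(\rho) = \sum_x \tr(\mathcal{F}_x(\sigma))\,\delta_x$, so the condition $\Omega(\mathcal{E}(\rho)) = \rho$ reads $\sum_x \tr(\mathcal{F}_x(\sigma))\,\omega_x\otimes\delta_x = \sum_x \mathcal{F}_x(\sigma)\otimes\delta_x$, giving $\omega_x = \mathcal{F}_x(\sigma)/\tr(\mathcal{F}_x(\sigma))$, hence $\Omega = \Psi$.

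The main obstacle is the rigorous justification that $\Omega$ must be block-diagonal with respect to the $X$-grading — i.e., that the Hilbert--Schmidt adjoint of a state-preserving conditional expectation onto $\C^X$ inside $\mB\otimes\C^X$ cannot mix the blocks. I expect this follows from the general structure of conditional expectations onto an abelian subalgebra generated by orthogonal projections $\{1_\mB\otimes|x\rangle\langle x|\}$: the conditional expectation $E$ satisfies $E(aPbP) = E(a)PbP$-type bimodule identities that force the output to be supported diagonally, and dualizing under the trace pairing transfers this to $\Omega$. Once that structural fact is in hand, the rest is the short computation above, and one could even sidestep it entirely by noting that $\mathcal{E}^{\star}_{\rho}$ from Proposition~\ref{thm:subi} is \emph{a} map satisfying conditions~\eqref{item:spcd} together with being CPTP, and invoking uniqueness of the state-preserving conditional expectation onto $\C^X$ relative to $\rho$ (which holds because $\C^X$ is abelian, cf.\ Ref.~\cite{Al95}); then $\Omega = \mathcal{E}^{\star}_{\rho}$ by uniqueness, closing the loop.
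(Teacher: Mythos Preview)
Your proposal is correct. For $\Psi=\mathcal{E}^{\star}_{\rho}$ you match the paper exactly. For $\Omega=\Psi$, the paper takes precisely your closing ``sidestep'': it verifies by two one-line computations that $\Psi(\mathcal{E}(\rho))=\rho$ and $\mathcal{E}\circ\Psi=\id_{\C^{X}}$, and then invokes the uniqueness of state-preserving conditional expectations for faithful states (citing Takesaki) to conclude $\Omega=\Psi$. Your primary route---arguing block-diagonality of $\Omega$ directly from the conditional-expectation structure---would also succeed, and the obstacle you flag actually dissolves once you note that $\Omega$ is CPTP (as the Hilbert--Schmidt adjoint of a UCP conditional expectation): then $\Omega(\delta_{x})\ge 0$ together with $\tr_{\mB}\Omega(\delta_{x})=\delta_{x}$ forces the off-diagonal blocks to vanish by positivity, with no bimodule gymnastics needed. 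The paper's uniqueness argument simply bypasses that discussion.
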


\begin{proof}
The equivalence between items~\ref{item:su} and~\ref{item:bi}, and hence the equality $\Psi=\mathcal{E}^{\star}_{\rho}$, follows from Proposition~\ref{thm:subi} by setting $\rho_{x}=\mathcal{F}_{x}(\sigma)$.
As for the relationship to state-preserving conditional expectations, first note that 
\begin{align}
\Psi\big(\mathcal{E}(\rho)\big)
&=\Psi\left(\sum_{x}\tr(\rho_{x})\delta_{x}\right)
=\sum_{x\in X}\tr(\rho_{x})\Psi(\delta_{x}) \nonumber\\
&=\sum_{x\in X}\tr(\rho_{x})\left(\frac{\rho_{x}}{\tr(\rho_{x})}\otimes\delta_{x}\right)
=\sum_{x}\rho_{x}\otimes\delta_{x} \nonumber \\
&=\rho.
\end{align}
Second, we have 
\be
(\mathcal{E}\circ\Psi)(\delta_{x})
=\mathcal{E}\left(\frac{\rho_{x}}{\tr(\rho_{x})}\otimes\delta_{x}\right)
=\frac{\tr(\rho_{x})}{\tr(\rho_{x})}\delta_{x}
=\delta_{x}
\ee
for each $x\in X$. By linear extension, this shows $\mathcal{E}\circ\Psi=\id_{\C^{X}}$. Thus, $\Psi$ satisfies the definition of a state-preserving conditional expectation. By the uniqueness of state-preserving conditional expectations for faithful states~\cite{Tak72,OhPe93,PaRu19}, $\Omega=\Psi=\mathcal{E}^{\star}_{\rho}$.
\end{proof}

Note that by definition of $\Psi$ by extending it linearly, if one acquires soft evidence from the measurement in the form of a probability distribution $r\in\C^{X}$, then 
\be
\Psi(r)=\sum_{x\in X}\frac{r_{x}\mathcal{F}_{x}(\sigma)}{\tr\big(\mathcal{F}_{x}(\sigma)\big)}\otimes\delta_{x}.
\ee
Note that the $\otimes\delta_{x}$ term is merely used as a book-keeping device to separate the possible state-updates depending on each outcome.
Namely, by tracing out over $\C^{X}$, one obtains the barycenter 
\be
(\id_{\mB}\otimes\tr_{\C^{X}})\big(\Psi(r)\big)=\sum_{x\in X}\frac{r_{x}\mathcal{F}_{x}(\sigma)}{\tr\big(\mathcal{F}_{x}(\sigma)\big)}
\ee
as the updated density matrix in $\mB$ after the measurement has been performed and the outcome is only given by soft evidence $r$. This is a quantum generalization of Jeffrey's rule~\cite{Je90,Ja19}. 
Nevertheless, this is a special case of Bayesian inverses from our perspective due to the hybrid classical/quantum nature of the channels involved. The full quantum generalization of Bayes' rule is the main definition provided in this work. The point here is that the Bayesian inverse for \emph{every} state over time function satisfying the classical limit axiom \emph{must} reproduce this special version of quantum Bayes' rule.

\section{Non-positive Bayesian inverses}

To incorporate even more instances of Bayes' rule, we will extend state over time functions to a larger domain. One reason to do this is because there may exist solutions $\mathcal{E}^{\star}_{\rho}$ to Bayes' rule $\mathcal{E}\star\rho=\tau(\mathcal{E}^{\star}_{\rho}\star\mathcal{E}(\rho))$ that are not necessarily completely positive, and these solutions may nevertheless have useful applications and/or interpretations. A second reason is to incorporate a wider variety of examples in the literature as instances of states over time and Bayes' rule. These include the two-state formalism~\cite{Wat55,ABL64,ReAh95}, time-dependent correlators~\cite{BDOV13,BHR17}, generalized conditional expectations~\cite{Ts22}, and many more. 

We will first consider a solution $\mathcal{E}^{\star}_{\rho}$ to Bayes' rule for a state over time function where $\mathcal{E}^{\star}_{\rho}$ is not necessarily CP, but is \emph{$\dag$-preserving}, in the sense that $\mathcal{E}^{\star}_{\rho}(B^{\dag})=(\mathcal{E}^{\star}_{\rho}(B))^{\dag}$ for all inputs $B$. In this case, we allow more flexibility in our earlier definition of a state over time function by requesting that it is a family of functions each of the form $\star:\HPTP(\mA,\mB)\times\mathcal{S}(\mA)\to\mA\otimes\mB$, where $\HPTP(\mA,\mB)$ denotes the set of $\dag$-preserving and trace-preserving (HPTP) maps from $\mA$ to $\mB$. 
An HPTP solution $\mathcal{E}^{\star}_{\rho}$ to Bayes' rule will be called a \emph{Bayes' map}.
Our primary example will be the Jordan product state over time function.

\subsection{The Jordan product state over time}
\label{sec:SBsot}

The \define{symmetric bloom}/\define{Jordan product} state over time function is defined on $(\mathcal{E},\rho)\in\HPTP(\mA,\mB)\times\mathcal{S}(\mA)$ by 
\be
\mathcal{E}\star\rho=\frac{1}{2}\big\{\rho\otimes 1_{\mB},\mathscr{D}[\mathcal{E}]\big\},
\ee
where $\{X,Y\}=XY+YX$ denotes the Jordan product (anti-commutator).
The symmetric bloom state over time function is hermitian, bilinear, associative, and satisfies the classical limit axiom (these statements are all proved in Ref.~\cite{FuPa22}). The symmetric bloom was recently shown to define a quantum state over time function that bypasses the no-go theorem of Ref.~\cite{HHPBS17}, which claimed the non-existence of a state over time function satisfying these properties. While the no-go theorem of Ref.~\cite{HHPBS17} is mathematically correct, it assumes a larger domain for state over time functions than what is physically necessary. The symmetric bloom then by-passes the statement in Ref.~\cite{HHPBS17} by being defined on a smaller, more physically relevant, domain. 

A Bayes map $\mathcal{E}^{\star}_{\rho}$ for the symmetric bloom must satisfy the equation 
\be
\tau\Big(\big\{\rho\otimes1_{\mB},\mathscr{D}[\mathcal{E}]\big\}\Big)
=\big\{\mathcal{E}(\rho)\otimes1_{\mA},\mathscr{D}[\mathcal{E}^{\star}_{\rho}]\big\},
\ee
which is equivalent to 
\be
\big\{1_{\mB}\otimes\rho,\mathscr{D}[\mathcal{E}^*]\big\}=\big\{\mathcal{E}(\rho)\otimes1_{\mA},\mathscr{D}[\mathcal{E}^{\star}_{\rho}]\big\}
\ee
because $\rho$ is self-adjoint and $\mathcal{E}$ is CP.
This is a linear algebra problem of the form $B=\{A,X\},$ where $A$ and $B$ are known and $X$ is desired.

To avoid cumbersome indices, we now restrict to the case where $\mA=\matr_{m}$ and $\mB=\matr_{n}$ are matrix algebras. In addition, to avoid a discussion of measure zero and non-commutative almost everywhere equivalence~\cite{PaRu19}, we will restrict our attention to the cases where $\mathcal{E}(\rho)$ is non-singular (has only non-zero eigenvalues). 
Let 
\be
\mathcal{E}(\rho)=\sum_{k=1}^{n}q_{k}|w_{k}\>\<w_{k}|
\ee
be a spectral decomposition into one-dimensional projections (so that some of the $q_{k}$ may repeat).

Using this and the above equation in terms of the Jordan product, one obtains
\be
\mathcal{E}^{\star}_{\rho}\big(|w_{k}\>\<w_{l}|\big)=(q_{k}+q_{l})^{-1}\Big\{\rho,\mathcal{E}^*\big(|w_{k}\>\<w_{l}|\big)\Big\}.
\ee
Since the $\{|w_{k}\>\}$ form a basis, this determines $\mathcal{E}^{\star}_{\rho}$ as a linear operator. In fact, $\mathcal{E}^{\star}_{\rho}$ is $\dag$-preserving because $\rho$ is self-adjoint, the $\{q_{k}\}$ are real, the anti-commutator is $\dag$-preserving, and $\mathcal{E}$ is $\dag$-preserving. It is not presently known to us what necessary and sufficient conditions guarantee complete positivity of $\mathcal{E}^{\star}_{\rho}$ in the case of the symmetric bloom.

\subsubsection{The symmetric bloom approximates Leifer--Spekkens}

Interestingly, the symmetric bloom state over time provides the linear approximation to the Leifer--Spekkens state over time near the maximally mixed state~\cite{FuPa22}. Let $\rho_{0}:=\frac{1_{\mA}}{\tr(1_{\mA})}$ denote the uniform state in a multi-matrix algebra $\mA$, let $\mathcal{E}:\mA\to\mB$ be a CPTP map, and let $\star_{\LS}$ and $\star_{\SB}$ denote the Leifer--Spekkens and symmetric bloom state over time functions, respectively. Then for any $A\in\mA^{\sa}$ such that $\tr(A)=0$, one has
\be
\mathcal{E}\star_{\LS}(\rho_{0}+\epsilon A)-\mathcal{E}\star_{\LS}\rho_{0}=\epsilon(\mathcal{E}\star_{\SB}A)+\mathcal{O}(\epsilon^2)
\ee
for sufficiently small $\epsilon$. 
Note that we have linearly extended $\star_{\mathrm{J}}$ from $\mathcal{S}(\mA)$ to $\mA^{\sa}$ in the second argument to make sense of the right-hand-side of this identity.
Equivalently, 
\be
\lim_{\epsilon\to0}\left(\frac{\mathcal{E}\star_{\LS}(\rho_{0}+\epsilon A)-\mathcal{E}\star_{\LS}\rho_{0}}{\epsilon}\right)=\mathcal{E}\star_{\SB}A.
\ee

This, combined with the work of Ref.~\cite{FuPa22}, shows that the symmetric bloom state over time maintains some of the physically relevant features of the Leifer--Spekkens state over time, but in addition satisfies a larger number of convenient properties, including associativity. At present, we do not know what this says about the relationship between the Petz recovery map and the Bayes map for the symmetric bloom.

\subsubsection{An operational interpretation of the symmetric bloom}
\label{sec:opsymb}

In Refs.~\cite{BDOV13,BDOV14}, the symmetric bloom state over time, in the special case where the channel is the identity, appears as the real part of the ideal two-point quantum correlator (see the later section on the right bloom for more details). The symmetric bloom state over time, when viewed as a function of its second argument, defines a map $\id_{\mA}\star\;\cdot\;:\mathcal{S}(\mA)\to\mA\otimes\mA$. This map is given an operational interpretation by decomposing it via a generalized Kraus decomposition into a quantum instrument, which in turn allows one to compute expectation values associated with the input state and instrument~\cite[Proposition~1]{BDOV13}. A unique such decomposition into a quantum instrument can be chosen by specifying that the sum of the absolute values of the associated statistical errors is minimized~\cite[Proposition~2]{BDOV13}. In this special case of the symmetric bloom, this unique decomposition is expressed in terms of symmetric and anti-symmetric optimal cloners~\cite[Proposition~3]{BDOV13}. In fact, Ref.~\cite{BDOV13} proposed an optical experiment with polarized photons to operationally determine the symmetric bloom.

\subsection{Bayes' rule and linear Bayes maps}

We may even go beyond $\dag$-preserving maps by finding \emph{linear} solutions to Bayes' rule. To make sense of this, we therefore need to extend state over time functions to be defineable on arbitrary trace-preserving linear maps. We will also see that the quantum time-reversal symmetry $\tau$ for states over time, given by the swap map $\gamma$ and adjoint $\dag$, is not enough for a robust Bayes' rule in this more general setting. 
In all that follows, let $\TP(\mA,\mB)$ denote the space of trace-preserving  linear maps from $\mA$ to $\mB$. 

\begin{definition}
\label{defn:ESOT}
An \define{extended state over time function} associates every pair $(\mA,\mB)$ of multi-matrix algebras with a map $\star:\TP(\mA,\mB)\times\mathcal{S}(\mA)\to\mA\otimes\mB$, 
whose value on $(\mathcal{E},\rho)$ is denoted by $\mathcal{E}\star\rho$,
such that 
$\star$ preserves marginals in the sense that 
\be
\tr_{\mB}\left(\mathcal{E}\star\rho\right)=\rho
\quad\text{and}\quad
\tr_{\mA}\left(\mathcal{E}\star\rho\right)=\mathcal{E}(\rho).
\ee
In such a case, the element $\mathcal{E}\star\rho\in \mA\otimes\mB$ 
will be referred to as the \define{state over time} associated with $\star$ and the input $(\mathcal{E},\rho)$.
\end{definition}

As before, the word `state' is abusive, since the element $\mathcal{E}\star\rho$ need not be a state. 
One can extend the properties from Definition~\ref{defn:saxs} to extended state over time functions as follows: Property~\ref{item:hermitian} allows $\mathcal{E}\in\HPTP(\mA,\mB)$, Properties~\ref{item:slin} and \ref{item:classicallimit} allow $\mathcal{E}\in\TP(\mA,\mB)$, Property~\ref{item:plin} allows $\lambda\in\C$ and $\mathcal{E},\mathcal{F}\in\TP(\mA,\mB)$, and Properties~\ref{item:locpos}, \ref{item:pos}, and \ref{item:bilin} are the same as originally stated.
For brevity, we will also henceforth drop the word `extended' from our phrasing unless emphasis is needed. 

\begin{definition}
\label{defn:Bayesmap}
Let $\star$ be a state over time function. Given a density matrix $\rho\in\St(\mA)$ and a CPTP map $\mathcal{E}:\mA\to\mB$, 
a \define{Bayes map} associated with $(\mathcal{E},\rho)$ is a trace-preserving linear map $\mathcal{E}^{\star}_{\rho}:\mB\to\mA$ such that
\be
\mathcal{E}\star\rho
=\tau\left(\widetilde{\mathcal{E}^{\star}_{\rho}}\star\mathcal{E}(\rho)\right),
\ee
where $\widetilde{\mathcal{E}^{\star}_{\rho}}:=\dag\circ\mathcal{E}^{\star}_{\rho}\circ\dag$ is defined by~%
\footnote{
One could equivalently define a new state over time function $\star^{\dag}:\TP(\mA,\mB)\times\mathcal{S}(\mA)\to\mA\otimes\mB$ whose value on $(\mathcal{E},\rho)$ is given by 
$\mathcal{E}\star^{\dag}\rho:=\big((\dag\circ\mathcal{E}\circ\dag)\star\rho\big)^{\dag}$. In this case, we could call $\star^{\dag}$ the \define{reverse orientation} state over time function associated with $\star$. Then, an equivalent formulation of Bayes' rule reads 
$\mathcal{E}\star\rho=\gamma\big(\mathcal{E}^{\star}_{\rho}\star^{\dag}\mathcal{E}(\rho)\big)$, where only the swap map $\gamma$ is used, but with the additional modification of using the reverse orientation state over time on the right. 
}
\be
\widetilde{\mathcal{E}^{\star}_{\rho}}(B):=\big(\mathcal{E}^{\star}_{\rho}(B^{\dag})\big)^{\dag}.
\ee
\end{definition}

Note that when $\mathcal{E}^{\star}_{\rho}$ is $\dag$-preserving, then $\widetilde{\mathcal{E}^{\star}_{\rho}}=\mathcal{E}^{\star}_{\rho}$. We have already seen a special case of this when discussing the symmetric bloom. 

\begin{remark}
\label{rmk:marginB}
Note that by taking the partial trace $\tr_{\mB}$ of both sides of our generalized Bayes' rule, we find
\begin{align}
\rho&=\tr_{\mB}\big(\tau(\widetilde{\mathcal{E}^{\star}_{\rho}}\star\mathcal{E}(\rho)\big)
=\Big(\widetilde{\mathcal{E}^{\star}_{\rho}}\big(\mathcal{E}(\rho)\big)\Big)^{\dag} \nonumber\\
&=\mathcal{E}^{\star}_{\rho}\big(\mathcal{E}(\rho)^{\dag}\big)=\mathcal{E}^{\star}_{\rho}\big(\mathcal{E}(\rho)\big),
\end{align}
so that $\mathcal{E}^{\star}_{\rho}$ takes the prediction $\mathcal{E}(\rho)$ back to the prior $\rho$. This would not necessarily be the case had we used $\mathcal{E}^{\star}_{\rho}$ instead of $\widetilde{\mathcal{E}^{\star}_{\rho}}$ in our definition of Bayes' rule. More reasons for the importance of using $\widetilde{\mathcal{E}^{\star}_{\rho}}$ as opposed to just $\mathcal{E}^{\star}_{\rho}$ in Bayes' rule will be provided in the upcoming examples.
\end{remark}

\subsection{The \texorpdfstring{$(r,s)$}{(r,s)}-parametrized family}
\label{sec:rsfamsot}

We will now simultaneously generalize both the Leifer--Spekkens and Jordan state over time functions by viewing them as special cases inside a family. 
For each $r,s\in[0,1]$ the assignment sending $(\mathcal{E},\rho)\in\TP(\mA,\mB)\times\mathcal{S}(\mA)$ to 
\be
s(\rho^{r}\otimes 1_{\mB})\mathscr{D}[\mathcal{E}](\rho^{1-r}\otimes 1_{\mB})+(1-s)(\rho^{1-r}\otimes 1_{\mB})\mathscr{D}[\mathcal{E}](\rho^{r}\otimes 1_{\mB})
\ee
defines a state over time function that is process-linear, and satisfies the classical limit axiom for all $r,s\in[0,1]$. 
Process-linearity follows from the linearity of the channel state in its argument. As for the classical limit axiom, 
note that by the functional calculus for matrices, 
if $[\rho\otimes1_{\mB},\mathscr{D}[\mathcal{E}]]=0$, 
then $[\rho^{r}\otimes1_{\mB},\mathscr{D}[\mathcal{E}]]=0$.
Temporarily introducing the notation $s^{\perp}:=1-s$ and $r^{\perp}:=1-r$, 
this 
implies 
\begin{align}
&s(\rho^{r}\otimes 1_{\mB})\mathscr{D}[\mathcal{E}](\rho^{r^{\perp}}\otimes 1_{\mB})+s^{\perp}(\rho^{r^{\perp}}\otimes 1_{\mB})\mathscr{D}[\mathcal{E}](\rho^{r}\otimes 1_{\mB}) \nonumber\\
&=s(\rho\otimes1_{\mB})\mathscr{D}[\mathcal{E}]+(1-s)(\rho\otimes1_{\mB})\mathscr{D}[\mathcal{E}] \nonumber\\
&=(\rho\otimes1_{\mB})\mathscr{D}[\mathcal{E}], 
\end{align}
so that the classical limit axiom holds. 

This parametrized family specializes to the Leifer--Spekkens state over time for $r=\frac{1}{2}$ and $s$ arbitrary, the symmetric bloom for $r\in\{0,1\}$ and $s=\frac{1}{2}$, and many other cases that have appeared in the literature. These include the \emph{left} and \emph{right blooms}, which specialize to the two-state formalism and time-dependent correlators, as will be described next.

\subsection{The right bloom}
\label{sec:rbsot}

The cases $(r,s)=(1,1)$ or $(r,s)=(0,0)$ yield $\mathcal{E}\star\rho=(\rho\otimes1_{\mB})\mathscr{D}[\mathcal{E}]$, the \emph{bloom} from Ref.~\cite{FuPa22}, which will be referred to as the \emph{right bloom} here~
\footnote{Using notation from Ref.~\cite{FuPa22}, this is because the functional associated with $\mathcal{E}\star\rho$ sends $A\otimes B\in\mA\otimes\mB$ to \unexpanded{
\[
\<\mathcal{E}\star\rho,A\otimes B\>=\tr\big((\mathcal{E}\star\rho)^{\dag}(A\otimes B)\big)=\tr(\rho A\mathcal{E}^*(B)\big),
\]
}
which agrees with $\omega\big(A F(B)\big)$ upon setting $F=\mathcal{E}^{*}$ and $\rho=\mathscr{D}[\omega]$.
}. 
The right bloom is also state-linear, so that it is in fact bilinear, and associative.

A Bayes map $\mathcal{E}^{\star}_{\rho}$ for the right bloom must satisfy the equation 
\be
\mathcal{E}\star\rho=\gamma\Big(\big((\dag\circ\mathcal{E}^{\star}_{\rho}\circ\dag)\star\mathcal{E}(\rho)\big)^{\dag}\Big),
\ee
which by 
applying $\dag$ to both sides and using the fact that $\gamma$ is $\dag$-preserving gives
\begin{align}
\mathscr{D}[\mathcal{E}](\rho\otimes1_{\mB})&=\gamma\Big(\big(\mathcal{E}(\rho)\otimes1_{\mA}\big)\mathscr{D}[\dag\circ\mathcal{E}^{\star}_{\rho}\circ\dag]\Big) \nonumber \\
&=\big(1_{\mA}\otimes\mathcal{E}(\rho)\big)\mathscr{D}\big[(\mathcal{E}^{\star}_{\rho})^{*}\big]
\end{align}
by Lemma~\ref{lem:EgamDE}. If we assume that $\mathcal{E}(\rho)$ is strictly positive, this gives the unique solution
\begin{align}
\mathscr{D}\big[\mathcal{E}^{\star *}_{\rho}\big]
&=\big(1_{\mA}\otimes\mathcal{E}(\rho)^{-1}\big)\mathscr{D}[\mathcal{E}](\rho\otimes1_{\mB})\nonumber \\
&=\mathscr{D}\big[\mathscr{L}_{\mathcal{E}(\rho)^{-1}}\circ\mathcal{E}\circ\mathscr{L}_{\rho}\big]
\end{align}
by Lemma~\ref{lem:EgamDE} (as in Lemma~\ref{lem:gamDE}, $\mathscr{L}$ denotes left multiplication). 

Since the channel state assignment $\mathscr{D}$ is a linear isomorphism, the inputs are equal, i.e., $\mathcal{E}^{\star *}_{\rho}=\mathscr{L}_{\mathcal{E}(\rho)^{-1}}\circ\mathcal{E}\circ\mathscr{L}_{\rho}$. Taking the Hilbert--Schmidt adjoint therefore gives the solution 
\be
\mathcal{E}^{\star}_{\rho}
=\mathscr{L}_{\rho}^*\circ\mathcal{E}^*\circ\mathscr{L}_{\mathcal{E}(\rho)^{-1}}^{*}
=\mathscr{L}_{\rho}\circ\mathcal{E}^*\circ\mathscr{L}_{\mathcal{E}(\rho)^{-1}}.
\ee
where we have used the general fact that $\mathscr{L}_{A}^*=\mathscr{L}_{A^{\dag}}$. Explicitly, this means $\mathcal{E}^{\star}_{\rho}$ is given by the formula
\begin{equation}
\label{eq:LBBM}
\mB\ni B\xmapsto{\mathcal{E}^{\star}_{\rho}}\rho\mathcal{E}^*\big(\mathcal{E}(\rho)^{-1}B\big),
\end{equation}
which agrees with the Bayes map of Refs.~\cite{PaRuBayes,PaBayes,PaQPL21,GPRR21}. 

We make two important remarks regarding the right bloom state over time function and the associated Bayes map.

\begin{remark}
\label{rmk:LBInv}
If $\mathcal{E}$ is a $*$-isomorphism (equivalently, $\mathcal{E}$ is an invertible quantum channel, in the sense that $\mathcal{E}^{-1}$ exists and is also a quantum channel), then the Bayes map satisfies $\mathcal{E}^{\star}_{\rho}=\mathcal{E}^{-1}$. This property would generally \emph{fail} if we had instead defined Bayes' rule naively as $\mathcal{E}\star\rho
=\gamma\left(\mathcal{E}^{\star}_{\rho}\star\mathcal{E}(\rho)\right)$ without using $\tau$ on the right-hand-side. 

Indeed, if $\mathcal{E}=\Ad_{U}$ is invertible and represented by a unitary $U$, then the Bayes map given by Equation~\eqref{eq:LBBM} yields
\begin{align}
\mathcal{E}^{\star}_{\rho}(B)
&=\rho\Ad_{U^{\dag}}\Big(\big(\Ad_{U}(\rho)\big)^{-1}B\Big)
=\rho U^{\dag}\big(U\rho U^{\dag}\big)^{-1}B U \nonumber \\
&=\rho U^{\dag}(U^{\dag})^{-1}\rho^{-1}U^{-1}B U
=\rho\rho^{-1}U^{\dag}BU \nonumber\\
&=U^{\dag}BU
\end{align}
since $U^{-1}=U^{\dag}$. Thus, $\mathcal{E}^{\star}_{\rho}=\mathcal{E}^{-1}$.

However, if the alternative proposal $\mathcal{E}\star\rho
=\gamma\left(\mathcal{E}^{\star}_{\rho}\star\mathcal{E}(\rho)\right)$ for Bayes' rule had been used, then the unique linear solution would be given by $\mathcal{E}^{\star}_{\rho}(B)=\rho\mathcal{E}^{*}\big(B\mathcal{E}(\rho)^{-1}\big)$. Plugging in $\mathcal{E}=\Ad_{U}$ into \emph{this} expression would instead yield
\be
\mathcal{E}^{\star}_{\rho}(B)
=\rho\Ad_{U^{\dag}}\Big(B\big(\Ad_{U}(\rho)\big)^{-1}\Big)
=\rho U^{\dag}BU\rho^{-1},
\ee
which is not the inverse of $\mathcal{E}$ in general. 
We will see many other reasons supporting the usage of $\tau$ (and $\widetilde{\mathcal{E}^{\star}_{\rho}}$), as opposed to just $\gamma$, in our definition of Bayes' rule in later examples.
\end{remark}

\begin{remark}
\label{rmk:BaTh}
Although $\mathcal{E}^{\star}_{\rho}$ is not in general CP, \cite[Proposition~3.2]{GPRR21} (see also Ref.~\cite{PaRuBayes}) shows the following are equivalent 
\begin{enumerate}
\item
$\mathcal{E}^{\star}_{\rho}$ is $\dag$-preserving
\item
$\rho\mathcal{E}^*\big(\mathcal{E}(\rho)^{-1}B\big)=\mathcal{E}^*\big(B\mathcal{E}(\rho)^{-1}\big)\rho$
for all $B\in\mB$
\item
$\mathcal{E}^{\star}_{\rho}$ is CP
\item
$\Ad_{\mathcal{E}(\rho)^{it}}\circ\mathcal{E}=\mathcal{E}\circ\Ad_{\rho^{it}}$ for all $t\in\R$.
\end{enumerate}
In this case, one can rewrite the formula for $\mathcal{E}^{\star}_{\rho}$ as 
\be
\mathcal{E}^{\star}_{\rho}=\Ad_{\rho^{1/2}}\circ\mathcal{E}^{*}\circ\Ad_{\mathcal{E}(\rho)^{-1/2}},
\ee
which agrees with the Petz recovery map. 

Note that the last equivalent condition appeared originally in the work of Accardi and Cecchini in Ref.~\cite{AcCe82}, specifically in the context of GNS-symmetric dynamics and Tomita--Takesaki theory~\cite{GPRR21,BCM16,An06}. However, it also appears in the physics literature as time-symmetric covariant quantum channels~\cite{Ma12,LKJR15} (see also Ref.~\cite{PaBu22}), where the time evolution symmetry is generated by the modular Hamiltonians associated with the initial and final states. 

Interestingly, if we had ignored the dagger in our formulation of Bayes' rule, the covariance condition would have instead been $\Ad_{\mathcal{E}(\rho)^{it}}\circ\mathcal{E}=\mathcal{E}\circ\Ad_{\rho^{-it}}$ for all $t\in\R$. This would suggest that the modular flow goes forwards in time for one state but backwards in time for the other state, which seems to be at odds with the natural directionality of time suggested by the modular flow~\cite{CoRo94}.
\end{remark}

\subsubsection{Weak values and the two-state formalism}

In the special case where $\mA=\matr_{m}$, $\mB=\C^{X}$, and $\mE:\mA\to\mB$ describes a POVM with $x$ component $\mathcal{E}_{x}=\tr(M_{x}\;\cdot\;)$ for some positive operator $M_{x}$, the state over time associated with the right bloom generalizes the \emph{two-state} from Refs.~\cite{Wat55,ABL64,ReAh95}, which has also appeared recently in the context of holography~\cite{NTTTW21}. In this case, $\mA\otimes\mB\cong\bigoplus_{x\in X}\matr_{m}$ so that 
\begin{align}
\mathcal{E}\star\rho
&=(\rho\otimes1_{\C^{X}})\left(\sum_{i,j}E_{ij}^{(m)}\otimes\left(\sum_{x\in X}\tr\big(M_{x}E_{ji}^{(m)}\big)\delta_{x}\right)\right) \nonumber \\
&=\sum_{x\in X}(\rho\otimes1_{\C^{X}})\left(\sum_{i,j}\big((M_{x})_{ij}E_{ij}^{(m)}\big)\otimes\delta_{x}\right) \nonumber \\
&=\sum_{x\in X}(\rho M_{x})\otimes\delta_{x}\cong\bigoplus_{x\in X}\rho M_{x}.
\end{align}
In the special case where $\rho=|\psi\>\<\psi|$ is the one-dimensional projection corresponding to a pure state $|\psi\>$ and similarly for $M_{x}=|\phi_{x}\>\<\phi_{x}|$, this state over time becomes
\be
\mathcal{E}\star\rho\cong\bigoplus_{x\in X}\<\psi|\phi_{x}\> |\psi\>\<\phi_{x}|, 
\ee
which is the two-state appearing in Ref.~\cite{ReAh95} (combining all possibilities indexed by $x$ together with the overlap coefficient $\<\psi|\phi_{x}\>$) when the Hamiltonian evolution is trivial. This setup is similar to a PEM scenario in that a single state is prepared and the evolution is trivial. The main difference, however, is the choice of state over time function, which in this case yields the two-state and describes pre- and post-selection~\cite{ABL64}.

\begin{figure}
    \centering    
    \begin{tikzpicture}
    \draw[thick] (0,0) -- (0,4);
    \draw[thick] (-0.1,0) node[left]{$t_{0}$\;} -- (0.1,0);
    \draw[thick] (-0.1,2) node[left]{$t_{1}$\;} -- (0.1,2);
    \draw[thick] (-0.1,4) node[left]{$t_{2}$\;} -- (0.1,4);
    \node (t1L) at (-2,0) {$\mA_0$};
    \node (tL) at (-2,2) {$\mA_1$};
    \node (t2Lb) at (-2,3.5) {$\mA_2$};
    \node (t2Lt) at (-2,4.5) {$\mB$};
    \draw[->] (t1L) -- node[left]{$\Ad_{U_{t_{1}\leftarrow t_{0}}}$} (tL);
    \draw[->] (tL) -- node[left]{$\Ad_{U_{t_{2}\leftarrow t_{1}}}$} (t2Lb);
    \draw[->] (t2Lb) -- node[left]{$\mathcal{E}$} (t2Lt);
    \node (psi) at (2,0) {$|\psi\>$};
    \node (psit) at (2,2) {$|\psi'\>$};
    \node (phi) at (4,4) {$|\phi_{x}\>$};
    \node (phit) at (4,2) {$|\phi_{x}'\>$};
    \draw[|->] (psi) -- node[left]{$U_{t_{1}\leftarrow t_{0}}$} (psit);
    \draw[|->] (phi) -- node[left]{$U_{t_{2}\leftarrow t_{1}}^{\dag}$} (phit);
    \draw[dotted] (t2Lt) -- (-0.75,4);
    \draw[dotted] (t2Lb) -- (-0.75,4);
    \draw[dotted] (tL) -- (-0.75,2);
    \draw[dotted] (t1L) -- (-0.75,0);
    \end{tikzpicture}
    \caption{In this figure, the algebras $\mA_0,\mA_1,$ and $\mA_2$ are all equal to some fiducial $\mA$, and the subscript is meant to label the time. A state $|\psi\>$ is initially prepared at time $t_{0}$. Then, it evolves from $t_{0}$ to $t_{2}$ via $U_{t_{2}\leftarrow t_{0}}$. Finally, a state $|\phi_{x}\>$ is measured at time $t_{2}$ via the POVM $\mathcal{E}$. The two-state of Refs.~\cite{ReAh95,AhVa08} at some intermediate time $t_{1}$ is obtained by forward-propagating $|\psi\>$ to $t_{1}$ via $U_{t_{1}\leftarrow t_{0}}$ and back-propagating $|\phi_{x}\>$ to $t_{1}$ via $U^{\dag}_{t_{2}\leftarrow t_{1}}$. Taking the outer product of these two defines the two-state $|\psi'\>\<\phi'_{x}|$. This two-state is precisely the state over time associated with our right bloom $(\mathcal{E}\circ\Ad_{U_{t_{2}\leftarrow t_{0}}})\star\rho\in\mA_{0}\otimes\mB$ after forward-propagating the latter via $U_{t_{1}\leftarrow t_{0}}$ to get an element of $\mA_{1}\otimes\mB$.}
    \label{fig:twostate}
\end{figure}

In the case where the Hamiltonian is not trivial and there are three times $t_{0}<t_{1}<t_{2}$ with $U_{t_{1}\leftarrow t_{0}}$, $U_{t_{2}\leftarrow t_{1}}$, and $U_{t_{2}\leftarrow t_{0}}$ describing the unitary evolution from $t_{0}$ to $t_{1}$, $t_{1}$ to $t_{2}$, and $t_{0}$ to $t_{2}$, respectively, (cf.\ Figure~\ref{fig:twostate}) then, upon setting
$\mathcal{E}':=\mathcal{E}\circ\Ad_{U_{t_{2}\leftarrow t_{0}}},$
$|\psi'\>:=U_{t_{1}\leftarrow t_{0}}|\psi\>$, and 
$|\phi_{x}'\>:=U^{\dag}_{t_{2}\leftarrow t_{1}}|\phi_{x}\>$, we obtain
\begin{align}
\mathcal{E}'\star\rho
&=(\rho\otimes1_{\C^{X}})\mathscr{D}[\mathcal{E}\circ\Ad_{U_{t_{2}\leftarrow t_{0}}}] \nonumber \\
&=(\rho U_{t_{2}\leftarrow t_{0}}^{\dag}\otimes 1_{\C^{X}})\mathscr{D}[\mathcal{E}](U_{t_{2}\leftarrow t_{0}}\otimes1_{\C^{X}}) \nonumber \\
&\cong\bigoplus_{x\in X}\rho U_{t_{2}\leftarrow t_{0}}^{\dag}M_{x}U_{t_{2}\leftarrow t_{0}} \nonumber \\
&=\bigoplus_{x\in X}|\psi\>\<\psi|U_{t_{1}\leftarrow t_{0}}^{\dag}U_{t_{2}\leftarrow t_{1}}^{\dag}|\phi_{x}\>\<\phi_{x}|U_{t_{2}\leftarrow t_{1}}U_{t_{1}\leftarrow t_{0}} \nonumber \\
&=\bigoplus_{x\in X}|\psi\>\<\psi'|\phi'_{x}\>\<\phi'_{x}|U_{t_{1}\leftarrow t_{0}},
\end{align}
where we have used Lemma~\ref{lem:EgamDE} in the second equality.
Therefore, 
\be
\left(\Ad_{U_{t_{1}\leftarrow t_{0}}}\otimes\id_{\C^{X}}\right)\big(\mathcal{E}'\star\rho\big)\cong\bigoplus_{x\in X}\<\psi'|\phi_{x}'\>|\psi'\>\<\phi_{x}'|,
\ee
The intuitive reason for the $\Ad_{U_{t_{1}\leftarrow t_{0}}}$ in front of $\mathcal{E}'\star\rho$ is because the two-state of Ref.~\cite{ReAh95} is viewed at time $t_{1}$ rather than the initial time $t_{0}$ or the final time $t_{2}$ (cf.\ Figure~\ref{fig:twostate}). As such, it is necessary to propagate our state over time from time $t_{0}$ to $t_{1}$ in order to obtain the two-state of Refs.~\cite{ReAh95,AhVa08}. The $\id_{\C^{X}}$ is needed to incorporate all the possible outcomes due to the measurement.

We now examine what Bayes maps look like for the previous setup when the unitary evolution is trivial.  
Since a Bayes map $\mathcal{E}^{\star}_{\rho}:\mB\to\mA$ must be trace-preserving, it must define an $m\times m$ matrix $\rho_{x}:=\mathcal{E}^{\star}_{\rho}(\delta_{x})$ such that $\tr(\rho_{x})=1$ for every $x\in X$. Write $p:=\mathcal{E}(\rho)\equiv\bigoplus_{x}p_{x}\equiv\sum_{x}p_{x}\delta_{x}$ as the associated probability distribution on $X$, which is given by $p_{x}=\tr(M_{x}\rho)$. Then  
\begin{align}
\rho_{x}
&=\rho\mathcal{E}^{*}\big(\mathcal{E}(\rho)^{-1}\delta_{x}\big)
=\rho\mathcal{E}^{*}\big(\mathcal{E}_{x}(\rho)^{-1}\delta_{x}\big) \nonumber \\
&=\rho\left(\frac{1}{\tr(M_{x}\rho)}\right)\mathcal{E}^*(\delta_{x})
=\frac{\rho M_{x}}{p_{x}}
\end{align}
whenever $p_x\neq 0$. 
Note that $\mathcal{E}^{\star}_{\rho}$ is not $\dag$-preserving, and it was necessary to use the version of Bayes' rule from Definition~\ref{defn:Bayesmap} to derive this result. 
Also, since $\mathcal{E}^{\star}_{\rho}:\C^{X}\to\matr_{m}$ is a linear trace-preserving map from a classical algebra to a matrix algebra, it can be viewed as a sort of ensemble (though not technically since each $\rho_{x}$ need not be a density matrix). 

In the special case where $\rho=|\psi\>\<\psi|$ corresponds to a pure state $|\psi\>$ and similarly $M_{x}=|\phi_{x}\>\<\phi_{x}|$ is a one-dimensional projection, this becomes
\be
\rho_{x}=\frac{|\psi\>\<\psi|\phi_{x}\>\<\phi_{x}|}{|\<\psi|\phi_{x}\>|^2}
=\frac{|\psi\>\<\phi_{x}|}{\<\phi_{x}|\psi\>},
\ee
which agrees with the \emph{normalized two-state} from~\cite[Equation~(5)]{ReAh95} and the \emph{transition matrix} from~\cite[equation (1.3)]{NTTTW21}.  
The \emph{expectation value} 
\be
\tr(\rho_{x}^{\dag}A)=\frac{\<\psi|A|\phi_{x}\>}{\<\psi|\phi_{x}\>}
\ee
of this normalized two-state on an observable $A\in\matr_{m}$ then agrees with the \emph{weak value} of Y.\ Aharonov, D.\ Albert, and L.\ Vaidman~\cite[Equation~(6)]{AAV88}~%
\footnote{When going from the Schr{\"o}dinger picture to the Heisenberg picture, the two-state $\rho_{x}$ gets sent to its weak value expectation functional sending an observable $A$ to $\tr(\rho_{x}^{\dag}A)$. This duality, which is obtained by using the Hilbert--Schmidt dual, is a conjugate-linear isomorphism with our convention. This is why $\rho_{x}^{\dag}$, as opposed to $\rho_{x}$, appears in the expression for weak values. Technically, $\tr(\rho_{x}A)$ reproduces the expression~\cite[Equation~(6)]{AAV88} exactly. This is due to a matter of convention, since, if we use the right bloom (to be defined in the next section), the two-state would instead be given by $\rho_{x}=\frac{M\rho_{x}}{\tr(M\rho_{x})}=\frac{|\phi_{x}\>\<\psi|}{\<\psi|\phi_{x}\>}$. The associated weak values for this two-state would then equal $\tr(\rho_{x}^{\dag}A)$, in complete agreement with \cite[Equation~(6)]{AAV88}.}.
To contrast this with the earlier PEM scenario where the Bayesian inverse (using the Leifer--Spekkens state over time) of a measurement was a preparation, the current choice of state over time function provides a Bayes map that in general lacks positivity and leads to a different interpretation in terms of weak values.

\subsection{The left bloom}
\label{sec:lbsot}

The cases $(r,s)=(1,0)$ and $(r,s)=(0,1)$ yield the bloom $\mathcal{E}\star\rho=\mathscr{D}[\mathcal{E}](\rho\otimes1_{\mB})$ from~\cite{FuPa21,FuPa22}, which will be referred to as the \emph{left bloom} here. Besides satisfying the classical limit axiom, the left bloom is also bilinear and associative. 

The Bayes maps associated with the left bloom are similar to those for the right bloom, so only the solution for the Bayes maps will be provided. The unique linear map solving Bayes's rule is given by 
\be
\mB\ni B\xmapsto{\mathcal{E}^{\star}_{\rho}}\mathcal{E}^*\big(B\mathcal{E}(\rho)^{-1}\big)\rho.
\ee
This map is CPTP under exactly the same conditions as in the right bloom case.

\subsubsection{The two-time correlator}

In the special case where $\mathcal{E}$ is the identity channel, the left bloom appears in the context of two-point quantum correlation functions~\cite{BDOV13}, where it is equal to the ideal two-point quantum correlator applied to $\rho$~%
\footnote{Indeed, when $\mA=\matr_{m}$, the channel state $\mathscr{D}[\id_{\mA}]$ equals the swap operator sending 
\unexpanded{$|i\>\otimes|j\>$} to  \unexpanded{$|j\>\otimes|i\>$} in Ref.~\cite{BDOV13}.}.
Although not a positive (nor $\dag$-preserving) map, the two-point correlator was given an operational interpretation in Refs.~\cite{BDOV13,BDOV14} by decomposing it into the sum of two $\dag$-preserving operations, one of which is the symmetric bloom state over time, and the other of which is proportional to a commutator (see Section~\ref{sec:opsymb} on the symmetric bloom for more details). 

In the slightly more general case where $\mathcal{E}$ describes unitary evolution after some time $t$, i.e., $\mathcal{E}=\Ad_{e^{-iHt}}$ for some Hamiltonian $H\in\mA^{\sa}$, then the left-bloom defines two-time correlators~\cite{BHR17}. 
Indeed, the \emph{time-dependent correlation} between two observables $A,B\in\mA^{\sa}$, is 
\be
\big\<B(t)A(0)\big\>_{\rho}:=
\tr\big(e^{iHt}Be^{-iHt}A\rho\big),
\ee
where $A(s):=\Ad_{e^{isH}}(A)$ for $s\in\R$ and similarly for $B$. 
If $\mathcal{E}\star\rho$ denotes the left bloom state over time, then
\be
\big\<B(t)A(0)\big\>_{\rho}
=
\tr\big((\mathcal{E}\star\rho)^{\dag}(A\otimes B)\big)
\ee
for all $A,B\in\mA^{\sa}$. Indeed, 
\begin{align}
\tr\big((\mathcal{E}\star\rho)^{\dag}(A\otimes& B)\big)
=\tr\big((\rho\otimes1_{\mA})\mathscr{D}[\Ad_{U^{\dag}}](A\otimes B)\big) \nonumber \\
&=\tr\Big(\big((\id\otimes\Ad_{U^{\dag}})\mu_{\mA}^{*}(1_{\mA})\big)(A\rho\otimes B)\Big) \nonumber \\
&=\tr\big(\mu_{\mA}^{*}(1_{\mA})(A\rho\otimes UBU^{\dag})\big) \nonumber \\
&=\tr(A\rho UBU^{\dag}) \nonumber \\
&=\tr(UBU^{\dag} A\rho),
\end{align}
where we have temporarily set $U=e^{itH}$. The third equality follows from self-adjointness of $\mathscr{D}[\Ad_{U^{\dag}}]$ and $\mu_{\mA}^{*}(1_{\mA})$, while the fourth equality follows from self-adjointness of $\mu_{\mA}^{*}(1_{\mA})$ and $1_{\mA}$.

\subsubsection{Restoring the symmetry between left and right blooms}

Note that if we denote the left and right bloom state over time functions by $\star_{\mathrm{L}}$ and $\star_{\mathrm{R}}$, respectively, then
\be
\mathcal{E}\star_{\mathrm{L}}\rho
=\tau\big(\widetilde{\mathcal{E}^{\star}_{\rho}}\star_{\mathrm{L}}\mathcal{E}(\rho)\big)
=\gamma\big(\mathcal{E}^{\star}_{\rho}\star_{\mathrm{R}}\mathcal{E}(\rho)\big),
\ee
where the first equality is Bayes' rule and the second equality is the relationship between the left and right bloom states over time~%
\footnote{In terms of reverse orientation state over time functions, 
this reads $\star_{L}^{\dag}=\star_{R}$, and a proof is given by
\begingroup
\allowdisplaybreaks
\[
\begin{split}
\mathcal{E}\star_{\mathrm{L}}^{\dag}\rho
&=\big((\dag\circ\mathcal{E}\circ\dag)\star_{\mathrm{L}}\rho\big)^{\dag}
=\big((\rho\otimes1_{\mB})\mathscr{D}[\dag\circ\mathcal{E}\circ\dag]\big)^{\dag}\\
&=\mathscr{D}[\dag\circ\mathcal{E}\circ\dag]^{\dag}(\rho\otimes 1_{\mB})
=\mathscr{D}[\mathcal{E}](\rho\otimes 1_{\mB})
=\mathcal{E}\star_{\mathrm{R}}\rho,
\end{split}
\]
\endgroup
where we used the second identity in Equation~\eqref{eq:EgammaDE} of Lemma~\ref{lem:EgamDE} in the second-last equality.}.
This relationship between left and right bloom and the connection to Bayes' rule resolves the open question of Leifer and Spekkens with regard to the apparent asymmetry between these two states over time (cf.\ last paragraph in~\cite[Section~VII.B.1]{LeSp13}). In fact, it \emph{is} a symmetry.

\subsection{Bayes maps for the \texorpdfstring{$(r,s)$}{(r,s)} family}
\label{sec:rsfambayes}

After going through several examples of the $(r,s)$ family, here we derive the general formula for the Bayes map. 
Since a Bayes map $\mathcal{E}^{\star}_{\rho}$ must satisfy 
$\gamma(\mathcal{E}\star\rho)=\big((\dag\circ\mathcal{E}^{\star}_{\rho}\circ\dag)\star\mathcal{E}(\rho)\big)^{\dag}$,
one can show (by similar calculations to the above) that Bayes' rule is equivalent to
\begin{align}
&s(1_{\mB}\otimes\rho^{r})\mathscr{D}[\mathcal{E}^{*}](1_{\mB}\otimes\rho^{r^{\perp}})+s^{\perp}(1_{\mB}\otimes\rho^{r^{\perp}})\mathscr{D}[\mathcal{E}^*](1_{\mB}\otimes\rho^{r}) \nonumber \\
&=s\big(\mathcal{E}(\rho)^{r^{\perp}}\otimes1_{\mA}\big)\mathscr{D}[\mathcal{E}^{\star}_{\rho}]\big(\mathcal{E}(\rho)^{r}\otimes1_{\mA}\big) \nonumber \\
&\quad+s^{\perp}\big(\mathcal{E}(\rho)^{r}\otimes1_{\mA}\big)\mathscr{D}[\mathcal{E}^{\star}_{\rho}]\big(\mathcal{E}(\rho)^{r^{\perp}}\otimes1_{\mA}\big)
\end{align}
by Equation~\eqref{eq:EgammaDE} of Lemma~\ref{lem:EgamDE}.
Introducing the same notation $q_{k}$ and $|w_{k}\>$ as in the case of the symmetric bloom, Bayes' rule is equivalent to 
\begin{align}
&\sum_{k,l}e_{kl}\otimes\Big(s\rho^{r}\mathcal{E}^{*}(e_{lk})\rho^{r^{\perp}}+s^{\perp}\rho^{r^{\perp}}\mathcal{E}^{*}(e_{lk})\rho^{r}\Big) \nonumber \\
&=\sum_{k,l}\Big(\big(sq_{k}^{r^{\perp}}q_{l}^{r}+s^{\perp}q_{k}^{r}q_{l}^{r^{\perp}}\big)e_{kl}\Big)\otimes\mathcal{E}^{\star}_{\rho}(e_{lk}),
\end{align}
where 
$
e_{kl}:=|w_{k}\>\<w_{l}|.
$
Since the $e_{kl}$ are linearly independent, this gives the solution
\be
\mathcal{E}^{\star}_{\rho}(e_{kl})=\frac{s\rho^{r}\mathcal{E}^{*}(e_{kl})\rho^{1-r}+(1-s)\rho^{1-r}\mathcal{E}^{*}(e_{kl})\rho^{r}}{sq_{k}^{r}q_{l}^{1-r}+(1-s)q_{k}^{1-r}q_{l}^{r}},
\ee
which can be linearly extended to define a Bayes map for the $(r,s)$ family. Note that $\mathcal{E}^{\star}_{\rho}$ is $\dag$-preserving when $s=\frac{1}{2}$. It is unclear to us if there is a manifestly basis-independent expression for this Bayes map.

\subsection{Generalized conditional expectations}
\label{sec:GCE}

In Ref.~\cite{Ts22}, M.\ Tsang argued for the interpretation of generalized conditional expectations as maps defining retrodiction, and hence as quantum analogues of Bayes' theorem. Generalized conditional expectations include the Petz recovery maps, the Bayes maps for the entire $(r,s)$ family, and many others (see also~\cite[Section 6.1]{Ha17} and the references therein). In the present section, we will show that \emph{all} of the generalized conditional expectations described by Tsang are indeed included in our framework of states over time and Bayes maps. Namely, from the data used in Refs.~\cite{Ha17,Ts22} to define generalized conditional expectations, we construct state over time functions, and then we show that (the Hilbert--Schmidt adjoint of) these generalized conditional expectation are Bayes maps for those states over time. 

Beyond what we have already mentioned above, our work also achieves several new insights.
\begin{enumerate}
\item
To derive our Bayes map, we do not need to minimize or extremize any distance measures as is done in~\cite[Section~III.B]{Ts22}. Namely, we only require a few consistency conditions and our notion of time-reversal symmetry.
\item
We complete Tsang's open remark/question (posed at the end of \cite[Section~III.B]{Ts22}) of relating his formalism to that of Horsman et al.~\cite{HHPBS17} and Leifer--Spekkens~\cite{LeSp13}. We are able to do this precisely because of our proposed relationship between states over time and Bayes' rules (the usages of both $\tau$ and $\widetilde{\mathcal{E}^{\star}_{\rho}}$ from Definition~\ref{defn:Bayesmap} are crucial here). 
\item
We provided an explicit formula for the Bayes map associated with the $(r,s)$ family. 
\end{enumerate}

In what follows, our notation will differ from Refs.~\cite{Ts22,Ha17} to avoid conflicting notation. For every multi-matrix algebra $\mA$, let $\Theta:\mathcal{S}(\mA)\to\Map(\mA,\mA)$ send a state $\rho$ to a linear map $\Theta_{\rho}:\mA\to\mA$ that satisfies the following axioms (see~\cite[Section~III.A]{Ts22} and~\cite[Section~6.1]{Ha17}) %
\footnote{Axiom~(T\ref{item:T3}) is weaker than what appears in Refs.~\cite{Ha17,Ts22}, where the latter demanded $\Theta_{\rho\otimes\rho'}=\Theta_{\rho}\otimes\Theta_{\rho'}$. See the next footnote for the significance of this.}.
\begin{enumerate}[(T1)]
\item
\label{item:T1}
$\Theta_{\rho}(A)=\rho A$ whenever $A\in\mA$ satisfies $[\rho,A]=0$.
\item
\label{item:T2}
$\Theta_{\rho}\circ\mathcal{E}^{-1}=\mathcal{E}^{-1}\circ\Theta_{\mathcal{E}(\rho)}$ whenever $\mathcal{E}$ is a $*$-isomorphism (eg.\ $\Theta_{\rho}\circ\Ad_{U^{\dag}}=\Ad_{U^{\dag}}\circ\Theta_{U\rho U^{\dag}}$ for all unitaries $U$)
\item
\label{item:T3}
If $\mA'$ is another multi-matrix algebra, then $\Theta_{\rho\otimes\rho'}=\Theta_{\rho}\otimes\Theta_{\rho'}$ for all states $\rho\in\mathcal{S}(\mA)$ and $\rho'\in\mathcal{S}(\mA')$, provided that at least one of $\rho$ or $\rho'$ are in the center of $\mA$ or $\mA'$, respectively.
\item
\label{item:T4}
$\Theta_{\rho}$ is self-adjoint and positive semi-definite with respect to the Hilbert--Schmidt inner product. 
\end{enumerate}
We call such a map $\Theta$ a \define{state-rendering map}. 
Note that in our notation we have suppressed the dependence of $\Theta$ on the algebra $\mA$, as it should be evident from the context.

\begin{remark}
The fact that a map $\Phi:\mA\to\mA$ is self-adjoint with respect to the Hilbert--Schmidt inner product \emph{does not} imply that $\Phi$ is $\dag$-preserving.  
An example is the map $\Phi(A)=\rho A$ for a state $\rho\in\mathcal{S}(\mA)$ that is not in the center of $\mA$ (when $\mA=\matr_{m}$, this means $\rho$ is not the maximally mixed state).
Conversely, if $\Phi$ is $\dag$-preserving, this does not imply $\Phi$ is self-adjoint. 
An example is the map $\Phi=\Ad_{U}$ for a  unitary $U\in\mA$ not proportional to the identity.
Similarly, the fact that $\Phi$ is positive semi-definite with respect to the Hilbert--Schmidt inner product \emph{does not} mean that $\Phi$ is a positive map in the sense that it takes positive elements to positive elements.
An example is $\Phi(A)=\rho A$ for a state $\rho$ that is not maximally mixed.
Conversely, if $\Phi$ is positive, this does not imply $\Phi$ is positive semi-definite. 
An example is $\Phi(A)=A^{\mathrm{T}}$, the transpose map on (complex) matrix algebras.
\end{remark}

\begin{theorem}
\label{thm:Tsot}
Given a state-rendering map $\Theta$, the assignment $\star$ sending a trace-preserving linear map $\mathcal{E}:\mA\to\mB$ and $\rho\in\mathcal{S}(\mA)$ to 
\be
\mathcal{E}\star\rho:=(\Theta_{\rho}\otimes\id_{\mB})\big(\mathscr{D}[\mathcal{E}]\big)
\ee
defines a state over time function that is process-linear and satisfies the classical limit axiom. 
\end{theorem}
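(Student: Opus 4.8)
The plan is to verify in turn the three properties claimed: that $\star$ has the correct marginals (so that it is a bona fide extended state over time function), that it is process-linear, and that it obeys the classical limit axiom. Process-linearity comes essentially for free, since $\mathscr{D}[\mathcal{E}]=(\id_{\mA}\otimes\mathcal{E})\big(\mu_{\mA}^{*}(1_{\mA})\big)$ depends linearly on $\mathcal{E}$ and $\Theta_{\rho}\otimes\id_{\mB}$ is linear, so $(\lambda\mathcal{E}+(1-\lambda)\mathcal{F})\star\rho=\lambda(\mathcal{E}\star\rho)+(1-\lambda)(\mathcal{F}\star\rho)$ for all $\lambda\in\C$ and $\mathcal{E},\mathcal{F}\in\TP(\mA,\mB)$, with no use of the $\Theta$-axioms. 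For the remaining two items it is convenient first to extract from (T\ref{item:T1})--(T\ref{item:T4}) two elementary consequences: (i) $\Theta_{\rho}(1_{\mA})=\rho$, which is immediate from (T\ref{item:T1}) because $[\rho,1_{\mA}]=0$; and (ii) $\tr\big(\Theta_{\rho}(A)\big)=\tr(\rho A)$ for all $A\in\mA$, obtained by pairing the Hilbert--Schmidt self-adjointness of $\Theta_{\rho}$ from (T\ref{item:T4}) against $1_{\mA}$ and then using (i) together with $\rho^{\dag}=\rho$. On the channel-state side I will use two identities, both routine computations from the defining formula~\eqref{eq:Jcs}: $\tr_{\mB}\big(\mathscr{D}[\mathcal{E}]\big)=1_{\mA}$ for every trace-preserving $\mathcal{E}$, and $\big(\tr(\rho\;\cdot\;)\otimes\id_{\mB}\big)\big(\mathscr{D}[\mathcal{E}]\big)=\mathcal{E}(\rho)$.

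The marginals then follow formally from the elementary partial-trace identities $\tr_{\mB}\circ(\Phi\otimes\id_{\mB})=\Phi\circ\tr_{\mB}$ and $\tr_{\mA}\circ(\Phi\otimes\id_{\mB})=(\tr\circ\Phi)\otimes\id_{\mB}$, valid for any linear map $\Phi:\mA\to\mA$. Indeed,
\[
\tr_{\mB}(\mathcal{E}\star\rho)=\Theta_{\rho}\big(\tr_{\mB}\mathscr{D}[\mathcal{E}]\big)=\Theta_{\rho}(1_{\mA})=\rho,
\]
while
\[
\tr_{\mA}(\mathcal{E}\star\rho)=\big((\tr\circ\Theta_{\rho})\otimes\id_{\mB}\big)\big(\mathscr{D}[\mathcal{E}]\big)=\big(\tr(\rho\;\cdot\;)\otimes\id_{\mB}\big)\big(\mathscr{D}[\mathcal{E}]\big)=\mathcal{E}(\rho),
\]
so $\star$ is a state over time function in the sense of Definition~\ref{defn:ESOT}.

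The classical limit axiom is the step I expect to be the main obstacle, because its hypothesis $[\mathscr{D}[\mathcal{E}],\rho\otimes1_{\mB}]=0$ is a commutation relation between \emph{elements} of $\mA\otimes\mB$, whereas (T\ref{item:T1}) only controls $\Theta_{\rho}$ on \emph{operators in $\mA$} commuting with $\rho$. The bridge will be the spectral decomposition $\rho=\sum_{\alpha}\lambda_{\alpha}P_{\alpha}$ into distinct eigenvalues $\lambda_{\alpha}$ and orthogonal projections $P_{\alpha}$: since $\{P_{\alpha}\otimes1_{\mB}\}$ is precisely the family of spectral projections of $\rho\otimes1_{\mB}$, the hypothesis forces $\mathscr{D}[\mathcal{E}]=\sum_{\alpha}\xi_{\alpha}$ with $\xi_{\alpha}:=(P_{\alpha}\otimes1_{\mB})\,\mathscr{D}[\mathcal{E}]\,(P_{\alpha}\otimes1_{\mB})\in(P_{\alpha}\mA P_{\alpha})\otimes\mB$ and $(\rho\otimes1_{\mB})\xi_{\alpha}=\lambda_{\alpha}\xi_{\alpha}=\xi_{\alpha}(\rho\otimes1_{\mB})$. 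The key point is that every $A\in P_{\alpha}\mA P_{\alpha}$ satisfies $\rho A=\lambda_{\alpha}A=A\rho$, so $[\rho,A]=0$ and hence $\Theta_{\rho}(A)=\rho A=\lambda_{\alpha}A$ by (T\ref{item:T1}); writing $\xi_{\alpha}$ as a finite sum $\sum_{k}A_{k}\otimes B_{k}$ with $A_{k}\in P_{\alpha}\mA P_{\alpha}$ then gives $(\Theta_{\rho}\otimes\id_{\mB})(\xi_{\alpha})=\lambda_{\alpha}\xi_{\alpha}=(\rho\otimes1_{\mB})\xi_{\alpha}$. Summing over $\alpha$ and invoking the hypothesis once more yields $\mathcal{E}\star\rho=(\rho\otimes1_{\mB})\mathscr{D}[\mathcal{E}]=\mathscr{D}[\mathcal{E}](\rho\otimes1_{\mB})$, which is exactly the classical limit axiom. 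I note, finally, that axioms (T\ref{item:T2}) and (T\ref{item:T3}) play no role in this proof; they enter only later, in identifying the Bayes maps of $\star$ with the generalized conditional expectations of Ref.~\cite{Ts22}.
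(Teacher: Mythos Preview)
Your proof is correct, and for the marginals and process-linearity it is essentially the paper's argument, written in a slightly more coordinate-free way (the paper expands in matrix units $E_{ij}^{(m)}$ to verify the $\mB$-marginal, whereas you first isolate the identity $\tr\circ\Theta_{\rho}=\tr(\rho\;\cdot\;)$ from (T\ref{item:T1}) and (T\ref{item:T4}) and then apply it directly to $\mathscr{D}[\mathcal{E}]$).

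The genuinely different step is the classical limit. The paper proves it by rewriting $\id_{\mB}=\tr(1_{\mB})\,\Theta_{1_{\mB}/\tr(1_{\mB})}$ via (T\ref{item:T1}), then invoking axiom~(T\ref{item:T3}) to collapse $\Theta_{\rho}\otimes\Theta_{1_{\mB}/\tr(1_{\mB})}$ into $\Theta_{\rho\otimes 1_{\mB}/\tr(1_{\mB})}$, and finally applying (T\ref{item:T1}) on $\mA\otimes\mB$ using the hypothesis $[\rho\otimes1_{\mB},\mathscr{D}[\mathcal{E}]]=0$. Your route avoids (T\ref{item:T3}) entirely: you use the spectral decomposition $\rho=\sum_{\alpha}\lambda_{\alpha}P_{\alpha}$ to block-diagonalize $\mathscr{D}[\mathcal{E}]=\sum_{\alpha}\xi_{\alpha}$ with $\xi_{\alpha}\in(P_{\alpha}\mA P_{\alpha})\otimes\mB$, observe that every first tensor factor of each $\xi_{\alpha}$ already commutes with $\rho$, and then apply (T\ref{item:T1}) purely inside $\mA$. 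This is more elementary and, as you note, shows that (T\ref{item:T3}) is not needed for Theorem~\ref{thm:Tsot}; the paper's approach, on the other hand, makes the role of the multiplicativity axiom (T\ref{item:T3}) transparent and generalizes more readily to settings where a spectral decomposition of $\rho$ is less convenient.
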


We include the proof of this theorem here to illustrate how the axioms of state-rendering maps are related to those of state over time functions.

\begin{proof} 
We first check that $\mathcal{E}\star\rho$ has the correct marginals, and then we prove the latter two claims. 

The marginal on $\mA$ is 
\begin{align}
\tr_{\mB}(\mathcal{E}\star\rho)&=\big(\Theta_{\rho}\otimes(\tr\circ\mathcal{E})\big)\big(\mu_{\mA}^{*}(1_{\mA})\big) \nonumber \\
&=\big(\Theta_{\rho}\otimes\tr\big)\big(\mu_{\mA}^{*}(1_{\mA})\big) \nonumber \\
&=\Theta_{\rho}(1_{\mA}) \nonumber \\
&=\rho.
\end{align}
The first equality follows from the definition of the channel state and the definition of the partial trace $\tr_{\mB}$. The second equality follows from the fact that $\mathcal{E}$ is trace-preserving. The third equality is an identity due to the fact that $\mu_{A}^*$ is a (non-positive) broadcasting map (see the axioms of a quantum Markov category in Refs.~\cite{PaBayes,PaQPL21}). The last equality follows from axiom~(T\ref{item:T1}). 

In the case where $\mA=\matr_{m}$, the marginal on $\mB$ is 
\begin{align}
\tr_{\mA}(\mathcal{E}\star\rho)
&=\sum_{i,j}\tr\big(\Theta_{\rho}(E_{ij}^{(m)})\big)\mathcal{E}(E_{ji}^{(m)}) \nonumber \\
&=\sum_{i,j}\tr\big(\Theta_{\rho}^{*}(1_{\mA})^{\dag}E_{ij}^{(m)}\big)\mathcal{E}(E_{ji}^{(m)}) \nonumber \\
&=\sum_{i,j}\tr\big(\Theta_{\rho}(1_{\mA})^{\dag}E_{ij}^{(m)}\big)\mathcal{E}(E_{ji}^{(m)}) \nonumber \\
&=\sum_{i,j}\tr\big(\rho E_{ij}^{(m)})\mathcal{E}(E_{ji}^{(m)}) \nonumber \\
&=\mathcal{E}(\rho).
\end{align}
The second equality follows from the definition of the Hilbert--Schmidt inner product.
The third equality follows from the self-adjointness of $\Theta_{\rho}$, which is part of axiom~(T\ref{item:T4}). The fourth equality follows from axiom~(T\ref{item:T1}) and the fact that $\rho$ is hermitian. 
The last equality follows from the fact that $\tr\big(\rho E_{ij}^{(m)})$ is precisely the $ji$-matrix entry of $\rho$. The proof when $\mA$ is an arbitrary multi-matrix algebra is left as an exercise. 

Process-linearity of $\star$ follows  from linearity of $\mathscr{D}$ and $\Theta_{\rho}$. 

Finally, to see that the classical limit axiom for $\star$ holds, suppose $[\rho\otimes1_{\mB},\mathscr{D}[\mathcal{E}]]=0$. Then 
\begin{equation}
\label{eq:rbe}
\left[\rho\otimes\frac{1_{\mB}}{\tr(1_{\mB})},\mathscr{D}[\mathcal{E}]\right]=0
\end{equation}
and 
\begin{align}
\mathcal{E}\star\rho&=\left(\Theta_{\rho}\otimes\Big(\tr(1_{\mB})\Theta_{\frac{1_{\mB}}{\tr(1_{\mB})}}\Big)\right)\big(\mathscr{D}[\mathcal{E}]\big) \nonumber \\
&=\tr(1_{\mB})\Theta_{\rho\otimes\frac{1_{\mB}}{\tr(1_{\mB})}}\big(\mathscr{D}[\mathcal{E}]\big) \nonumber \\
&=(\rho\otimes1_{\mB})\mathscr{D}[\mathcal{E}].
\end{align}
The first equality follows from axiom~(T\ref{item:T1}). 
The second equality follows from linearity of the tensor product and axiom~(T\ref{item:T3}).
The third equality follows from Equation~\eqref{eq:rbe} and axiom~(T\ref{item:T1}).
\end{proof}

Many of the examples of state over time functions we gave earlier are special cases of this construction 
\footnote{As mentioned briefly in the previous footnote, our axiom~(T\ref{item:T3}) is weaker than in Refs.~\cite{Ha17,Ts22}. If we had used $\Theta_{\rho\otimes\rho'}=\Theta_{\rho}\otimes\Theta_{\rho'}$ as in Refs.~\cite{Ha17,Ts22}, then the symmetric bloom and $(r,s)$ family, with $s\in(0,1)$, would \emph{not} satisfy this axiom (however, it does hold if $s\in\{0,1\}$). It suffices to illustrate this in the case of the symmetric bloom. A counter-example can be obtained for $\mA=\matr_{2}=\mA'$ by taking $\rho=\left[\begin{smallmatrix}p&0\\0&1-p\end{smallmatrix}\right]$, $\rho'=\left[\begin{smallmatrix}p'&0\\0&1-p'\end{smallmatrix}\right]$, $A=\left[\begin{smallmatrix}0&1\\1&0\end{smallmatrix}\right]=A'$, and $p,p'\in[0,\frac{1}{2})\cup(\frac{1}{2},1]$. Indeed, $\Theta_{\rho\otimes\rho'}^{\mathrm{J}}(A\otimes A')\ne\Theta_{\rho}^{\mathrm{J}}(A)\otimes\Theta_{\rho'}^{\mathrm{J}}(A')$.}.
\begin{enumerate}
\item
The right and left blooms are obtained from $\Theta_{\rho}^{\mathrm{R}}(A):=\rho A$ and $\Theta_{\rho}^{\mathrm{L}}(A):=A\rho$, respectively. 
\item
The Leifer--Spekkens state over time is obtained from $\Theta_{\rho}^{\mathrm{LS}}(A):=\sqrt{\rho}A\sqrt{\rho}$. 
\item
The symmetric bloom state over time is obtained from $\Theta_{\rho}^{\mathrm{J}}(A):=\frac{1}{2}\{\rho,A\}\equiv\frac{1}{2}(\rho A+A\rho)$. 
\item
More generally, the $(r,s)$ family is obtained from $\Theta_{\rho}^{(r,s)}(A):=s\rho^{r}A\rho^{1-r}+(1-s)\rho^{1-r}A\rho^{r}$. 
\end{enumerate}

We now introduce the generalized conditional expectation from~\cite[Equation~(3.22)]{Ts22} (see also~\cite[Equation~(6.21)]{Ha17} and Ref.~\cite{Pe88b}), which was derived in Ref.~\cite{Ts22} by minimizing a certain inner product associated with a state-rendering map $\Theta$.

Given a channel $\mathcal{E}:\mA\to\mB$, a state $\rho\in\mathcal{S}(\mA)$, and a state-rendering map $\Theta$, a \define{generalized conditional expectation} is a linear map $\mathcal{E}_{\Theta,\rho}:\mA\to\mB$ such that 
\begin{equation}
\label{eq:GCE}
\mathcal{E}\circ\Theta_{\rho}=\Theta_{\mathcal{E}(\rho)}\circ\mathcal{E}_{\Theta,\rho}.
\end{equation}

\begin{theorem}
\label{thm:GCEandBayes}
Let $\mathcal{E}_{\Theta,\rho}$ be a generalized conditional expectation as in Equation~\eqref{eq:GCE}. Then $\mathcal{E}_{\Theta,\rho}^*$ is a Bayes map for $(\mathcal{E},\rho)$ associated with the state over time $\star$ defined in Theorem~\ref{thm:Tsot}.
\end{theorem}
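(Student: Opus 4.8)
The plan is to check directly that $\mathcal{E}^{\star}_{\rho}:=\mathcal{E}_{\Theta,\rho}^{*}$ satisfies the identity in Definition~\ref{defn:Bayesmap} for the state over time function $\star$ constructed in Theorem~\ref{thm:Tsot}. Since the quantum time-reversal map $\tau$ is a conjugate-linear involution, Bayes' rule $\mathcal{E}\star\rho=\tau\big(\widetilde{\mathcal{E}^{\star}_{\rho}}\star\mathcal{E}(\rho)\big)$ is equivalent to $\tau(\mathcal{E}\star\rho)=\widetilde{\mathcal{E}^{\star}_{\rho}}\star\mathcal{E}(\rho)$, and it is this form that I would verify. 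Throughout I would use the rewriting $\mathcal{E}\star\rho=(\Theta_{\rho}\otimes\id_{\mB})\big(\mathscr{D}[\mathcal{E}]\big)=(\Theta_{\rho}\otimes\mathcal{E})\big(\mu_{\mA}^{*}(1_{\mA})\big)$, and I would run the computation for matrix algebras $\mA=\matr_{m}$, $\mB=\matr_{n}$ in terms of matrix units $E_{ij}^{(m)}$, leaving the reduction of the general multi-matrix case to the reader as was done for Theorem~\ref{thm:Tsot}.

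First I would simplify the right-hand side. Writing $\mathcal{G}:=\mathcal{E}_{\Theta,\rho}$, so that $\widetilde{\mathcal{E}^{\star}_{\rho}}=\dag\circ\mathcal{G}^{*}\circ\dag$, I would combine the identity $\mathscr{D}[\dag\circ\Phi\circ\dag]=\mathscr{D}[\Phi]^{\dag}$ (the computation already appearing in the footnote of Section~\ref{sec:lbsot}) with the relationship between $\mathscr{D}[\Phi^{*}]$, the swap $\gamma$, and $\dag$ recorded in Lemma~\ref{lem:gamDE} to obtain $\mathscr{D}[\widetilde{\mathcal{E}^{\star}_{\rho}}]=\gamma\big(\mathscr{D}[\mathcal{G}]\big)$. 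Commuting $\Theta_{\mathcal{E}(\rho)}\otimes\id_{\mA}$ past $\gamma$ and then invoking the defining property~\eqref{eq:GCE} of the generalized conditional expectation in the form $\Theta_{\mathcal{E}(\rho)}\circ\mathcal{G}=\mathcal{E}\circ\Theta_{\rho}$, this yields
\[
\widetilde{\mathcal{E}^{\star}_{\rho}}\star\mathcal{E}(\rho)=\big(\Theta_{\mathcal{E}(\rho)}\otimes\id_{\mA}\big)\Big(\gamma\big(\mathscr{D}[\mathcal{G}]\big)\Big)=\sum_{i,j}\mathcal{E}\big(\Theta_{\rho}(E_{ij}^{(m)})\big)\otimes E_{ji}^{(m)}.
\]

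Next I would simplify the left-hand side and match. Expanding $\mathcal{E}\star\rho=\sum_{i,j}\Theta_{\rho}(E_{ij}^{(m)})\otimes\mathcal{E}(E_{ji}^{(m)})$, applying $\tau$, and using conjugate-linearity together with the fact that $\mathcal{E}$ is $\dag$-preserving gives $\tau(\mathcal{E}\star\rho)=\sum_{i,j}\mathcal{E}(E_{ij}^{(m)})\otimes\Theta_{\rho}(E_{ij}^{(m)})^{\dag}$. The remaining point --- and the only place where the self-adjointness axiom~(T\ref{item:T4}) is used --- is that $\Theta_{\rho}$ being self-adjoint with respect to the Hilbert--Schmidt inner product is exactly the statement $\overline{[\Theta_{\rho}(E_{ij}^{(m)})]_{kl}}=[\Theta_{\rho}(E_{kl}^{(m)})]_{ij}$, which lets me rewrite $\Theta_{\rho}(E_{ij}^{(m)})^{\dag}=\sum_{k,l}[\Theta_{\rho}(E_{kl}^{(m)})]_{ij}E_{lk}^{(m)}$, pull $\mathcal{E}$ out of the resulting double sum, and recover $\sum_{k,l}\mathcal{E}\big(\Theta_{\rho}(E_{kl}^{(m)})\big)\otimes E_{lk}^{(m)}$. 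Comparison with the expression from the previous paragraph completes the verification of Bayes' rule.

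Finally I would confirm that $\mathcal{E}^{\star}_{\rho}$ is a genuine Bayes map, i.e.\ a trace-preserving linear map: evaluating~\eqref{eq:GCE} at $1_{\mA}$ and using axiom~(T\ref{item:T1}) gives $\Theta_{\mathcal{E}(\rho)}\big(\mathcal{G}(1_{\mA})\big)=\mathcal{E}\big(\Theta_{\rho}(1_{\mA})\big)=\mathcal{E}(\rho)=\Theta_{\mathcal{E}(\rho)}(1_{\mB})$, so $\mathcal{G}$ is unital (using invertibility of $\Theta_{\mathcal{E}(\rho)}$, which is implicit in the existence of the generalized conditional expectation and holds whenever $\mathcal{E}(\rho)$ is faithful), hence $\mathcal{G}^{*}=\mathcal{E}^{\star}_{\rho}$ is trace-preserving; the marginal conditions of Definition~\ref{defn:ESOT} then hold automatically, as noted after Definition~\ref{defn:saxs}. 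The main obstacle I anticipate is purely bookkeeping: correctly tracking the conjugate-linearity of $\tau$ and the several swap/dagger identities for channel states, and carrying out the self-adjointness manipulation in the previous step without getting lost in matrix-entry indices. A cleaner, basis-free rendering would package that step as the observation that $\mu_{\mA}^{*}(1_{\mA})$ is self-adjoint and $\gamma$-symmetric, so that self-adjointness of $\Theta_{\rho}$ interchanges ``applying $\Theta_{\rho}$ to the left tensor leg of $\mu_{\mA}^{*}(1_{\mA})$'' with ``applying $\dag$ and then $\Theta_{\rho}$ to the right tensor leg.''
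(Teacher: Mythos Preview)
Your proposal is correct and follows essentially the same strategy as the paper's proof: both verify the Bayes' rule equation by unwinding both sides through channel-state identities, invoking the defining relation~\eqref{eq:GCE} of the generalized conditional expectation, and using self-adjointness of $\Theta_{\rho}$ from axiom~(T\ref{item:T4}). The organizational difference is that the paper works in a basis-free manner, first isolating the auxiliary identity $\big(\widetilde{\mathcal{E}}\star\rho\big)^{\dag}=\big((\dag\circ\Theta_{\rho}\circ\dag)\otimes\id_{\mB}\big)\mathscr{D}[\mathcal{E}]$ as a lemma and then using the Hilbert--Schmidt adjoint form $\Theta_{\rho}\circ\mathcal{E}^{*}=\mathcal{E}_{\Theta,\rho}^{*}\circ\Theta_{\mathcal{E}(\rho)}$ of~\eqref{eq:GCE}, whereas you expand in matrix units and use~\eqref{eq:GCE} directly; your closing remark about the basis-free repackaging is precisely what the paper does. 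Two minor notes: the identity $\mathscr{D}[\dag\circ\Phi\circ\dag]=\mathscr{D}[\Phi]^{\dag}$ and the swap relation you invoke are recorded in Lemma~\ref{lem:EgamDE} (not Lemma~\ref{lem:gamDE}, which is stated only for CP maps), and your verification that $\mathcal{E}_{\Theta,\rho}^{*}$ is trace-preserving is a welcome addition that the paper's proof leaves implicit.
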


The proof of Theorem~\ref{thm:GCEandBayes} is provided in Appendix~\ref{app:B}. Interestingly, we made no use of axiom~(T\ref{item:T2}) in any of our results thus far. The importance of this axiom is due to the next fact, the proof of which is deferred to Appendix~\ref{app:B}. It shows that if a quantum channel is a $*$-isomorphism, then the inverse is a Bayesian inverse. 

\begin{proposition}
\label{prop:TsIn}
Let $\Theta$ be a state-rendering map and $\star$ its associated state over time function. 
If $\mathcal{E}:\mA\to\mB$ is a $*$-isomorphism, 
then $\mathcal{E}^{-1}$ is a Bayesian inverse of $(\mathcal{E},\rho)$ for every $\rho\in\mathcal{S}(\mA)$.
\end{proposition}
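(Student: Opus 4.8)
The plan is to obtain the statement as a quick corollary of Theorem~\ref{thm:GCEandBayes}, using the observation that a $*$-isomorphism is, tautologically, its own generalized conditional expectation. Before doing so I would record two elementary facts about a $*$-isomorphism $\mathcal{E}\colon\mA\to\mB$ of multi-matrix algebras: it permutes the matrix blocks of $\mA$ onto equal-sized blocks of $\mB$ and conjugates by unitaries inside them, hence is unital and trace-preserving (so it is a quantum channel, as is $\mathcal{E}^{-1}$); and trace-preservation together with multiplicativity gives $\tr\big(\mathcal{E}(A)^{\dag}\mathcal{E}(C)\big)=\tr\big(\mathcal{E}(A^{\dag}C)\big)=\tr(A^{\dag}C)$ for all $A,C\in\mA$, whence $\mathcal{E}^{*}\circ\mathcal{E}=\id_{\mA}$, i.e.\ $\mathcal{E}^{*}=\mathcal{E}^{-1}$. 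Finally, $\mathcal{E}^{-1}$ is $\dag$-preserving, so $\widetilde{\mathcal{E}^{-1}}=\mathcal{E}^{-1}$.

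The core of the argument is then to identify the relevant generalized conditional expectation. Axiom~(T\ref{item:T2}) applied to $\mathcal{E}$ reads $\Theta_{\rho}\circ\mathcal{E}^{-1}=\mathcal{E}^{-1}\circ\Theta_{\mathcal{E}(\rho)}$, and composing on the right with $\mathcal{E}$ gives $\mathcal{E}\circ\Theta_{\rho}=\Theta_{\mathcal{E}(\rho)}\circ\mathcal{E}$, which is exactly Equation~\eqref{eq:GCE} with the choice $\mathcal{E}_{\Theta,\rho}:=\mathcal{E}$. Thus $\mathcal{E}$ itself is a generalized conditional expectation for $(\mathcal{E},\rho)$, and Theorem~\ref{thm:GCEandBayes} yields that $\mathcal{E}_{\Theta,\rho}^{*}=\mathcal{E}^{*}=\mathcal{E}^{-1}$ is a Bayes map for $(\mathcal{E},\rho)$ with respect to $\star$. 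Since $\widetilde{\mathcal{E}^{-1}}=\mathcal{E}^{-1}$, the defining identity of a Bayes map (Definition~\ref{defn:Bayesmap}) collapses to $\mathcal{E}\star\rho=\tau\big(\mathcal{E}^{-1}\star\mathcal{E}(\rho)\big)$, i.e.\ to Bayes' rule of Definition~\ref{defn:Bayesianinverse}; because $\mathcal{E}^{-1}$ is moreover CPTP, it is a genuine Bayesian inverse, which is the claim (for every $\rho\in\mathcal{S}(\mA)$, since none of the ingredients used a faithfulness hypothesis).

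I do not expect a serious obstacle: the real content is already packaged in axiom~(T\ref{item:T2}) and Theorem~\ref{thm:GCEandBayes}, so the work here is organizational --- confirming $\mathcal{E}^{*}=\mathcal{E}^{-1}$ and unitality for a $*$-isomorphism of multi-matrix algebras, and checking that the tilde operation is trivial on $\dag$-preserving maps so that a Bayes map is upgraded to a Bayesian inverse. For completeness one could also give a self-contained proof bypassing Theorem~\ref{thm:GCEandBayes}: using $\mathscr{D}[\mathcal{E}^{-1}]=\mathscr{D}[\mathcal{E}^{*}]=\gamma(\mathscr{D}[\mathcal{E}])$ (Lemma~\ref{lem:gamDE}), the intertwining $(\Theta_{\mathcal{E}(\rho)}\otimes\id_{\mA})\circ\gamma=\gamma\circ(\id_{\mA}\otimes\Theta_{\mathcal{E}(\rho)})$, the fact that $\tau\circ\gamma$ equals the adjoint operation on $\mA\otimes\mB$, axiom~(T\ref{item:T2}) in the form $\Theta_{\mathcal{E}(\rho)}=\mathcal{E}\circ\Theta_{\rho}\circ\mathcal{E}^{-1}$, and $\dag$-preservation of $\id_{\mA}\otimes\mathcal{E}$, one reduces Bayes' rule to the single identity $(\Theta_{\rho}\otimes\id_{\mA})\big(\mu_{\mA}^{*}(1_{\mA})\big)=\big[(\id_{\mA}\otimes\Theta_{\rho})\big(\mu_{\mA}^{*}(1_{\mA})\big)\big]^{\dag}$; writing $\mu_{\mA}^{*}(1_{\mA})=\sum_{i,j}E_{ij}\otimes E_{ji}$ in matrix units and $\Theta_{\rho}(E_{ij})=\sum_{k,l}t^{ij}_{kl}E_{kl}$, this identity is exactly the relation $\overline{t^{ij}_{kl}}=t^{kl}_{ij}$, i.e.\ the self-adjointness of $\Theta_{\rho}$ with respect to the Hilbert--Schmidt inner product demanded by axiom~(T\ref{item:T4}) (the general multi-matrix case being identical with an extra preserved block index). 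So that axiom, unused until now, is precisely what makes the direct route work.
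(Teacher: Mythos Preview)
Your argument is correct, but it follows a genuinely different route from the paper. The paper gives a direct computation: it expands both $\mathcal{E}\star\rho$ and $\tau\big(\widetilde{\mathcal{E}^{\star}_{\rho}}\star\mathcal{E}(\rho)\big)$ using the channel-state identity $\mathscr{D}[\mathcal{E}]=(\mathcal{E}^{*}\otimes\id_{\mB})\big(\mu_{\mB}^{*}(1_{\mB})\big)$ from Lemma~\ref{lem:gamDE}, then invokes $\mathcal{E}^{*}=\mathcal{E}^{-1}$ together with axiom~(T\ref{item:T2}) in the form $\Theta_{\rho}\circ\mathcal{E}^{-1}=\mathcal{E}^{-1}\circ\Theta_{\mathcal{E}(\rho)}$ to match the two expressions. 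You instead package the content as a corollary of Theorem~\ref{thm:GCEandBayes}: axiom~(T\ref{item:T2}) is precisely the statement that $\mathcal{E}$ is its own generalized conditional expectation, so the theorem hands you $\mathcal{E}^{*}=\mathcal{E}^{-1}$ as a Bayes map, and $\dag$-preservation plus complete positivity of $\mathcal{E}^{-1}$ upgrade it to a Bayesian inverse. Your route is more economical and makes transparent that (T\ref{item:T2}) is exactly the $*$-isomorphism case of Equation~\eqref{eq:GCE}; the paper's route is self-contained and does not presuppose the proof of Theorem~\ref{thm:GCEandBayes}. One small wording fix: to pass from $\Theta_{\rho}\circ\mathcal{E}^{-1}=\mathcal{E}^{-1}\circ\Theta_{\mathcal{E}(\rho)}$ to $\mathcal{E}\circ\Theta_{\rho}=\Theta_{\mathcal{E}(\rho)}\circ\mathcal{E}$ you need to compose on \emph{both} sides with $\mathcal{E}$, not only on the right. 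Your alternative direct sketch, reducing Bayes' rule to the Hilbert--Schmidt self-adjointness of $\Theta_{\rho}$ via matrix units, is yet a third valid route, distinct from the paper's; it nicely isolates the role of axiom~(T\ref{item:T4}), whereas the paper's direct proof leans on~(T\ref{item:T2}) (with (T\ref{item:T4}) entering implicitly through the formula for $\tau(\widetilde{\mathcal{E}^{\star}_{\rho}}\star\mathcal{E}(\rho))$ established earlier).
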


\begin{remark}
\label{rmk:TsInv}
Proposition~\ref{prop:TsIn} and Theorem~\ref{thm:GCEandBayes} provide additional illustrations of the importance of using the dagger and $\widetilde{\mathcal{E}}^{\star}_{\rho}$ in the definition of Bayes' rule. This can be seen more clearly in their proofs.
\end{remark}

\section{Discussion and Conclusions}

In this work, we provided a rigorous definition for a Bayesian inverse using the notion of a state over time~\cite{HHPBS17,FuPa22} in terms of a universal time-reversal symmetry map that is independent of the input channel and state. In particular, this answers an open question of Leifer and Spekkens~\cite{LeSp13} on the question about the relationship between Bayes' rules and time-reversal symmetry. It also answers a question posed by Tsang regarding the connection between states over time and generalized conditional expectations~\cite{Ts22}. Tsang also attempted to unify many notions of generalized conditional expectations, and our work includes many of the special cases considered in Ref.~\cite{Ts22} along with several new ones. We also showed how our definition of a state over time function reproduces those of Leifer and Spekkens~\cite{LeSp13}, the two-state vector formalism~\cite{Wat55,ABL64,ReAh95}, the symmetric bloom of the present authors~\cite{FuPa22}, and others, many of which are summarized in Table~\ref{table:SOTex}. 

\begin{table*}[t]
  \centering
  \begin{tabular}{c|c|ccccccc|c}
    Name (page ref) & state over time $\mathcal{E}\star\rho$ & P\ref{item:hermitian} & P\ref{item:locpos} & P\ref{item:pos} & P\ref{item:slin} & P\ref{item:plin} & P\ref{item:classicallimit} & A & Bayes map $\mathcal{E}^{\star}_{\rho}$ \\
    \hline
    uncorrelated  (\pageref{sec:uncsot}) & $\rho\otimes\mathcal{E}(\rho)$  & \cmark & \cmark & \cmark & \xmark & \cmark & \xmark & \xmark & any CPTP such that $\mathcal{E}^{\star}_{\rho}\big(\mathcal{E}(\rho)\big)=\rho$ \\
    Ohya compound  (\pageref{sec:Ohyasot}) & $\sum_{\alpha}\lambda_{\alpha}P_{\alpha}\otimes\mathcal{E}\left(\frac{P_{\alpha}}{\tr(P_{\alpha})}\right)$  & \cmark & \cmark & \cmark & \xmark & \cmark & $\ast$ & ? & not computed here  \\
    Leifer--Spekkens (\pageref{sec:LSsot}) & $\big(\sqrt{\rho}\otimes 1_{\mB}\big)\mathscr{D}[\mathcal{E}]\big(\sqrt{\rho}\otimes 1_{\mB}\big)$  & \cmark & \cmark & \xmark & \xmark & \cmark & \cmark & \xmark & Petz map $\mathscr{R}_{\rho,\mathcal{E}}:=\Ad_{\rho^{1/2}}\circ\mathcal{E}^{*}\circ\Ad_{\mathcal{E}(\rho)^{-1/2}}$ \\
    $t$-rotated (\pageref{sec:trotsot}) &  $(\rho^{1/2-it}\otimes1_{\mB})\mathscr{D}[\mathcal{E}](\rho^{1/2+it}\otimes1_{\mB})$  & \cmark & \cmark & \xmark & \xmark & \cmark & \cmark & \xmark & rotated Petz map $\Ad_{\rho^{-it}}\circ\mathscr{R}_{\rho,\mathcal{E}}\circ\Ad_{\mathcal{E}(\rho)^{it}}$ \\
    STH (\pageref{sec:trotsot}) & $\big(U_{\rho}^{\dag}\rho^{1/2}\otimes1_{\mB}\big)\mathscr{D}[\mathcal{E}]\big(\rho^{1/2}U_{\rho}\otimes1_{\mB}\big)$  & \cmark & \cmark & \xmark & \xmark & \cmark & \cmark & \xmark & $\Ad_{U_{\rho}^{\dag}}\circ\mathscr{R}_{\rho,\mathcal{E}}\circ\Ad_{U_{\mathcal{E}(\rho)}}$ \\
    symmetric bloom (\pageref{sec:SBsot}) & $\frac{1}{2}\big\{\rho\otimes 1_{\mB},\mathscr{D}[\mathcal{E}]\big\}$ & \cmark & \xmark & \xmark & \cmark & \cmark & \cmark & \cmark & $|w_{k}\>\<w_{l}|\mapsto(q_{k}+q_{l})^{-1}\big\{\rho,\mathcal{E}^*\big(|w_{k}\>\<w_{l}|\big)\big\}$ \\
    right bloom (\pageref{sec:rbsot}) & $(\rho\otimes1_{\mB})\mathscr{D}[\mathcal{E}]$ (ex.\ two-state) & \xmark & \xmark & \xmark & \cmark & \cmark & \cmark & \cmark & $B\mapsto\rho\mathcal{E}^*\big(\mathcal{E}(\rho)^{-1}B\big)$ (ex.\ weak values) \\
    left bloom (\pageref{sec:lbsot}) & $\mathscr{D}[\mathcal{E}](\rho\otimes1_{\mB})$ (ex.\ correlator)  & \xmark & \xmark & \xmark & \cmark & \cmark & \cmark & \cmark & $B\mapsto\mathcal{E}^*\big(B\mathcal{E}(\rho)^{-1}\big)\rho$ \\
  \end{tabular}
  \caption{A table of the many state over time functions appearing in this work, along with their formulas, properties satisfied, and associated Bayes maps. The axioms are hermiticity P\ref{item:hermitian}, block-positivity P\ref{item:locpos}, positivity P\ref{item:pos}, state-linearity P\ref{item:slin}, process-linearity P\ref{item:plin}, classical limit P\ref{item:classicallimit}, and associativity A (bi-linearity was removed from the table to avoid redundancy). The $\ast$ for Ohya's compound state over time is because the classical limit is satisfied for density matrices with no repeating eigenvalues. Note that we have not fully defined Ohya's compound state over time for arbitrary CPTP maps between multi-matrix algebras (this  will be addressed in future work, along with additional examples of state over time functions). The question mark represents that we have not yet determined whether the given axiom is satisfied.}
  \label{table:SOTex}
\end{table*}

In addition, using our proposed definitions of Bayesian inverses, and more generally Bayes maps, associated with a state over time function, we have obtained the normalized two-states of Reznik and Aharonov~\cite{ReAh95}, all the rotated Petz recovery maps, two-point quantum correlators, the quantum Bayes' rule of Fuchs~\cite{Fu01}, and many more concepts arising in various scenarios. Furthermore, we explained how one would unambiguously arrive at the quantum state-update rule associated with instruments~\cite{Kr71} for \emph{any} state over time satisfying the classical limit axiom. This shows how the state-update rule can be viewed as a quantum Bayes' rule, which was advocated by Bub~\cite{Bu77}, Ozawa~\cite{Oz97}, Tegmark~\cite{Te12}, and many others. However, we remark that this version of quantum Bayes' rule is not as general as our Bayes' rule, the latter of which applies to \emph{arbitrary} maps, not necessarily special kinds of instruments.

Several open questions should be addressed. Although some of these were mentioned earlier, we summarize these questions for convenience, and add new ones for further thought. 
\begin{enumerate}
\item
Are there any physically relevant examples of state over time functions that are not process-linear? 
\item
In regard to Ohya's compound state over time~\cite{Oh83a}, is there a canonical way of isolating an extremal decomposition of every state $\rho$ given a pair $(\mathcal{E},\rho)$, perhaps by extremizing some information measure? If so, can the compound state over time be improved to define a state over time function with more desirable properties?
\item
Is there a state over time function whose Bayesian inverses are the universal recovery maps of M.\ Junge et al.~\cite{JRSWW16,JRSWW18}? What about the more general averaged rotated Petz recovery maps~\cite{PaBu22} or the optimal state retrieval maps of J.\ Surace and M.\ Scandi~\cite{SuSc22}? 
\item
Which state over time functions give Bayesian inverses that define retrodiction functors in the sense of Ref.~\cite{PaBu22}? Namely, which properties of state over time functions correspond to which axioms for retrodiction families? For example, what properties of a state over time function imply that the Bayesian inverses are compositional?
This is an important question to address so that one can choose a state over time function appropriately to achieve the desired properties of retrodiction.
\item
Just as the two-point correlator can be decomposed into hermitian and anti-hermitian parts, with a minimal Kraus decompositions whose coefficients can be given an operational meaning~\cite{BDOV13}, can arbitrary non-positive Bayes maps also be decomposed in a similar way to provide them with operational meanings? Some progress has recently been made in this direction~\cite{Ts22b}.
\item
Another approach towards generalizing quantum states associated with acausally-related regions to causally-related ones is the superdensity formalism~\cite{CJQW18,CHQY19}. One major difference between this line of development and ours is that we demand our objects to be defined on the tensor product of the algebras of the associated regions, whereas Refs.~\cite{CJQW18,CHQY19} `double' the algebra. For example, for the case of two causally-related regions with algebras $\mA$ and $\mB$, our states over time are elements of $\mA\otimes\mB$, whereas  superdensity operators are elements of $\mA^*\otimes\mB^*\otimes\mA\otimes\mB$, where the duals are with respect to the Hilbert--Schmidt inner products. As such, we have not yet been able to meaningfully compare our constructions.
\item
Our present work focused primarily on the setting of two times, and hence two algebras mediated by some channel. The associativity axiom, which we alluded to briefly, is closely related to consistently defining \emph{multi-time} states over time where one might have a sequence of composable evolutions $\mA\xrightarrow{\mathcal{E}}\mB\xrightarrow{\mathcal{F}}\mC\xrightarrow{\;\;\;}\cdots$ together with an initial state $\rho\in\mathcal{S}(\mA)$. It is expected that these ideas are closely related to multi-time correlators and multi-time measurements~\cite{APTV09,HuGu22}, though the details have not yet been worked out. 
\item
Just as genuine states in the tensor product $\mA\otimes\mB$ can be given important information measures, such as entanglement entropy (see Ref.~\cite{Wi18} for a review with applications to quantum field theory), what information measures can be assigned to states over time for causally related regions? Some attempts to extend entanglement entropy to such scenarios have been made in Euclidean field theories for the special case of the right bloom under the name pseudo-entropy~\cite{NTTTW21,MSTTW21,MSTTW21b,DHMTT22}, or in the context of superdensity operators~\cite{CJQW18,CHQY19}. Much work needs to be done to better understand such dynamical information measures. What can these teach us about temporal correlations~\cite{LeGa85,NiSh96,FJV15,ChMa20,HuGu22} or information extraction from black hole evaporation~\cite{Ha75,Ha76,Ha82}?  
\end{enumerate}

We hope to address these questions, particularly some from the last item, in subsequent work. We hope that by answering such questions, we may eventually provide new methods for detecting and computing temporal correlations that distinguish between classical and quantum systems. More provocatively, we suspect that these differences may provide some insight towards our understanding of quantum information theory and the structure of space and time, and hence quantum gravity. 

\section*{Acknowledgements}
We thank Clive Cenxin Aw, Francesco Buscemi, John DeBrota, Markus Frembs, Marie Fries, Chris Fuchs, Xiao-Kan Guo, Sachin Gupta, Shintaro Minagawa, Jacques Pienaar, David Schmid, Blake Stacey, Jacopo Surace, Tadashi Takayanagi, Mankei Tsang, Mark Wilde, Edward Witten, Yil\`e Y\={\i}ng, and Beata Zjawin for discussions. 
We thank Carlo Rovelli for bringing Ref.~\cite{BHR17} to our attention.
We thank two anonymous reviewers for their helpful feedback. 
AJP thanks the Institut des Hautes \'Etudes Scientifiques and the Yukawa Institute for Theoretical Physics at Kyoto University 
for their hospitality. 
This work is supported by MEXT-JSPS Grant-in-Aid for Transformative Research Areas (A) ``Extreme Universe'', No.\ 21H05183.

\appendix

\section{Useful facts about the channel state}
\label{app:A}

Before getting to the examples, we first state an important lemma that will be used in several calculations.

\begin{lemma}
\label{lem:DEM}
Let $B$ be an $m\times n$ matrix and let $C$ be an $n\times m$ matrix. Then 
\be
\sum_{i,j}^{m}E_{ij}^{(m)}\otimes \left(CE_{ji}^{(m)}B\right)=\sum_{k,l}^{n}\left(BE_{kl}^{(n)}C\right)\otimes E_{lk}^{(n)}.
\ee
\end{lemma}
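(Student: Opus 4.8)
The plan is to prove this by a direct index computation on both sides, expanding each matrix unit product explicitly and comparing the resulting expressions term by term. Recall that for matrix units, $E_{ab}^{(m)}E_{cd}^{(m)} = \delta_{bc}E_{ad}^{(m)}$, and more generally for a matrix $B = \sum_{p,q}B_{pq}E_{pq}$ one has $E_{ji}^{(m)}B = \sum_{q}B_{iq}E_{jq}$ when $B$ is viewed appropriately with respect to the index ranges. The key is simply to keep careful track of which indices range over $\{1,\dots,m\}$ and which over $\{1,\dots,n\}$, since $B$ is $m\times n$ and $C$ is $n\times m$.

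First I would expand the left-hand side. Writing $B = \sum_{p=1}^{m}\sum_{q=1}^{n}B_{pq}E_{pq}$ (a rectangular matrix unit sending the $q$-th basis vector of $\C^n$ to the $p$-th of $\C^m$) and similarly $C = \sum_{r=1}^{n}\sum_{s=1}^{m}C_{rs}E_{rs}$, the product $CE_{ji}^{(m)}B$ becomes $\sum_{r,s,p,q}C_{rs}B_{pq}E_{rs}E_{ji}^{(m)}E_{pq} = \sum_{r,q}C_{rj}B_{iq}E_{rq}$, an $n\times n$ matrix. Summing against $E_{ij}^{(m)}$ over $i,j$ gives $\sum_{i,j}\sum_{r,q}C_{rj}B_{iq}\,E_{ij}^{(m)}\otimes E_{rq}^{(n)}$.

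Next I would expand the right-hand side in the same way: $BE_{kl}^{(n)}C = \sum_{p,q,r,s}B_{pq}C_{rs}E_{pq}E_{kl}^{(n)}E_{rs} = \sum_{p,s}B_{pk}C_{ls}E_{ps}$, an $m\times m$ matrix, and summing against $E_{lk}^{(n)}$ over $k,l$ yields $\sum_{k,l}\sum_{p,s}B_{pk}C_{ls}\,E_{ps}^{(m)}\otimes E_{lk}^{(n)}$. Now I would compare: on the left the generic coefficient of $E_{ij}^{(m)}\otimes E_{rq}^{(n)}$ is $B_{iq}C_{rj}$; on the right, relabeling the right-hand side's summation indices ($p\to i$, $s\to j$ on the $\M_m$ factor, $l\to r$, $k\to q$ on the $\M_n$ factor), the coefficient of $E_{ij}^{(m)}\otimes E_{rq}^{(n)}$ is $B_{iq}C_{rj}$ as well. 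Since the tensor products of matrix units form a basis of $\M_m\otimes\M_n$, the two sides agree.

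I do not anticipate a genuine obstacle here — the statement is essentially a bookkeeping identity expressing that $\sum_{i,j}E_{ij}\otimes E_{ji}$ is the (basis-independent) swap/copy tensor, and both sides compute the same bilinear pairing $B\otimes C \mapsto \sum B_{iq}C_{rj}\,E_{ij}\otimes E_{rq}$. The only place requiring care is the index ranges (the left sum runs to $m$, the right to $n$), so I would be explicit that $E_{pq}$ with $p\in\{1,\dots,m\}$, $q\in\{1,\dots,n\}$ denotes a rectangular matrix unit and that the products collapse correctly. A slicker alternative, which I might include as a remark, is to observe that both sides equal $(\id\otimes\Ad\text{-type})$ applications of $\mu^{*}(1)$, i.e. they are two ways of writing $\mathscr{D}$ of the map $X\mapsto CXB$ post/pre-composed appropriately; but the brute-force index check is the cleanest fully self-contained argument.
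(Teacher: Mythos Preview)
Your proof is correct and is essentially the same direct index computation as the paper's. The only cosmetic difference is that the paper works in Dirac notation and inserts resolutions of the identity $\sum_{k}|k\>\<k|$ and $\sum_{l}|l\>\<l|$ to transform the left-hand side directly into the right-hand side (moving the scalar matrix elements $\<l|C|j\>$ and $\<i|B|k\>$ across the tensor factor), whereas you expand both sides separately in matrix-unit coordinates and match coefficients; the underlying calculation is identical.
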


\begin{proof}
Writing $E_{ij}^{(m)}=|i\>\<j|$ and using the completeness relations $\mathds{1}_{n}=\sum_{k=1}^{n}|k\>\<k|$ and $\mathds{1}_{m}=\sum_{i=1}^{m}|i\>\<i|$, we obtain 
\begin{align}
\sum_{i,j}^{m}&E_{ij}^{(m)}\otimes \left(CE_{ji}^{(m)}B\right)
 \nonumber \\
&=\sum_{i,j}^{m}\sum_{k,l}^{n} 
|i\>\<j|\otimes\Big(|l\>\<l|C|j\>\<i|B|k\>\<k|\Big) \nonumber \\
&=\sum_{i,j}^{m}\sum_{k,l}^{n} 
\Big(|i\>\<i|B|k\>\<l|C|j\>\<j|\Big)\otimes|l\>\<k|, 
\end{align}
which equals the required expression. 
\end{proof}

\begin{lemma} 
\label{lem:gamDE}
If $\mathcal{E}:\mA\to\mB$ is a 
CP map, 
then 
\begin{equation}
\label{eq:EdtoEs}
(\id_{\mA}\otimes\mathcal{E})\big(\mu_{\mA}^{*}(1_{\mA})\big)
=(\mathcal{E}^*\otimes\id_{\mB})\big(\mu_{\mB}^{*}(1_{\mB})\big),
\end{equation}
\begin{equation}
\label{eq:gammaDE}
\gamma\big(\mathscr{D}[\mathcal{E}]\big)=\mathscr{D}[\mathcal{E}^*],
\end{equation}
and 
\begin{equation}
\label{eq:ABDE}
(A\otimes B)\mathscr{D}[\mathcal{E}](A'\otimes B')
=\mathscr{D}[\mathscr{L}_{B}\circ\mathscr{R}_{B'}\circ\mathcal{E}\circ\mathscr{L}_{A'}\circ\mathscr{R}_{A}]
\end{equation}
for all $A,A'\in\mA$ and $B,B'\in\mB$.
Here $\mathscr{L}_{A'}$ and $\mathscr{R}_{A}$ are left and right multiplication by $A$, namely $\mathscr{L}_{A'}(X)=A'X$ and $\mathscr{R}_{A}(X)=XA$ for all $X\in\mA$ (and similarly for $\mB$).
\end{lemma}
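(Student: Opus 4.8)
The plan is to establish the three identities in the stated order, deriving each from the previous one, and to work directly with multi-matrix algebras. The only inputs needed are the defining formula $\mathscr{D}[\mathcal{E}]=(\id_{\mA}\otimes\mathcal{E})\big(\mu_{\mA}^{*}(1_{\mA})\big)$ from Equation~\eqref{eq:Jcs}, the defining relation $\tr\big(\mathcal{E}(A)^{\dag}B\big)=\tr\big(A^{\dag}\mathcal{E}^{*}(B)\big)$ of the Hilbert--Schmidt adjoint, the $\dag$-preservation of completely positive maps, cyclicity of the trace, and the elementary facts $(\Phi\otimes\Psi)^{*}=\Phi^{*}\otimes\Psi^{*}$ and $\mathscr{L}_{A}^{*}=\mathscr{L}_{A^{\dag}}$, $\mathscr{R}_{A}^{*}=\mathscr{R}_{A^{\dag}}$. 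Two auxiliary properties of the element $\mu_{\mA}^{*}(1_{\mA})\in\mA\otimes\mA$ will be used repeatedly, each obtained by pairing against an arbitrary elementary tensor $X\otimes Y$ in the Hilbert--Schmidt inner product and applying $\tr(XY)=\tr(YX)$: first, $\mu_{\mA}^{*}(1_{\mA})$ is invariant under the swap $\gamma_{\mA,\mA}\colon\mA\otimes\mA\to\mA\otimes\mA$; second, one can transfer a multiplication between its two legs, e.g.\ $(\mathscr{L}_{A}\otimes\id_{\mA})\big(\mu_{\mA}^{*}(1_{\mA})\big)=(\id_{\mA}\otimes\mathscr{R}_{A})\big(\mu_{\mA}^{*}(1_{\mA})\big)$ for every $A\in\mA$. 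Alternatively one may pass to matrix algebras, where $\mu_{\mA}^{*}(1_{\mA})=\sum_{i,j}E_{ij}^{(m)}\otimes E_{ji}^{(m)}$ and $\mathscr{D}[\mathcal{E}]=\sum_{i,j}E_{ij}^{(m)}\otimes\mathcal{E}(E_{ji}^{(m)})$, invoke Lemma~\ref{lem:DEM}, and then reassemble the general multi-matrix case block by block, using that $\mu_{\mA}^{*}(1_{\mA})$ has no off-diagonal blocks.

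For Equation~\eqref{eq:EdtoEs}, I would pair both sides against an arbitrary $A\otimes B\in\mA\otimes\mB$ under the Hilbert--Schmidt inner product and move the maps onto $A\otimes B$ via their adjoints, using $(\id_{\mA}\otimes\mathcal{E})^{*}=\id_{\mA}\otimes\mathcal{E}^{*}$, $(\mathcal{E}^{*}\otimes\id_{\mB})^{*}=\mathcal{E}\otimes\id_{\mB}$, and the adjoints $\mu_{\mA}^{*}$ and $\mu_{\mB}^{*}$ of multiplication. The left-hand side collapses to $\overline{\tr\big(A\,\mathcal{E}^{*}(B)\big)}$ and the right-hand side to $\overline{\tr\big(\mathcal{E}(A)\,B\big)}$; these agree because applying the defining relation of $\mathcal{E}^{*}$ with $A^{\dag}$ in place of $A$ and using $\mathcal{E}(A^{\dag})^{\dag}=\mathcal{E}(A)$ gives $\tr\big(\mathcal{E}(A)B\big)=\tr\big(A\,\mathcal{E}^{*}(B)\big)$. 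Non-degeneracy of the inner product then yields the equality of elements of $\mA\otimes\mB$. Equation~\eqref{eq:gammaDE} follows formally from this: applying \eqref{eq:EdtoEs} with $\mathcal{E}$ replaced by $\mathcal{E}^{*}\colon\mB\to\mA$ (and $(\mathcal{E}^{*})^{*}=\mathcal{E}$) gives $\mathscr{D}[\mathcal{E}^{*}]=(\mathcal{E}\otimes\id_{\mA})\big(\mu_{\mA}^{*}(1_{\mA})\big)$, while $\gamma$ intertwines $\id_{\mA}\otimes\mathcal{E}$ with $\mathcal{E}\otimes\id_{\mA}$, so $\gamma\big(\mathscr{D}[\mathcal{E}]\big)=(\mathcal{E}\otimes\id_{\mA})\big(\gamma_{\mA,\mA}(\mu_{\mA}^{*}(1_{\mA}))\big)$, and swap-invariance of $\mu_{\mA}^{*}(1_{\mA})$ closes the argument.

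For Equation~\eqref{eq:ABDE}, the plan is first to prove the four ``one-sided'' identities $(1_{\mA}\otimes B)\mathscr{D}[\mathcal{E}]=\mathscr{D}[\mathscr{L}_{B}\circ\mathcal{E}]$, $\mathscr{D}[\mathcal{E}](1_{\mA}\otimes B')=\mathscr{D}[\mathscr{R}_{B'}\circ\mathcal{E}]$, $(A\otimes 1_{\mB})\mathscr{D}[\mathcal{E}]=\mathscr{D}[\mathcal{E}\circ\mathscr{R}_{A}]$, and $\mathscr{D}[\mathcal{E}](A'\otimes 1_{\mB})=\mathscr{D}[\mathcal{E}\circ\mathscr{L}_{A'}]$, and then to compose them after factoring $A\otimes B=(A\otimes 1_{\mB})(1_{\mA}\otimes B)$ and $A'\otimes B'=(1_{\mA}\otimes B')(A'\otimes 1_{\mB})$, using that left and right multiplications commute so that the five maps reassemble precisely as $\mathscr{L}_{B}\circ\mathscr{R}_{B'}\circ\mathcal{E}\circ\mathscr{L}_{A'}\circ\mathscr{R}_{A}$. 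The first two one-sided identities are immediate from $(1_{\mA}\otimes B)(\id_{\mA}\otimes\mathcal{E})=\id_{\mA}\otimes(\mathscr{L}_{B}\circ\mathcal{E})$ and its right-multiplication analogue, applied to $\mu_{\mA}^{*}(1_{\mA})$; the last two additionally use the auxiliary identities transferring a multiplication between the two legs of $\mu_{\mA}^{*}(1_{\mA})$ to move the multiplication from the $\mA$-leg of the channel state onto the input of $\mathcal{E}$.

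I expect the only real difficulty to be bookkeeping: tracking which factor of each multiplication is ``left'' versus ``right'' through the adjoints and tensor swaps, keeping the order of composition in $\mathscr{L}_{B}\circ\mathscr{R}_{B'}\circ\mathcal{E}\circ\mathscr{L}_{A'}\circ\mathscr{R}_{A}$ correct, and making sure the complex conjugates coming from the Hilbert--Schmidt pairing cancel (which they do, but only once $\dag$-preservation of $\mathcal{E}$ is invoked at the right moment). The passage from matrix algebras to arbitrary multi-matrix algebras is routine once one observes that $\mu_{\mA}^{*}(1_{\mA})$ is block-diagonal.
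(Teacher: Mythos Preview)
Your proposal is correct and takes a genuinely different route from the paper's proof. The paper works concretely: it restricts to matrix algebras, invokes a Kraus decomposition $\mathcal{E}=\sum_{\alpha}\Ad_{V_{\alpha}}$, and reduces both \eqref{eq:EdtoEs}/\eqref{eq:gammaDE} and \eqref{eq:ABDE} to the matrix-unit identity of Lemma~\ref{lem:DEM}. You instead argue by Hilbert--Schmidt duality and the abstract ``transfer'' and swap-invariance properties of $\mu_{\mA}^{*}(1_{\mA})$, never writing down Kraus operators. What this buys you: your argument for \eqref{eq:EdtoEs} isolates exactly where CP enters (only through $\dag$-preservation of $\mathcal{E}$), and your argument for \eqref{eq:ABDE} never uses CP at all---it is purely linear in $\mathcal{E}$, which in effect already proves the more general identity \eqref{eq:EABDE} of Lemma~\ref{lem:EgamDE} without the paper's detour through a four-term CP decomposition of an arbitrary linear map. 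What the paper's approach buys: the Kraus/matrix-unit computation is shorter to write and makes the link to the explicit formula $\mathscr{D}[\mathcal{E}]=\sum_{i,j}E_{ij}^{(m)}\otimes\mathcal{E}(E_{ji}^{(m)})$ transparent. One small point worth tightening in your write-up: when you derive \eqref{eq:gammaDE} by applying \eqref{eq:EdtoEs} to $\mathcal{E}^{*}$, you are implicitly using that $\mathcal{E}^{*}$ is again $\dag$-preserving (equivalently CP), which is true but should be stated.
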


\begin{proof}
In this proof, we will assume $\mA=\matr_{m}$ and $\mB=\matr_{n}$ for simplicity. Then Equation~\eqref{eq:EdtoEs} reads
\be
\mathscr{D}[\mathcal{E}]=\sum_{k,l}\mathcal{E}^*\big(E_{kl}^{(n)}\big)\otimes E_{lk}^{(n)}.
\ee
To see that this holds, note that 
since $\mathcal{E}$ is CP, 
there exist Kraus operators $\{V_{\alpha}\}$ such that $\mathcal{E}=\sum_{\alpha}\Ad_{V_{\alpha}}$~\cite{Ch75}. 
Hence, 
\begin{align}
\mathscr{D}[\mathcal{E}]&=\sum_{\alpha}\sum_{i,j}E_{ij}^{(m)}\otimes\left(V_{\alpha}E_{ji}^{(m)}V_{\alpha}^{\dag}\right) \nonumber \\
&=\sum_{\alpha}\sum_{k,l}\left(V_{\alpha}^{\dag}E_{kl}^{(n)}V_{\alpha}\right)\otimes E_{lk}^{(n)} \nonumber \\
&=\gamma\big(\mathscr{D}[\mathcal{E}^*]\big)
\end{align}
by Lemma~\ref{lem:DEM}. Equation~\eqref{eq:gammaDE} follows directly from this. 
For the last identity, and using Lemma~\ref{lem:DEM} as well, we obtain
\begin{align}
&(A\otimes B)\mathscr{D}[\mathcal{E}](A'\otimes B') \nonumber \\
&=\sum_{\alpha}(1_{\mA}\otimes BV_{\alpha})\left(\sum_{i,j}AE_{ij}^{(m)}A'\otimes E_{ji}^{(m)}\right)(1_{\mA}\otimes V_{\alpha}^{\dag}B') \nonumber \\
&=\sum_{\alpha}(1_{\mA}\otimes BV_{\alpha})\left(\sum_{i,j}E_{ij}^{(m)}\otimes A'E_{ji}^{(m)}A\right)(1_{\mA}\otimes V_{\alpha}^{\dag}B') \nonumber \\
&=\sum_{i,j}E_{ij}^{(m)}\otimes\big(B\mathcal{E}(A'E_{ji}^{(m)}A)B'\big),
\end{align}
which equals~\eqref{eq:ABDE}.
\end{proof}

We now generalize Lemma~\ref{lem:gamDE} to allow for arbitrary linear maps.

\begin{lemma} 
\label{lem:EgamDE}
If $\mathcal{E}:\mA\to\mB$ is a linear map, then
\begin{equation}
\label{eq:EdtodEds}
(\id_{\mA}\otimes\mathcal{E})\big(\mu_{\mA}^{*}(1_{\mA})\big)=\big((\dag\circ\mathcal{E}^{*}\circ\dag)\otimes\id_{\mB}\big)\big(\mu_{\mB}^{*}(1_{\mB})\big),
\end{equation}
\begin{equation}
\label{eq:EgammaDE}
\gamma\big(\mathscr{D}[\mathcal{E}]\big)
=\mathscr{D}[\dag\circ\mathcal{E}^*\circ\dag]
=\mathscr{D}[\mathcal{E}^*]^{\dag},
\end{equation}
\begin{equation}
\label{eq:EABDE}
(A\otimes B)\mathscr{D}[\mathcal{E}](A'\otimes B')
=\mathscr{D}[\mathscr{L}_{B}\circ\mathscr{R}_{B'}\circ\mathcal{E}\circ\mathscr{L}_{A'}\circ\mathscr{R}_{A}],
\end{equation}
and
\begin{equation}
\label{eq:VEW}
(A^{\dag}\otimes B)\mathscr{D}[\mathcal{E}](A\otimes B^{\dag})=\mathscr{D}[\Ad_{B}\circ\mathcal{E}\circ\Ad_{A}]
\end{equation}
for all $A,A'\in\mA$ and $B,B'\in\mB$.
\end{lemma}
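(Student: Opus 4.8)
The plan is to mirror the proof of Lemma~\ref{lem:gamDE} verbatim, with the genuine Kraus decomposition replaced by a \emph{generalized} one that is available for arbitrary linear maps. As in that proof, I would first reduce to the case $\mA=\matr_{m}$ and $\mB=\matr_{n}$, leaving the componentwise bookkeeping for general multi-matrix algebras to the reader. The one genuinely new ingredient is the observation that every linear map $\mathcal{E}:\matr_{m}\to\matr_{n}$ can be written as $\mathcal{E}(X)=\sum_{\alpha}V_{\alpha}XW_{\alpha}$ for suitable $n\times m$ matrices $V_{\alpha}$ and $m\times n$ matrices $W_{\alpha}$: indeed, maps of this form span $\mathcal{L}(\matr_{m},\matr_{n})$, since taking $V_{\alpha},W_{\alpha}$ to be matrix units produces all maps $X\mapsto X_{ij}E_{kl}^{(n)}$. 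A one-line computation with the Hilbert--Schmidt pairing then gives $\mathcal{E}^{*}(Y)=\sum_{\alpha}V_{\alpha}^{\dag}YW_{\alpha}^{\dag}$, hence $(\dag\circ\mathcal{E}^{*}\circ\dag)(Y)=\sum_{\alpha}W_{\alpha}YV_{\alpha}$; when $W_{\alpha}=V_{\alpha}^{\dag}$ this recovers the CP setting of Lemma~\ref{lem:gamDE}.

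For Equations~\eqref{eq:EdtodEds} and~\eqref{eq:EgammaDE}, I would insert the decomposition into $\mathscr{D}[\mathcal{E}]=\sum_{i,j}E_{ij}^{(m)}\otimes\mathcal{E}(E_{ji}^{(m)})=\sum_{\alpha}\sum_{i,j}E_{ij}^{(m)}\otimes(V_{\alpha}E_{ji}^{(m)}W_{\alpha})$ and apply Lemma~\ref{lem:DEM} termwise with $B=W_{\alpha}$, $C=V_{\alpha}$ to obtain $\sum_{\alpha}\sum_{k,l}(W_{\alpha}E_{kl}^{(n)}V_{\alpha})\otimes E_{lk}^{(n)}=\sum_{k,l}(\dag\circ\mathcal{E}^{*}\circ\dag)(E_{kl}^{(n)})\otimes E_{lk}^{(n)}$, which is precisely $\big((\dag\circ\mathcal{E}^{*}\circ\dag)\otimes\id_{\mB}\big)(\mu_{\mB}^{*}(1_{\mB}))$ since $\mu_{\mB}^{*}(1_{\mB})=\mathscr{D}[\id_{\mB}]=\sum_{k,l}E_{kl}^{(n)}\otimes E_{lk}^{(n)}$. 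This is~\eqref{eq:EdtodEds}; applying the swap $\gamma$ to both sides yields the first equality in~\eqref{eq:EgammaDE}, and the remaining identity $\mathscr{D}[\dag\circ\mathcal{E}^{*}\circ\dag]=\mathscr{D}[\mathcal{E}^{*}]^{\dag}$ is the elementary fact $\mathscr{D}[\dag\circ\Phi\circ\dag]=\mathscr{D}[\Phi]^{\dag}$ for any linear $\Phi$, proved by relabeling indices in $\sum_{i,j}E_{ij}^{(m)}\otimes\Phi(E_{ij}^{(m)})^{\dag}$.

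For Equation~\eqref{eq:EABDE} I would repeat the computation used for~\eqref{eq:ABDE} in Lemma~\ref{lem:gamDE} with $V_{\alpha}^{\dag}$ replaced by $W_{\alpha}$: sandwiching $\mathscr{D}[\mathcal{E}]$ by $A\otimes B$ and $A'\otimes B'$ gives $\sum_{\alpha}(1_{\mA}\otimes BV_{\alpha})\big(\sum_{i,j}AE_{ij}^{(m)}A'\otimes E_{ji}^{(m)}\big)(1_{\mA}\otimes W_{\alpha}B')$; using Lemma~\ref{lem:DEM} to move $A,A'$ across the swap and then recollecting the sum over $\alpha$ produces $\sum_{i,j}E_{ij}^{(m)}\otimes B\mathcal{E}(A'E_{ji}^{(m)}A)B'=\mathscr{D}[\mathscr{L}_{B}\circ\mathscr{R}_{B'}\circ\mathcal{E}\circ\mathscr{L}_{A'}\circ\mathscr{R}_{A}]$. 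Crucially, this argument never uses $W_{\alpha}=V_{\alpha}^{\dag}$, so it transfers unchanged. Finally, Equation~\eqref{eq:VEW} is the special case of~\eqref{eq:EABDE} with the two $\mA$-slots equal to $A^{\dag}$ and $A$ and the two $\mB$-slots equal to $B$ and $B^{\dag}$, together with $\mathscr{L}_{B}\circ\mathscr{R}_{B^{\dag}}=\Ad_{B}$ and $\mathscr{L}_{A}\circ\mathscr{R}_{A^{\dag}}=\Ad_{A}$. I do not expect a real obstacle: the only conceptual point is the generalized Kraus decomposition, after which everything parallels Lemma~\ref{lem:gamDE}. As an alternative route, one could note that all four identities are $\mathbb{C}$-linear in $\mathcal{E}$ and that CP maps span $\mathcal{L}(\matr_{m},\matr_{n})$ over $\mathbb{C}$, reducing each claim directly to Lemma~\ref{lem:gamDE}.
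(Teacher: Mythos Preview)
Your proof is correct, but the route differs from the paper's in two places. For Equation~\eqref{eq:EdtodEds} the paper does \emph{not} compute with a generalized Kraus decomposition; instead it takes the Hilbert--Schmidt adjoint of both sides, reducing the claim to the trace identity $\tr\big(A\mathcal{E}^*(B)\big)=\tr\big(\mathcal{E}(A^{\dag})^{\dag}B\big)$, which is immediate from the definition of $\mathcal{E}^*$. For Equation~\eqref{eq:EABDE} the paper uses precisely the alternative you mention at the end: it decomposes the Choi matrix of $\mathcal{E}$ into a complex combination of four positive pieces, writes $\mathcal{E}$ as a $\C$-linear combination of four CP maps, and then invokes Equation~\eqref{eq:ABDE} from Lemma~\ref{lem:gamDE} together with linearity of $\mathscr{D}$. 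Your approach is more uniform---one tool (the generalized Kraus form $\mathcal{E}(X)=\sum_{\alpha}V_{\alpha}XW_{\alpha}$ plus Lemma~\ref{lem:DEM}) handles all four identities in exact parallel with Lemma~\ref{lem:gamDE}---whereas the paper's dualization argument for~\eqref{eq:EdtodEds} is coordinate-free and avoids choosing any decomposition at all. Both are short; yours has the virtue of making Lemma~\ref{lem:EgamDE} visibly the same computation as Lemma~\ref{lem:gamDE} with $V_{\alpha}^{\dag}$ replaced by $W_{\alpha}$, while the paper's has the virtue of not needing to justify that such a decomposition exists.
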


\begin{proof}
The first equality can be seen by first taking the Hilbert--Schmidt adjoints of both sides ($1_{\mA}$ is viewed as the unique linear map $\C\to\mA$ sending $\lambda\in\C$ to $\lambda1_{\mA}$, and similarly for $\mB$). The claim is then equivalent to~%
\footnote{The Hilbert--Schmidt adjoint $T^*$ of a conjugate-linear map $T:\mathcal{H}\to\mathcal{K}$ between Hilbert spaces $\mathcal{H}$ and $\mathcal{K}$ is defined by  \unexpanded{$\<v,T^*w\>_{\mathcal{H}}=\<w,Tv\>_{\mathcal{K}}$} for all $v\in\mathcal{H}$ and $w\in\mathcal{K}$. In this case, the conjugate-linear map in question is $\dag$, which is self-adjoint with respect to this definition.}
\be
\tr\circ\mu_{\mA}\circ
(\id_{\mA}\otimes\mathcal{E}^*)
=
\tr\circ\mu_{\mB}\circ
\big((\dag\circ\mathcal{E}\circ\dag)\otimes\id_{\mB}\big).
\ee
But this identity follows immediately from 
applying the left map to $A\otimes B\in\mA\otimes\mB$, which results in 
\be
\tr\big(A\mathcal{E}^*(B)\big)=\tr\big((A^{\dag})^{\dag}\mathcal{E}^*(B)\big)=\tr\big(\mathcal{E}(A^{\dag})^{\dag}B\big).
\ee
The first equality in Equation~\eqref{eq:EgammaDE} follows from Equation~\eqref{eq:EdtodEds}.
The second equality in Equation~\eqref{eq:EgammaDE} follows from
\begin{align}
\mathscr{D}[\dag\circ\mathcal{E}\circ\dag]^{\dag}
&=\Big(\big(\id_{\mA}\otimes(\dag\circ\mathcal{E}\circ\dag)\big)\big(\mu^{*}_{\mA}(1_{\mA})\big)\Big)^{\dag} \nonumber \\
&=\big((\id_{\mA}\otimes\mathcal{E})\circ(\dag\otimes\dag)\big)\big(\mu^{*}_{\mA}(1_{\mA})\big) \nonumber \\
&=(\id_{\mA}\otimes\mathcal{E})\big(\mu^{*}_{\mA}(1_{\mA})\big) \nonumber \\
&=\mathscr{D}[\mathcal{E}]
\end{align}
(the string-diagram language of quantum Markov categories from Ref.~\cite{PaBayes} makes these identities immediately apparent). 
Finally, Equation~\eqref{eq:EABDE} follows from Equation~\eqref{eq:ABDE} in Lemma~\ref{lem:gamDE} together with linearity of the channel state with respect to its input. This is because every linear map $\mathcal{E}$ is a complex combination of at most four CP maps~\cite{dePi67,Hi73,Ch75}. Indeed, one can first decompose the Choi matrix $\mathscr{C}[\mathcal{E}]$ of $\mathcal{E}$ into a complex combination of self-adjoint elements
\be
\mathscr{C}[\mathcal{E}]=\frac{1}{2}\big(\mathscr{C}[\mathcal{E}]+\mathscr{C}[\mathcal{E}]^{\dag}\big)+\frac{1}{2i}\big(i\mathscr{C}[\mathcal{E}]-i\mathscr{C}[\mathcal{E}]^{\dag}\big)
\ee
and then further decompose each of these self-adjoint elements into linear combinations of positive elements. The end result is a complex combination of the form 
\be
\mathscr{C}[\mathcal{E}]
=\mathscr{C}[\mathcal{E}]_{\mathfrak{R}}^{+}
-\mathscr{C}[\mathcal{E}]_{\mathfrak{R}}^{-}
+i\mathscr{C}[\mathcal{E}]_{\mathfrak{I}}^{+}
-i\mathscr{C}[\mathcal{E}]_{\mathfrak{I}}^{-}, 
\ee
where $\mathscr{C}[\mathcal{E}]_{\mathfrak{R}}^{+},\mathscr{C}[\mathcal{E}]_{\mathfrak{R}}^{-},\mathscr{C}[\mathcal{E}]_{\mathfrak{I}}^{+},\mathscr{C}[\mathcal{E}]_{\mathfrak{I}}^{-}$ are all positive. Then, by using Choi's theorem, each of these correspond to CP maps with Kraus decomposition 
\be
\mathcal{E}
=\sum_{\alpha}\Ad_{V_{\alpha}}
-\sum_{\beta}\Ad_{W_{\beta}}
+i\sum_{\gamma}\Ad_{X_{\gamma}}
-i\sum_{\delta}\Ad_{Y_{\delta}}.
\ee
Since the channel state is linear in its argument, this decomposition and Equation~\eqref{eq:ABDE} from  Lemma~\ref{lem:gamDE} imply Equation~\eqref{eq:EABDE}. Equation~\eqref{eq:VEW} is a special case of Equation~\eqref{eq:EABDE}.
\end{proof}

\section{Proofs of some main results}
\label{app:B}

This appendix contains proofs of our main theorems that were not included in the body. 
Before proving Theorem~\ref{thm:GCEandBayes} about the relationship between Bayes maps and generalized conditional expectations, we compute the reverse orientation state over time in the following lemma.

\begin{lemma}
\label{lem:dualTsang}
Let $\Theta$ be a state-rendering map and let $\star$ denote the associated state over time function from Theorem~\ref{thm:Tsot}. Then 
\be
\big(\widetilde{\mathcal{E}}\star\rho\big)^{\dag}
=\big((\dag\circ\Theta_{\rho}\circ\dag)\otimes\id_{\mB}\big)\mathscr{D}[\mathcal{E}]
\ee
for all linear maps $\mathcal{E}:\mA\to\mB$ and states $\rho\in\mathcal{S}(\mA)$.
\end{lemma}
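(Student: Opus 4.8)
The plan is to unwind both sides using the explicit formula for the state over time from Theorem~\ref{thm:Tsot} and then to commute the dagger past the tensor factor $\Theta_{\rho}\otimes\id_{\mB}$, the only genuine input being a compatibility identity from Lemma~\ref{lem:EgamDE}.

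First I would record an elementary bookkeeping identity: for any linear map $T:\mA\to\mA$ and any $Y\in\mA\otimes\mB$, one has $\big((T\otimes\id_{\mB})(Y)\big)^{\dag}=\big((\dag\circ T\circ\dag)\otimes\id_{\mB}\big)(Y^{\dag})$. This is immediate by writing $Y=\sum_{i}A_{i}\otimes B_{i}$: the left side is $\sum_{i}T(A_{i})^{\dag}\otimes B_{i}^{\dag}$, while the right side is $\sum_{i}\big(T(A_{i}^{\dag\dag})\big)^{\dag}\otimes B_{i}^{\dag}$, and these coincide. I want to emphasize that only linearity of $T$ is used, which matters because $\Theta_{\rho}$ need not be $\dag$-preserving.

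Next, by the definition of $\star$ in Theorem~\ref{thm:Tsot}, $\widetilde{\mathcal{E}}\star\rho=(\Theta_{\rho}\otimes\id_{\mB})\big(\mathscr{D}[\widetilde{\mathcal{E}}]\big)$, where $\widetilde{\mathcal{E}}=\dag\circ\mathcal{E}\circ\dag$. I would then invoke the second identity in Equation~\eqref{eq:EgammaDE} of Lemma~\ref{lem:EgamDE} with $\mathcal{E}$ replaced by $\mathcal{E}^{*}$ (and using $(\mathcal{E}^{*})^{*}=\mathcal{E}$), which gives $\mathscr{D}[\dag\circ\mathcal{E}\circ\dag]=\mathscr{D}[\mathcal{E}]^{\dag}$, i.e.\ $\mathscr{D}[\widetilde{\mathcal{E}}]=\mathscr{D}[\mathcal{E}]^{\dag}$. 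Hence $\widetilde{\mathcal{E}}\star\rho=(\Theta_{\rho}\otimes\id_{\mB})\big(\mathscr{D}[\mathcal{E}]^{\dag}\big)$.

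Finally I would take the dagger of both sides and apply the bookkeeping identity with $T=\Theta_{\rho}$ and $Y=\mathscr{D}[\mathcal{E}]^{\dag}$, using $(\mathscr{D}[\mathcal{E}]^{\dag})^{\dag}=\mathscr{D}[\mathcal{E}]$, to conclude $\big(\widetilde{\mathcal{E}}\star\rho\big)^{\dag}=\big((\dag\circ\Theta_{\rho}\circ\dag)\otimes\id_{\mB}\big)\big(\mathscr{D}[\mathcal{E}]\big)$, which is the assertion. There is no serious obstacle here; the only points requiring care are selecting the correct one of the several equivalent forms in Lemma~\ref{lem:EgamDE} to cite and stating the $\dag$-versus-$\otimes$ identity for merely linear $T$ (not $\dag$-preserving $T$), since $\Theta_{\rho}$ is in general neither $\dag$-preserving nor self-adjoint as a map.
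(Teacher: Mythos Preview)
Your proof is correct and follows essentially the same route as the paper's: unfold $\widetilde{\mathcal{E}}\star\rho$ via the definition of $\star$, push the dagger through $\Theta_{\rho}\otimes\id_{\mB}$ using the elementary identity $\big((T\otimes\id_{\mB})(Y)\big)^{\dag}=\big((\dag\circ T\circ\dag)\otimes\id_{\mB}\big)(Y^{\dag})$, and then invoke $\mathscr{D}[\dag\circ\mathcal{E}\circ\dag]=\mathscr{D}[\mathcal{E}]^{\dag}$ from Lemma~\ref{lem:EgamDE}. One small slip in your closing remark: $\Theta_{\rho}$ \emph{is} self-adjoint with respect to the Hilbert--Schmidt inner product by axiom~(T\ref{item:T4}); what matters for your argument, as you correctly emphasize, is only that it need not be $\dag$-preserving.
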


\begin{proof}
By definition of the left-hand-side, we have
\begin{align}
\big(\widetilde{\mathcal{E}}\star\rho\big)^{\dag}&=\big((\dag\circ\mathcal{E}\circ\dag)\star\rho\big)^{\dag} \nonumber \\
&=\big((\Theta_{\rho}\otimes\id_{\mB})\mathscr{D}[\dag\circ\mathcal{E}\circ\dag]\big)^{\dag} \nonumber \\
&=\big((\dag\circ\Theta_{\rho}\circ\dag)\otimes\id_{\mB}\big)\big(\mathscr{D}[\dag\circ\mathcal{E}\circ\dag]^{\dag}\big) \nonumber \\
&=\big((\dag\circ\Theta_{\rho}\circ\dag)\otimes\id_{\mB}\big)\mathscr{D}[\mathcal{E}],
\end{align}
where  
the third equality follows from the properties of the involution $\dag$~
\footnote{As with some earlier proofs, the sequence of calculations from this proof are much more easily visualized using the string diagrams of quantum Markov categories~\cite{PaBayes}.}
and
the fourth equality follows from 
Equation~\eqref{eq:EgammaDE} of Lemma~\ref{lem:EgamDE}.
\end{proof}

\begin{proof}
[Proof of Theorem~\ref{thm:GCEandBayes}]
The goal is to prove $\gamma(\mathcal{E}\star\rho)
=\big(\widetilde{\mathcal{E}^{*}_{\Theta,\rho}}\star\mathcal{E}(\rho)\big)^{\dag}$. 
Taking the Hilbert--Schmidt adjoint of both sides of Equation~\eqref{eq:GCE} and using axiom~(T\ref{item:T4}) gives the equivalent condition 
\begin{equation}
\label{eq:GCEd}
\Theta_{\rho}\circ\mathcal{E}^*=\mathcal{E}_{\Theta,\rho}^{*}\circ\Theta_{\mathcal{E}(\rho)}.
\end{equation}
Using this, we obtain
\begingroup
\allowdisplaybreaks
\begin{align}
\gamma(\mathcal{E}&\star\rho)
=\gamma\Big(\big(\Theta_{\rho}\otimes\id_{\mB}\big)\big(\mathscr{D}[\mathcal{E}]\big)\Big) \nonumber \\
&=\big(\id_{\mB}\otimes\Theta_{\rho}\big)\gamma\big(\mathscr{D}[\mathcal{E}]\big) \nonumber \\
&=\big(\id_{\mB}\otimes\Theta_{\rho}\big)\big(\mathscr{D}[\mathcal{E}^*]\big) \nonumber \\
&=\big(\id_{\mB}\otimes(\Theta_{\rho}\circ\mathcal{E}^*)\big)\big(\mu_{\mB}^{*}(1_{\mB})\big) \nonumber \\
&=\big(\id_{\mB}\otimes(\mathcal{E}_{\Theta,\rho}^{*}\circ\Theta_{\mathcal{E}(\rho)})\big)\big(\mu_{\mB}^{*}(1_{\mB})\big) \nonumber \\
&=\Big(\big(\id_{\mB}\otimes\mathcal{E}_{\Theta,\rho}^{*}\big)\circ\big(\id_{\mB}\otimes\Theta_{\mathcal{E}(\rho)}\big)\Big)\big(\mu_{\mB}^{*}(1_{\mB})\big) \nonumber \\
&=\Big(\big(\id_{\mB}\otimes\mathcal{E}_{\Theta,\rho}^{*}\big)\circ\big((\dag\circ\Theta_{\mathcal{E}(\rho)}\circ\dag)\otimes\id_{\mB}\big)\Big)\big(\mu_{\mB}^{*}(1_{\mB})\big) \nonumber \\
&=\big((\dag\circ\Theta_{\mathcal{E}(\rho)}\circ\dag)\otimes\id_{\mB}\big)\mathscr{D}[\mathcal{E}^{*}_{\Theta,\rho}] \nonumber \\
&=\big(\widetilde{\mathcal{E}^{*}_{\Theta,\rho}}\star\mathcal{E}(\rho)\big)^{\dag}.
\end{align}
\endgroup
The second equality follows from the property of the swap map $\gamma$.
The third equality follows from Lemma~\ref{lem:gamDE}. 
The fourth equality is the definition of the channel state. 
The fifth equality follows from Equation~\eqref{eq:GCEd}. 
The sixth equality follows from the interchange law relating $\circ$ and $\otimes$. 
The seventh equality follows from Equation~\eqref{eq:EdtodEds} of Lemma~\ref{lem:EgamDE} and axiom~(T\ref{item:T4}).
The eighth equality follows from the interchange law and the definition of the channel state. 
The ninth equality follows from Lemma~\ref{lem:dualTsang}.
\end{proof}

\begin{proof}
[Proof of Proposition~\ref{prop:TsIn}]
By similar calculations to those we have seen before, 
\be
\tau\big(\widetilde{\mathcal{E}^{\star}_{\rho}}\star\mathcal{E}(\rho)\big)=\big((\mathcal{E}^{\star}_{\rho}\circ\Theta_{\mathcal{E}(\rho)})\otimes\id_{\mA}\big)\big(\mu_{\mA}^*(1_{\mA})\big).
\ee
Meanwhile, 
\begin{align}
\mathcal{E}\star\rho&=(\Theta_{\rho}\otimes\id_{\mA})\mathscr{D}[\mathcal{E}] \nonumber \\
&=\big((\Theta_{\rho}\otimes\id_{\mB})\circ(\mathcal{E}^*\otimes\id_{\mB})\big)\big(\mu_{\mB}^{*}(1_{\mB})\big) \nonumber \\
&=\big((\Theta_{\rho}\circ\mathcal{E}^{*})\otimes\id_{\mB}\big)\big(\mu_{\mB}^{*}(1_{\mB})\big) \nonumber \\
&=\big((\mathcal{E}^{*}\circ\Theta_{\mathcal{E}(\rho)})\otimes\id_{\mB}\big)\big(\mu_{\mB}^{*}(1_{\mB})\big),
\end{align}
where we used $\mathcal{E}^*=\mathcal{E}^{-1}$ and axiom~(T\ref{item:T2}) in the fourth equality. 
Comparing the two expressions shows that $\mathcal{E}^{-1}$ is a valid Bayesian inverse.
\end{proof}

\bibliographystyle{apsrev4-1} 
\bibliography{references} 

\end{document}